\newlist{myitemize}{itemize}{1}
\setlist[myitemize,1]{label=\textbullet,leftmargin=10pt,itemsep=0pt,parsep=2pt}
\newcommand{\eqdef}{\stackrel{\text{def}}{=}}
\newcommand{\cut}[1]{}
\newcommand{\eat}[1]{}
\newcommand{\commentresolved}[1]{}
\newcommand{\set}[1]{\{#1\}}                    % Set (as in \set{1,2,3}).
\newcommand{\setof}[2]{\{{#1}:{#2}\}}
\newtheorem{theorem}{Theorem}[section]          	% Theorem environment.
\newaliascnt{lemma}{theorem}				% 1 alias counter
\newtheorem{lemma}[lemma]{Lemma}              	% Lemma environment.
\newaliascnt{conjecture}{theorem}			% 1 alias counter
\newaliascnt{remark}{theorem}				% 1 alias counter
\newaliascnt{corollary}{theorem}			% 1 alias counter
\newtheorem{corollary}[corollary]{Corollary}      % Corollary environment.
\newaliascnt{definition}{theorem}			% 1 alias counter
\newaliascnt{proposition}{theorem}			% 1 alias counter
\newaliascnt{example}{theorem}			% 1 alias counter
\newtheorem{example}[example]{Example}  	% 2 environment.
\newcommand{\R}{\mathbb{R}}
\newcommand\ignore[1]{\relax}
\newcommand{\name}{EntropyDB\xspace}
\def\BState{\State\hskip-\ALG@thistlm}
\newcommand{\vast}{\bBigg@{4}}
\newcommand{\Vast}{\bBigg@{5}}
\newcolumntype{M}[1]{>{\centering\arraybackslash}m{#1}}
\tikzstyle{mybox} = [draw=black, thick, rectangle]
\newcommand*{\drawLine}[3][0pt]{%
    \node[draw=none, fit={(#2)}, inner sep=0pt] (rectg) {};%
    \draw [decorate,thick,red]%
      ([xshift=#1]rectg.east) --%
      coordinate[right=1em, midway] (#2)
      ([xshift=(#1)++(1.em)]rectg.east);%
    \node[right=-.4em of #2,red] (#2) {#3};
    % \draw (#2#3-comment.west) edge (#2#3);
}%
\newcommand*{\drawLineWeird}[4][0pt]{%
    \node[draw=none, fit={(#2)}, inner sep=0pt] (rectg1) {};%
    \node[draw=none, fit={(#3)}, inner sep=0pt] (rectg2) {};%
    \draw [decorate,thick,red]%
      ([xshift=#1]rectg1.center) --%
      ([xshift=(#1)++7em,yshift=0.3em]rectg2.north);%
    \node[right=-.5em of rectg2.east,red] (#3) {#4};
    % \draw (#2#3-comment.west) edge (#2#3);
}%
\newcommand*{\drawBrace}[4][0pt]{%
    \node[draw=none, fit={(#2) (#3)}, inner sep=-1pt] (rectg) {};%
    \draw [decoration={brace,amplitude=0.3em},decorate,thick,red]%
      ([xshift=#1]rectg.north east) --%
      coordinate[right=1em, midway] (#2#3)
      ([xshift=#1]rectg.south east);%
    \node[right=-0.6em of #2#3,red] (#2#3) {#4};
    % \draw (#2#3-comment.west) edge (#2#3);
}%
\newcommand*{\drawBraceRotate}[4][0pt]{%
    \node[draw=none, fit={(#2) (#3)}, inner sep=-1pt] (rectg) {};%
    \draw [decoration={brace,amplitude=0.3em},decorate,thick,red]%
      ([xshift=#1]rectg.north east) --%
      coordinate[right=3.3em, midway] (#2#3)
      ([xshift=#1]rectg.south east);%
    \node[above=0em of #2#3,red,rotate=90] (#2#3) {#4};
    % \draw (#2#3-comment.west) edge (#2#3);
}%
\newcommand*{\mathcolor}{}
\def\mathcolor#1#{\mathcoloraux{#1}}
\newcommand*{\mathcoloraux}[3]{%
  \protect\leavevmode
  \begingroup
    \color#1{#2}#3%
  \endgroup
}
\chardef\_=`_
\newcommand{\ie}{\textrm{i.e.}\xspace}
\newcommand{\eg}{\textrm{e.g.}\xspace}
\newcommand{\E}{\mathbb{E}}
\newcommand\inner[2]{\langle #1, #2 \rangle}
\DeclareMathOperator*{\argmin}{arg\,min}
\definecolor{mygreen}{rgb}{0,0.6,0}
\definecolor{mygray}{rgb}{0.5,0.5,0.5}
\definecolor{mymauve}{rgb}{0.58,0,0.82}
\lstdefinestyle{myJava}{ %
  language=java,
  mathescape=true,
  backgroundcolor=\color{white},   % choose the background color
  basicstyle=\scriptsize,        % size of fonts used for the code
  breaklines=true,                 % automatic line breaking only at whitespace
  captionpos=b,                    % sets the caption-position to bottom
  commentstyle=\color{Gray},    % comment style
  escapeinside={*@}{@*},         % if you want to add LaTeX within your code
  keywordstyle=\color{blue},
  otherkeywords={not},      % keyword style
  stringstyle=\color{mymauve},     % string literal style
}
\lstdefinestyle{mySQL}{ %
    mathescape=true,
    language=SQL,
    basicstyle=\footnotesize\ttfamily,
    deletekeywords={MIN},
    otherkeywords={LIMIT, GROUP, BY, ORDER, DESC},
    showstringspaces=false
}
\begin{document}

\title{EntropyDB: A Probabilistic Approach to Approximate Query Processing}
\numberofauthors{1}
\author{
\alignauthor
Laurel Orr, Magdalena Balazinska, and Dan Suciu \\
      \affaddr{University of Washington}\\
      \affaddr{Seattle, Washington, USA}\\
      \email{\{ljorr1, magda, suciu\}@cs.washington.edu}
}

\maketitle

\begin{sloppypar}
\begin{abstract}
We present \name\footnote{This is an extended version of the VLDB 2017 paper ``Probabilistic Database Summarization for Interactive Data Exploration"~\cite{orr2017probabilistic}.}, an interactive data exploration system that uses a probabilistic approach to generate a small, query-able summary of a dataset. Departing from traditional summarization techniques, we use the Principle of Maximum Entropy to generate a probabilistic representation of the data that can be used to give approximate query answers.  We develop the theoretical framework and formulation of our probabilistic representation and show how to use it to answer queries. We then present solving techniques, give two critical optimizations to improve preprocessing time and query execution time, and explore methods to reduce query error. Lastly, we experimentally evaluate our work using a 5 GB dataset of flights within the United States and a 210 GB dataset from an astronomy particle simulation. While our current work only supports linear queries, we show that our technique can successfully answer queries faster than sampling while introducing, on average, no more error than sampling and can better distinguish between rare and nonexistent values. We also discuss extensions that can allow for data updates and linear queries over joins.
\end{abstract}

\section{Introduction}
\label{sec:introduction}
%%!TEX root = maxent_summarization_vldb2019.tex
{\em Interactive data exploration} allows a data analyst to browse, query, transform, and visualize data at ``human speed''~\cite{crotty2016case}. It has been long recognized that general-purpose DBMSs are ill suited for interactive exploration~\cite{mozafari2015handbook}. While users require interactive responses, they do not necessarily require precise responses because either the response is used in some visualization, which has limited resolution, or an approximate result is sufficient and can be followed up with a more accurate, costly query if needed. {\em Approximate Query Processing} (AQP) refers to a set of techniques designed to allow fast but approximate answers to queries. All successful AQP systems to date rely on sampling or a combination of sampling and indices. The sample can either be computed on-the-fly, \eg, in the highly influential work on {\em online aggregation}~\cite{hellerstein1997online} or systems like DBO~\cite{jermaine2008scalable} and Quickr~\cite{kandula2016quickr}, or precomputed offline, like in BlinkDB~\cite{agarwal2013blinkdb} or Sample$+$Seek~\cite{ding2016samplelus}. Samples have the advantage that they are easy to compute, can accurately estimate aggregate values, and are good at detecting heavy hitters. However, sampling may fail to return estimates for small populations; targeted stratified samples can alleviate this shortcoming, but stratified samples need to be precomputed to target a specific query, defeating the original purpose of AQP.

In this paper, we propose an alternative approach to interactive data exploration based on the Maximum Entropy principle (MaxEnt). The MaxEnt model has been applied in many settings beyond data exploration; \eg, the {\em multiplicative weights} mechanism~\cite{hardt2010multiplicative} is a MaxEnt model for both differentially private and, by~\cite{dwork2015generalization}, statistically valid answers to queries, and it has been shown to be theoretically optimal. In our setting of the MaxEnt model, the data is preprocessed to compute a probabilistic model. Then, queries are answered by doing probabilistic inference on this model. The model is defined as the probabilistic space that obeys some observed statistics on the data and makes no other assumptions (Occam's principle). The choice of statistics boils down to a precision/memory tradeoff: the more statistics one includes, the more precise the model and the more space required. Once computed, the MaxEnt model defines a probability distribution on possible worlds, and users can interact with this model to obtain approximate query results. Unlike a sample, which may miss rare items, the MaxEnt model can infer something about every query.

Despite its theoretical appeal, the computational challenges associated with the MaxEnt model make it difficult to use in practice. In this paper, we develop the first scalable techniques to compute and use the MaxEnt model. As an application, we illustrate it with interactive data exploration. Our first contribution is to simplify the standard MaxEnt model to a form that is appropriate for data summarization (\autoref{sec:probabilistic_approach}). We show how to simplify the MaxEnt model to be a multi-linear polynomial that has one monomial for each possible tuple (\autoref{sec:probabilistic_approach},~\autoref{eq:p}) rather than its na\"{i}ve form that has one monomial for each possible world (\autoref{sec:background},~\autoref{eq:pr:i}). Even with this simplification, the MaxEnt model starts by being larger than the data. For example, our smaller experimental dataset (introduced in \autoref{sec:evaluation}) is 5 GB, but the number of possible tuples is approximately $10^{10}$, which is 74 GB if each tuple is 8 bytes. Our {\em first optimization} consists of a compression technique for the polynomial of the MaxEnt model (\autoref{subsec:log_opt:compress}); for example, for our smaller experimental dataset, the summary is below 200MB, while for our larger dataset of 210GB, it is less than 1GB. Our {\em second optimization} consists of a new technique for query evaluation on the MaxEnt model (\autoref{subsec:log_opt:aqp}) that only requires setting some variables to 0; this reduces the runtime to be on average below 500ms and always below 1 second.

As mentioned above, there is a precision/memory tradeoff when choosing which statistics to use to define the model. To alleviate this problem, our {\em third optimization} develops a statistic selection technique based on K-D trees that groups together individual statistics of similar value and uses the single group statistic in the model rather than the individual ones. We also explore optimal sorting techniques to encourage similar values to be clustered together before building our K-D trees.

We find that the main bottleneck in using the MaxEnt model is computing the model itself; in other words, computing the values of the variables of the polynomial such that it matches the existing statistics over the data. Solving the MaxEnt model is difficult; prior work for multi-dimensional histograms~\cite{markl2005consistently} uses an iterative scaling algorithm for this purpose. To date, it is well understood that the MaxEnt model can be solved by reducing it to a convex optimization problem~\cite{wainwright2008GME} of a {\em dual} function (\autoref{sec:background}), which can be solved using Gradient Descent. However, even this is difficult given the size of our model. We managed to adapt a variant of Stochastic Gradient Descent called Mirror Descent~\cite{convex-optimization-algorithms-complexity}, and our optimized query evaluation technique can compute the MaxEnt model for large datasets in under a day.

Lastly, to expand on how the MaxEnt model can be used in a full-fledged database system, we discuss handing data updates and answering join queries using the MaxEnt model. We also elaborate on the connection between the MaxEnt model and graphical models.

In summary, in this paper, we develop the following new techniques (the asterisked items are new developments from~\cite{orr2017probabilistic}):
\begin{myitemize}
\item A closed-form representation of the probability space of possible worlds using the Principle of Maximum Entropy, and a method to use the representation to answer queries in expectation (\autoref{sec:probabilistic_approach}).
\item A compression technique and optimized implementation$^*$ of the compression for the MaxEnt summary (\autoref{subsec:log_opt:compress}, \autoref{subsec:sys_opt:build_poly}).
\item Optimized query processing techniques, including implementation details$^*$ (\autoref{subsec:log_opt:aqp}, \autoref{subsec:sys_opt:poly_eval}).
\item A new method for selecting 2-dimensional statistics based on optimal matrix reorderings$^*$ and a modified K-D tree (\autoref{sec:stat_selection})
\item Detailed experiments comparing the accuracy of the MaxEnt summary versus various sampling techniques (\autoref{subsec:results:accuracy}).
\item Solving time$^*$ and query runtime evaluations showing our interactive query speeds (\autoref{subsec:results:scalability}, \autoref{subsec:results:solving_time}).
\item A discussion on how the MaxEnt summary relates to probabilistic databases and graphical models$^*$ (\autoref{sec:discussion}).
\item A description on how the MaxEnt summary can be extended to handle data updates and joins$^*$ (\autoref{subsec:discussion:joinsandupdates}).
\end{myitemize}

We implement the above techniques in a prototype system that we call \name and evaluate it on the flights and astronomy datasets. We find that \name can answer queries faster than sampling while introducing no more error, on average, and does better at identifying small populations.

\section{Background}
\label{sec:background}
%%!TEX root = maxent_summarization_vldb2019.tex
We summarize data by fitting a probability distribution over the active domain. The distribution assumes that the domain values are distributed in a way that preserves given statistics over the data but are otherwise uniform.

For example, consider a data scientist who analyzes a dataset of flights in the United States for the month of December 2013. All she knows is that the dataset includes all flights within the 50 possible states and that there are 500,000 flights in total. She wants to know how many of those flights are from CA to NY. Without any extra information, our approach would assume all flights are equally likely and estimate that there are $500,000/50^2 = 200$ flights.

Now suppose the data scientist finds out that flights leaving CA only go to NY, FL, or WA. This changes the estimate because instead of there being $500,000/50 = 10,000$ flights leaving CA and uniformly going to all 50 states, those flights are only going to 3 states. Therefore, the estimate becomes $10,000/3 = 3,333$ flights.

This example demonstrates how our summarization technique takes into account these existing statistics over flights going to and from specific states to answer queries, and the rest of this section covers its theoretical foundation.

\subsection{Possible World Semantics}
To model a probabilistic database, we use a slotted possible world semantics where rows have an inherent unique identifier, meaning the order of the tuples matters. Our set of possible worlds is generated from the active domain and size of each relation. Each database instance is one possible world with an associated probability such that the probabilities of all possible worlds sum to one.

In contrast to typical probabilistic databases where a relation is tuple-independent and the probability of a relation is calculated from the product of the probability of each tuple, we calculate a relation's probability from a formula derived from the MaxEnt principle and a set of constraints on the overall distribution\footnote{Using the MaxEnt principle will generate a probability distribution that is different from the tuple-independent distribution because the MaxEnt principle does not guarantee tuple independence.}. This approach captures the idea that the distribution should be uniform except where otherwise specified by the given constraints.

\subsection{The Principle of Maximum Entropy}
The Principle of Maximum Entropy (MaxEnt) states that subject to prior data, the probability distribution which best represents the state of knowledge is the one that has the largest entropy. This means given our set of possible worlds, $PWD$, the probability distribution $\Pr(I)$ is one that agrees with the prior information on the data and maximizes
\begin{equation*}
-\sum_{I \in PWD}\Pr(I)\log(\Pr(I))
\end{equation*}
where $I$ is a database instance, also called possible world. The above probability must be normalized, $\sum_I \Pr(I)=1$, and must satisfy the prior information represented by a set of $k$ expected value constraints:
\begin{equation}
\label{eq:e:s}
s_{j} = \E[\phi_{j}(I)], \ \ j=1,k 
\end{equation}
where $s_{j}$ is a known value and $\phi_{j}$ is a function on $I$ that returns a numerical value in $\mathbb{R}$.
One example constraint is that the number of flights from CA to WI is 0.

Following prior work on the MaxEnt principle and solving constrained optimization problems~\cite{berger1996nlpapproach,wainwright2008GME,re2012understanding}, the MaxEnt probability distribution takes the form
\begin{equation}
\label{eq:pr:i}
\Pr(I) = \frac{1}{Z}\exp\left(\sum_{j = 1}^{k}\theta_{j}\phi_{j}(I)\right) 
\end{equation}
where $\theta_{j}$ is a parameter and $Z$ is the following normalization constant:
\begin{equation*}
Z \eqdef \sum_{I \in PWD}\left(\exp\left(\sum_{j = 1}^{k}\theta_{j}\phi_{j}(I)\right)\right).
\end{equation*}

To compute the $k$ parameters $\theta_j$, we must solve the non-linear system of $k$ equations,~\autoref{eq:e:s}, which is computationally difficult.  However, it turns out~\cite{wainwright2008GME} that~\autoref{eq:e:s} is equivalent to $\partial \Psi / \partial \theta_j = 0$ where the {\em dual} $\Psi$ is defined as:
\begin{equation*}
\Psi \eqdef \sum_{j = 1}^{k} s_{j}\theta_{j} - \ln\left(Z\right).
\end{equation*}
Furthermore, $\Psi$ is concave, which means solving for the $k$ parameters can be achieved by maximizing $\Psi$. We note that $Z$ is called the {\em partition function}, and its log, $\ln(Z)$, is called the {\em cumulant}. 

Lastly, we adopt a slightly different notation where instead of $e^{\theta}$ we use $\alpha$. \autoref{eq:pr:i} now becomes
\begin{equation}
\label{eq:pr:i:new}
\Pr(I) = \frac{1}{Z}\prod_{j = 1}^k \alpha_{j}^{\phi_{j}(I)}. 
\end{equation}

\section{\name Approach}
\label{sec:probabilistic_approach}
%%!TEX root = maxent_summarization_vldb2019.tex
This section explains how we use the MaxEnt model for approximate query answering. We first show how we use the MaxEnt framework to transform a single relation $R$ into a probability distribution represented by $P$. We then explain how we use $P$ to answer queries over $R$. For reference,~\autoref{fig:notation_tab} lists common symbols and their definitions, and~\autoref{fig:model_assumptions} lists various assumptions we incrementally make on our model and the section they are introduced.
\begin{table}
    \begin{small}
    \centering
    \begin{tabular}{|c|c|}
    \hline
    $m$ & \# attrs \\ \hline
    $k$ & \# statistics\\ \hline
    $D_i$ & domain of $A_i$ \\ \hline
    $N_i$ & $|D_i|$ \\ \hline
    $\mathbf{q}$ & linear query \\ \hline
    $\mathbf{c}_j$ & $j$th statistic query \\ \hline
    $\pi_j$ & $j$th statistic predicate \\ \hline
    $\theta_j$ & $(\mathbf{c}_j, s_j)$ \\ \hline
    $s_j$ & $j$th statistic constraint \\ \hline
    $\rho_{ij}$ & projection $\pi_j$ onto $A_i$ \\ \hline
    $J_i \subseteq [k]$ & 1-dim statistic indices \\ \hline
    $\mathcal{I} \subseteq [m]$ & attr indices \\ \hline
    $J_\mathcal{I} \subseteq [k]$ & multi-dimensional statistic indices\\ \hline
    $B_a$ & \# multi-dimensional attr sets \\ \hline
    $B_s$ & \# stats per multi-dimensional attr set \\ \hline
    \end{tabular}
    \caption{Common notation.}
    \label{fig:notation_tab}
    \end{small}
\end{table}
\subsection{Maximum Entropy Model of Data}
\label{sec:theoretical_setup}

\begin{table}
    \begin{small}
    \centering
    \begin{tabular}{|p{6cm}|p{1cm}|}
    \hline
    Queries are limited to linear queries. & \autoref{sec:theoretical_setup} \\ \hline
    Continuous attributes are discretized. & \autoref{sec:theoretical_setup} \\ \hline
    Summary includes all 1-dimensional statistics. & \autoref{sec:theoretical_setup} \\ \hline
    Statistics are collections of range predicates. & \autoref{subsec:log_opt:compress} \\ \hline
    Each set of 2D statistics is disjoint. & \autoref{subsec:log_opt:compress} \\ \hline
    Summary adds only 2D high order statistics. & \autoref{subsec:stat_select:optimal_ranges} \\ \hline
    \end{tabular}
    \caption{Model assumptions, and the section they are introduced.}
    \label{fig:model_assumptions}
    \end{small}
\end{table}

We consider a single relation with $m$ attributes and schema $R(A_1,\ldots, A_m)$ where each attribute, $A_i$, has an active domain $D_i$, assumed to be discrete and ordered.\footnote{We support continuous data types by bucketizing their active domains.}  Let $Tup = D_1 \times D_2 \times \dots \times D_m = \{t_1, \ldots, t_d\}$ be the set of all possible tuples.  Denoting  $N_i = |D_i|$, we have $d = |Tup| = \prod_{i=1}^m N_i$.

An {\em instance} for $R$ is an ordered bag of $n$ tuples, denoted $I$. For each $I$, we form a frequency vector which is a $d$-dimensional vector\footnote{This is a standard data model in several applications, such as differential privacy~\cite{li2010optimizing}.} $\mathbf{n}^I = [n^I_1,\ldots, n^I_d] \in \R^d$, where each number $n^I_i$ represents the count of the tuple $t_i \in Tup$ in $I$ (\autoref{fig:model}).  The mapping from $I$ to $\mathbf{n}^I$ is not one-to-one because the instance $I$ is ordered, and two distinct instances may have the same counts. Further, for any instance $I$ of cardinality $n$, $||\mathbf{n}^I||_1 = \sum_i n^I_i = n$. The frequency vector of an instance consisting of a single tuple $\set{t_i}$ is denoted $\mathbf{n}^{t_i} = [0,\ldots,0,1,0,\ldots,0]$ with a single value $1$ in the $i$th position; \ie, $\setof{\mathbf{n}^{t_i}}{i = 1,d}$ forms a basis for $\R^d$.

While the MaxEnt principle allows us, theoretically, to answer any query probabilistically by averaging the query over all possible instances; in this paper, we limit our main analysis to linear queries but do discuss how to handle joins in~\autoref{subsec:discussion:joinsandupdates}. A {\em linear query} is a $d$-dimensional vector $\mathbf{q} = [q_1, \ldots, q_d]$ in $\R^d$.  The answer to $\mathbf{q}$ on instance $I$ is the dot product $\inner{\mathbf{q}}{\mathbf{n}^I} = \sum_{i=1}^d q_i n^I_i$. With some abuse of notation, we will write $\mathbf{I}$ when referring to $\mathbf{n}^I$ and $\mathbf{t}_i$ when referring to $\mathbf{n}^{t_i}$.  Notice that $\inner{\mathbf{q}}{\mathbf{t}_i} = q_i$, and, for any instance $I$, $\inner{\mathbf{q}}{\mathbf{I}} = \sum_i n_i^I \inner{\mathbf{q}}{\mathbf{t}_i}$.

\autoref{fig:model} illustrates the data and query model. Any counting query is a vector $\mathbf{q}$ where all coordinates are 0 or 1 and can be equivalently defined by a predicate $\pi$ such that $\inner{\mathbf{q}}{\mathbf{I}} = |\sigma_\pi(I)|$; with more abuse, we will use $\pi$ instead of $\mathbf{q}$ when referring to a counting query.  Other SQL queries can be modeled using linear queries, too. For example,
\begin{small}
\begin{lstlisting}[style=mySQL]
SELECT A, COUNT(*)
FROM R GROUP BY A
ORDER BY COUNT(*) DESC LIMIT 10
\end{lstlisting}
\end{small}
corresponds to several linear queries, one for each group, where the outputs are sorted and the top 10 returned.

\begin{figure}
\scriptsize
\begin{tikzpicture}
\node [mybox,minimum width=8.35cm] (box1) at (0, 0){
\begin{minipage}{0.45\textwidth}
Domains:
\vspace{-4pt}
\begin{align*}
  &D_1 = \set{a_1,a_2}  &N_1 = 2\\
  &D_2 = \set{b_1,b_2}  &N_2 = 2\\
  &\mathit{Tup} = \set{(a_1,b_1),(a_1,b_2),(a_2,b_1),(a_2,b_2)}  &d = 4
\end{align*}
\end{minipage}
};
\node [mybox,minimum width=8.35cm] (box2) at (0, -1.85) {
\begin{minipage}[t]{0.22\textwidth}
Database Instance:
\vspace{2pt}
\newline
\begin{tabular}{l|l|l|l} \cline{2-3}
$I$:  & $A$ & $B$ \\ \cline{2-3}
1 & $a_1$ & $b_1$ & $\mathit{Tup}_1$ \\
2 & $a_1$ & $b_2$ & $\mathit{Tup}_2$\\
3 & $a_2$ & $b_2$ & $\mathit{Tup}_4$\\
4 & $a_1$ & $b_1$ & $\mathit{Tup}_1$\\
5 & $a_2$ & $b_2$ & $\mathit{Tup}_4$\\ \cline{2-3}
\end{tabular}
\end{minipage}
\begin{minipage}[t]{0.22\textwidth}
Query: 
\vspace{6pt}
\newline
\begin{tabular}{rl}
\texttt{q:} & \texttt{SELECT COUNT(*)} \\
            & \texttt{FROM R} \\
            & \texttt{WHERE A = a1}  
\end{tabular}
\end{minipage}
};
\node [mybox,minimum width=8.35cm] (box3) at (0, -3.4){
\begin{minipage}{0.45\textwidth}
Modeling Data and Query: n = 5,\ m = 2
\begin{align*}
& \mathbf{n}^{\mathbf{I}} = (2,1,0,2) ~~ \mathbf{q} = (1,1,0,0)  ~~
  \inner{\mathbf{q}}{\mathbf{n}^{\mathbf{I}}} = 3
  \mbox{\hspace{6pt}also denoted } \inner{\mathbf{q}}{\mathbf{I}} \mbox{\hspace{0.3cm}}
\end{align*}
\end{minipage}
};
\end{tikzpicture}
\caption{Illustration of the data and query model.}
\label{fig:model}
\end{figure}

Our goal is to compute a summary of the data that is small yet allows us to approximatively compute the answer to any linear query.  We assume that the cardinality $n$ of $R$ is fixed and known.  In addition, we know $k$ statistics, $\Phi = \setof{(\mathbf{c}_j,s_j)}{j=1,k}$, where $\mathbf{c}_j$ is a linear query and $s_j \geq 0$ is a number.  Intuitively, the statistic $(\mathbf{c}_j,s_j)$ asserts that $\inner{\mathbf{c}_j}{I} = s_j$. For example, we can write 1-dimensional and 2-dimensional (2D) statistics like $|\sigma_{A_1 = 63}(I)| = 20$ and $|\sigma_{A_1 \in [50,99]\wedge A_2 \in [1,9]}(I)| = 300$.

Next, we derive the MaxEnt distribution for the possible instances $I$ of a fixed size $n$. We replace the exponential parameters $\theta_j$ with $\ln(\alpha_j)$ so that ~\autoref{eq:pr:i:new} becomes

\begin{equation}
\label{eq:pr:n}
\Pr(I) = \frac{1}{Z}\prod_{j=1,k} \alpha_j^{\inner{\mathbf{c}_j}{\mathbf{I}}}.
\end{equation}

We prove the following about the structure of the partition function $Z$:
\begin{lemma}
The partition function is given by
\begin{equation}
\label{eq:z:p}
Z = P^n
\end{equation}
where $P$ is the multi-linear polynomial
\begin{equation}
\label{eq:p}
P(\alpha_1, \ldots, \alpha_k) \eqdef \sum_{i=1,d} \prod_{j=1,k} \alpha_j^{\inner{\mathbf{c}_j}{\mathbf{t}_i}}.
\end{equation}
\end{lemma}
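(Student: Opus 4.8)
The plan is to exploit the slotted (ordered) possible-world semantics, under which an instance $I$ of fixed cardinality $n$ is nothing more than an assignment of a tuple to each of the $n$ slots. Concretely, I would identify each such instance with a function $\sigma : \{1,\ldots,n\} \to \{1,\ldots,d\}$, where slot $\ell$ is occupied by the tuple $t_{\sigma(\ell)}$, so that the worlds of cardinality $n$ are in bijection with the $d^n$ such functions. The one fact I would lean on is the linearity already recorded in the text: since $\mathbf{I} = \mathbf{n}^I = \sum_{\ell=1}^n \mathbf{t}_{\sigma(\ell)}$, each exponent in \autoref{eq:pr:n} splits as $\inner{\mathbf{c}_j}{\mathbf{I}} = \sum_{\ell=1}^n \inner{\mathbf{c}_j}{\mathbf{t}_{\sigma(\ell)}}$.

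First I would rewrite a single summand. Using the exponent splitting above together with $\alpha^{a+b}=\alpha^a\alpha^b$,
\[
\prod_{j=1}^k \alpha_j^{\inner{\mathbf{c}_j}{\mathbf{I}}}
= \prod_{j=1}^k \prod_{\ell=1}^n \alpha_j^{\inner{\mathbf{c}_j}{\mathbf{t}_{\sigma(\ell)}}}
= \prod_{\ell=1}^n \left( \prod_{j=1}^k \alpha_j^{\inner{\mathbf{c}_j}{\mathbf{t}_{\sigma(\ell)}}} \right),
\]
so the weight of an instance factors over its $n$ slots, with the $\ell$-th factor depending only on the tuple occupying slot $\ell$. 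Next I would substitute this into $Z = \sum_{\sigma} \prod_{j} \alpha_j^{\inner{\mathbf{c}_j}{\mathbf{I}}}$ and apply the generalized distributive law: a sum, over a Cartesian product of index sets, of a product that separates across those indices equals the product of the per-coordinate sums. Because the $\ell$-th factor depends only on $\sigma(\ell)$, this yields
\[
Z = \prod_{\ell=1}^n \left( \sum_{i=1}^d \prod_{j=1}^k \alpha_j^{\inner{\mathbf{c}_j}{\mathbf{t}_i}} \right)
= \left( \sum_{i=1}^d \prod_{j=1}^k \alpha_j^{\inner{\mathbf{c}_j}{\mathbf{t}_i}} \right)^{\! n} = P^n,
\]
where all $n$ factors are identical copies of $P$ from \autoref{eq:p}. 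I would close by noting that $P$ is multi-linear: each statistic $\mathbf{c}_j$ is a counting query, so every exponent $\inner{\mathbf{c}_j}{\mathbf{t}_i} = (\mathbf{c}_j)_i \in \{0,1\}$ and no $\alpha_j$ occurs to a power above one.

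The step I expect to require the most care is the interchange of sum and product, since that is exactly where the ordered semantics is essential. It is the slotted model that makes the worlds of size $n$ range over the full product $\{1,\ldots,d\}^{n}$ with no multinomial multiplicities, so that $Z$ factors cleanly into $n$ independent identical copies of $P$. Under an unordered bag semantics the same summands would instead carry multinomial coefficients, and the clean factorization into $P^n$ would break, so I would take care to invoke the slotted semantics explicitly at precisely this point.
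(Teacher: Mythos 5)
Your proof is correct, but it takes a different route from the paper's. The paper groups the ordered instances of cardinality $n$ by their frequency vector $\mathbf{n} = [n_1,\ldots,n_d]$, observes that exactly $n!/\prod_i n_i!$ slotted instances share a given frequency vector, and then recognizes the resulting sum $\sum_{\mathbf{n}:||\mathbf{n}||_1 = n} \frac{n!}{\prod_i n_i!}\prod_j \alpha_j^{\sum_i n_i \inner{\mathbf{c}_j}{\mathbf{t}_i}}$ as the multinomial expansion of $P^n$. You instead sum directly over slot assignments $\sigma : \{1,\ldots,n\} \to \{1,\ldots,d\}$, factor each instance's weight across its $n$ slots, and apply the generalized distributive law to pull the sum inside the product, obtaining $P^n$ with no multiplicity counting at all. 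The two arguments are of course cousins --- the multinomial theorem is the distributive law plus the counting you avoid --- but yours is the more elementary presentation and makes the role of the slotted semantics maximally transparent: the worlds of size $n$ are literally $Tup^n$, so $Z$ factors into $n$ independent copies of $P$. The paper's version has a compensating virtue: by exhibiting the coefficient $n!/\prod_i n_i!$ explicitly, it quantifies exactly what would go wrong under unordered bag semantics (each summand would appear once rather than with its multinomial multiplicity), which is the point the paper emphasizes in the remark immediately following the proof; your closing paragraph makes the same observation qualitatively. Your added note that $P$ is multi-linear because each $\inner{\mathbf{c}_j}{\mathbf{t}_i} \in \{0,1\}$ is also correct, though strictly it relies on the assumption, stated later in the paper, that every statistic is a counting query. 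One last nit on the source you were matching: the paper's display opens with $Z = \sum_I \Pr(I)$, which is a typo for the sum of unnormalized weights $\sum_I \prod_j \alpha_j^{\inner{\mathbf{c}_j}{\mathbf{I}}}$ (the probabilities sum to $1$, not $Z$); your formulation of $Z$ is the intended one.
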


\begin{proof}
Fix any $\mathbf{n} = [n_1,\ldots, n_d]$ such that $||\mathbf{n}||_1 = \sum_{i=1}^d n_i = n$.  The number of instances $I$ of cardinality $n$ with $\mathbf{I} = \mathbf{n}$ is $n!/\prod_i n_i!$. Furthermore, for each such instance, $\inner{\mathbf{c}_j}{\mathbf{I}} = \inner{\mathbf{c}_j}{\mathbf{n}} = \sum_i n_i\inner{\mathbf{c}_j}{\mathbf{t}_i}$.  Therefore,

\begin{align*}
Z = & \sum_I \Pr(I) = \sum_{\mathbf{n}: ||\mathbf{n}||_1 = n} \frac{n!}{\prod_{i} n_i!} \prod_{j=1,k} \alpha_j^{\sum_{i}n_i \inner{\mathbf{c}_j}{\mathbf{t}_i}} \\
= & \left(\sum_{i=1,d} \prod_{j=1,k} \alpha_j^{\inner{\mathbf{c}_j}{\mathbf{t}_i}}\right)^n = P^n.
\end{align*}
\end{proof}

This restructuring of the partition function is valid because we represent an instance as an ordered bag rather than an unordered one.

The {\em data summary}, denoted $(P, \{\alpha_j\}, \Phi)$, consists of the polynomial $P$ (\autoref{eq:p}), the values of its parameters $\alpha_j$, and the statistics $\Phi$. The statistics must be included as the polynomial parameters are defined by the linear queries $\mathbf{c}_j$ in the statistics $\Phi$.

\begin{example}
\label{ex:simple}
Consider a relation with three attributes $R(A, B, C)$, and assume that the domain of each attribute has 2 distinct elements. Assume $n = 10$ and the only statistics in $\Phi$ are the following 1-dimensional statistics:
\begin{equation*}
\begin{array}{lll}
(A = a_1,\ 3) & (B = b_1,\ 8) & (C = c_1,\ 6) \\
(A = a_2,\ 7) & (B = b_2,\ 2) & (C = c_2,\ 4). \\
\end{array}
\end{equation*}
The first statistic asserts that $|\sigma_{A=a_1}(I)|=3$, etc. The polynomial $P$ is
\begin{align*}
P = 
&\alpha_{1}\beta_{1}\gamma_{1} + \alpha_{1}\beta_{1}\gamma_{2} + 
\alpha_{1}\beta_{2}\gamma_{1} + \alpha_{1}\beta_{2}\gamma_{2} + \\
&\alpha_{2}\beta_{1}\gamma_{1} + \alpha_{2}\beta_{1}\gamma_{2} + 
\alpha_{2}\beta_{2}\gamma_{1} + \alpha_{2}\beta_{2}\gamma_{2}
\end{align*}
where $\alpha_1,\alpha_2$ are variables associated with the statistics on $A$, $\beta_1,\beta_2$ are for $B$\footnote{We abuse notation here for readability. Technically, $\alpha_{i} = \alpha_{a_i}$, $\beta_{i} = \alpha_{b_i}$, and $\gamma_{i} = \alpha_{c_i}$.}, and $\gamma_1,\gamma_2$ are for $C$.

Consider the concrete instance that satisfies the above statistics
\begin{align*}
& I = \{(a_1,b_2,c_2), (a_1,b_1,c_2), (a_1,b_1,c_2), (a_2,b_2,c_1) \\
& (a_2,b_1,c_1), (a_2,b_1,c_1), (a_2,b_1,c_1), (a_2,b_1,c_1), (a_2,b_1,c_1)\}.
\end{align*}
Then, $\Pr(I) = \alpha_1^3\alpha_2^7\beta_1^8\beta_2^2\gamma_1^6\gamma_2^4/P^{10}$ where $\alpha_1^3$ represents $\alpha_1$ raised to the third power, $\alpha_2^7$ represents $\alpha_2$ to the seventh power, and so on.
\end{example}

\begin{example}
\label{ex:complex1}
Continuing the previous example, we add the following multi-dimensional statistics to $\Phi$:
\begin{equation*}
\begin{array}{ll}
(A = a_1 \land B = b_1,\ 2) & (B = b_1 \land C = c_1,\ 5) \\
(A = a_2 \land B = b_2,\ 1) & (B = b_2 \land C = c_1,\ 1). \\
\end{array}
\end{equation*}
$P$ is now
\begin{align}
\label{eq:pnew}
P = 
&\alpha_{1}\beta_{1}\gamma_{1}\mathcolor{red}{[\alpha\beta]_{1,1}}\mathcolor{red}{[\beta\gamma]_{1,1}} + \alpha_{1}\beta_{1}\gamma_{2}\mathcolor{red}{[\alpha\beta]_{1,1}} + \nonumber \\
&\alpha_{1}\beta_{2}\gamma_{1}\mathcolor{red}{[\beta\gamma]_{2,1}} + \alpha_{1}\beta_{2}\gamma_{2} + \nonumber \\
&\alpha_{2}\beta_{1}\gamma_{1}\mathcolor{red}{[\beta\gamma]_{1,1}} + \alpha_{2}\beta_{1}\gamma_{2} + \nonumber \\
&\alpha_{2}\beta_{2}\gamma_{1}\mathcolor{red}{[\alpha\beta]_{2,2}}\mathcolor{red}{[\beta\gamma]_{2,1}} + \alpha_{2}\beta_{2}\gamma_{2}\mathcolor{red}{[\alpha\beta]_{2,2}}.
\end{align}
The red variables are the added 2-dimensional statistic variables; we use $[\alpha\beta]_{1,1}$ to denote a {\em single} variable corresponding to a 2-dimensional statistics on the attributes $AB$. Notice that each  red variable only occurs with its related 1-dimensional variables. $[\alpha\beta]_{1,1}$, for example, is only in the same term as $\alpha_{1}$ and $\beta_{1}$.  

Now consider the earlier instance $I$. Its probability becomes $\Pr(I) = \alpha_1^3\alpha_2^7\beta_1^8\beta_2^2\gamma_1^6\gamma_2^4[\alpha\beta]_{1,1}^2[\alpha\beta]_{2,2}^1[\beta\gamma]_{1,1}^5[\beta\gamma]_{2,1}^1/P^{10}$.
\end{example}

To facilitate analytical queries, we choose the set of statistics $\Phi$ as follows:
\begin{myitemize}
\item Each statistic $\phi_j=(\mathbf{c}_j,s_j)$ is associated with some predicate $\pi_j$ such that $\inner{\mathbf{c}_j}{\mathbf{I}} = |\sigma_{\pi_j}(I)|$. It follows that for every tuple $t_i$, $\inner{\mathbf{c}_j}{\mathbf{t}_i}$ is either 0 or 1; therefore, each variable $\alpha_j$ has degree 1 in the polynomial $P$ in~\autoref{eq:p}. 

\item For each domain $D_i$, we include a complete set of 1-dimensional statistics in our summary. In other words, for each $v \in D_i$, $\Phi$ contains one statistic with predicate $A_i = v$.  We denote $J_i \subseteq [k]$ the set of indices of the 1-dimensional statistics associated with $D_i$; therefore, $|J_i| = |D_i| = N_i$.

\item We allow multi-dimensional statistics to be given by arbitrary predicates. They may be overlapping and/or incomplete; \eg, one statistic may count the tuples satisfying $A_1  \in [10,30] \wedge A_2 = 5$ and another count the tuples satisfying $A_2 \in [20, 40] \wedge A_4=20$. %The only requirement is that statistics over the same set of attributes are disjoint (Sec.~\ref{subsec:log_opt:compress}).

\item We assume the number of 1-dimensional statistics dominates the number of attribute combinations; \ie, $\sum_{i=1}^m N_i \gg 2^m$.

% \item We assume the number of 1-dimensional statistics dominates the total number of statistics; \ie, since $P$ has $k$ variables ($k =$ number of statistics) and $|Tup| = \prod_i N_i$ monomials, $k = O(\sum_i N_i)$.

\item If some domain $D_i$ is large, it is beneficial to reduce the
  size of the domain using equi-width buckets. In that case, we assume the elements of $D_i$ represent buckets, and $N_i$ is the number of buckets.

\item We enforce our MaxEnt distribution to be {\em overcomplete}~\cite[pp.40]{wainwright2008GME} (as opposed to {\em minimal}). More precisely, for any attribute $A_i$ and any instance $I$, we have $\sum_{j \in J_i} \inner{\mathbf{c}_j}{\mathbf{I}} = n$, which means that some statistics are redundant since they can be computed from the others and from the size of the instance $n$.
\end{myitemize}

Note that as a consequence of overcompleteness, for any attribute $A_i$, one can write $P$ as a linear expression

\begin{equation}
\label{eq:p1i}
P = \sum_{j \in J_i} \alpha_j P_j
\end{equation}
where each $P_j$, $j \in J_i$ is a polynomial that does not contain the variables $(\alpha_j)_{j \in J_i}$. In Example~\ref{ex:complex1}, the 1-dimensional variables for $A$ are $\alpha_1$, $\alpha_2$, and indeed, each monomial in~\autoref{eq:pnew} contains exactly one of these variables. One can write $P$ as $P = \alpha_1 P_1 + \alpha_2 P_2$ where $\alpha_1 P_1$ represents the first two lines and $\alpha_2 P_2$ represents the last two lines in~\autoref{eq:pnew}. $P$ is also linear in $\beta_1$, $\beta_2$ and in $\gamma_1$, $\gamma_2$.

\subsection{Query Answering}
\label{subsec:query:answer}
In this section, we show how to use the data summary to approximately answer a linear query $q$ by returning its expected value $\E[\inner{\mathbf{q}}{\mathbf{I}}]$. The summary $(P, \{\alpha_j\}, \Phi)$ defines a probability space on the possible worlds as it parameterizes $\Pr(I)$ (\autoref{eq:pr:n} and \ref{eq:p}). We start with a well known result in the MaxEnt model. If $\mathbf{c}_\ell$ is the linear query associated with the variable $\alpha_\ell$, then

\begin{equation}
\label{eq:ec}
  \E[\inner{\mathbf{c}_\ell}{\mathbf{I}}] = \frac{n \alpha_\ell}{P}\frac{\partial P}{\partial \alpha_\ell}.
\end{equation}
We review the proof here. The expected value of $\inner{\mathbf{c}_\ell}{\mathbf{I}}$ over the probability space (\autoref{eq:pr:n}) is

\begin{align*}
  \E[\inner{\mathbf{c}_\ell}{\mathbf{I}}] = &\frac{1}{P^n} \sum_{\mathbf{I}} \inner{\mathbf{c}_\ell}{\mathbf{I}}\prod_j \alpha_j^{\inner{\mathbf{c}_j}{\mathbf{I}}}
 =  \frac{1}{P^n} \sum_{\mathbf{I}} \frac{\alpha_\ell \partial}{\partial \alpha_\ell} \prod_j \alpha_j^{\inner{\mathbf{c}_j}{\mathbf{I}}} \\
 = & \frac{1}{P^n} \frac{\alpha_\ell \partial}{\partial \alpha_\ell} \sum_{\mathbf{I}} \prod_j \alpha_j^{\inner{\mathbf{c}_j}{\mathbf{I}}}
 =  \frac{1}{P^n} \frac{\alpha_\ell \partial P^n}{\partial \alpha_\ell} = \frac{n}{P}\frac{\alpha_\ell\partial P}{\partial \alpha_\ell}.
\end{align*}

To compute a new linear query $\mathbf{q}$, we add it to the statistical queries $c_j$, associate it with a fresh variable $\beta$,
and denote $P_{\mathbf{q}}$ the extended polynomial:

\begin{align}
\label{eq:pq}
  P_{\mathbf{q}}(\alpha_1, \ldots, \alpha_k,\beta) \eqdef \sum_{i=1,d} \prod_{j=1,k} \alpha_j^{\inner{\mathbf{c}_j}{\mathbf{t}_i}}\beta^{\inner{\mathbf{q}}{\mathbf{t}_i}}
\end{align}

Notice that $P_{\mathbf{q}}[\beta=1] \equiv P$; therefore, the extended data summary defines the same probability space as $P$. With $\beta = 1$, we can apply ~\autoref{eq:ec} to the query $\mathbf{q}$ to derive:

\begin{equation}
\label{eq:eq}
  \E[\inner{\mathbf{q}}{\mathbf{I}}] = \frac{n}{P}\frac{\partial P_{\mathbf{q}}}{\partial \beta}.
\end{equation}

This leads to the following na\"{i}ve strategy for computing the expected value of $\mathbf{q}$: extend $P$ to obtain $P_{\mathbf{q}}$ and apply formula~\autoref{eq:eq}. One way to obtain $P_{\mathbf{q}}$ is to iterate over all monomials in $P$ and add $\beta$ to the monomials corresponding to tuples counted by $\mathbf{q}$. As this iteration is inefficient,~\autoref{subsec:log_opt:aqp} describes how to avoid modifying the polynomial altogether.

\subsection{Probabilistic Model Computation}
\label{subsec:solving}
We now describe how to compute the parameters of the summary. Given the statistics $\Phi = \setof{(\mathbf{c}_j,s_j)}{j=1,k}$, we need to find values of the variables $\setof{\alpha_j}{j=1,k}$ such that $\E[\inner{\mathbf{c}_j}{\mathbf{I}}] = s_j$ for all $j=1,k$. As explained in~\autoref{sec:background}, this is equivalent to maximizing the dual function $\Psi$:

\begin{equation}
\label{eq:dual}
\Psi \eqdef \sum_{j=1}^k s_j \ln(\alpha_j) - n \ln P.
\end{equation}

Indeed, maximizing $P$ reduces to solving the equations $\partial \Psi/\partial \alpha_j = 0$ for all $j$. Direct calculation gives us $\partial \Psi/\partial \alpha_j = \frac{s_j}{\alpha_j} - \frac{n}{P}\frac{\partial P}{\partial \alpha_j} = 0$, which is equivalent to $s_j - \E[\inner{\mathbf{c}_j}{\mathbf{I}}]$ by ~\autoref{eq:ec}. The dual function $\Psi$ is concave, and hence it has a single maximum value that can be obtained using convex optimization techniques such as Gradient Descent.

In particular, we achieve fastest convergence rates using a variant of Stochastic Gradient Descent (SGD) called Mirror Descent~\cite{convex-optimization-algorithms-complexity},  where each iteration chooses some $j=1,k$ and updates $\alpha_j$ by solving $\frac{n \alpha_j}{P}\frac{\partial P}{\partial \alpha_j} = s_j$ while keeping all other parameters fixed. In other words, the step of SGD is chosen to solve $\partial \Psi/\partial \alpha_j = 0$. Denoting $P_{\alpha_{j}} \eqdef \frac{\partial P}{\partial \alpha_j}$ and solving, we obtain:
\begin{equation}
\label{eq:update_step} %(hist*(P-a_k*P_{a_k}))/(P_{a_k}-hist*P_{a_k})
\alpha_{j} = \frac{s_j(P - {\alpha_{j}}P_{\alpha_{j}})}{(n-s_j)P_{\alpha_{j}}}.
\end{equation}
Since $P$ is linear in each $\alpha$, neither $P - {\alpha_{j}}P_{\alpha_{j}}$ nor $P_{\alpha_{j}}$ contain any $\alpha_{j}$ variables.

We repeat this for all $j$, and continue this process until all
differences $|s_j - \frac{n\alpha_{j}P_{\alpha_{j}}}{P}|$, $j=1,k$,
are below some threshold. Alg.~\ref{alg:solve} shows pseudocode for the solving process.
\begin{small}
\begin{algorithm}[t]
\caption{Solving for the $\alpha$s}
\label{alg:solve}
\begin{lstlisting}[style=myJava] 
maxError = infinity
while maxError >= threshold do
  maxError = -1
  for each *@$\alpha_j$@* do
      value = *@$\frac{s_j(P - {\alpha_{j}}P_{\alpha_{j}})}{(n-s_j)P_{\alpha_{j}}}$@*
      *@$\alpha_j$@* = value
      error = *@$value - \frac{n\alpha_{j}P_{\alpha_{j}}}{P}$@*
      maxError = max(error, maxError)
\end{lstlisting}
\end{algorithm}
\end{small}

\section{Logical Optimizations}
\label{sec:logical_optimizations}
%%!TEX root = maxent_summarization_vldb2019.tex
We now discuss two logical optimizations: (1) summary compression in Sec.~\ref{subsec:log_opt:compress} and (2) optimized query processing in Sec.~\ref{subsec:log_opt:aqp}. In~\autoref{sec:system_optimizations}, we discuss the implementation of these optimizations.

\subsection{Compression of the Data Summary}
\label{subsec:log_opt:compress}

The summary consists of the polynomial $P$ that, by definition, has $|Tup|$ monomials where $|Tup| = \prod_{i=1}^m N_i$. We describe a technique that compresses the summary by factorizing the polynomial to a size closer to $O(\sum_i N_i)$ than $O(\prod_i N_i)$.

Before walking through a more complex example describing the factorization process, we show the factorized version of the polynomial from~\autoref{ex:complex1}.
\begin{example}
\label{ex:complex1_fctored}
Recall that our relation has three attributes $A$, $B$, and $C$ with domain size of 2, and our summary has four multidimensional statistics. The factorization of P is
\begin{align}
\label{eq:pnew_factored}
P = &(\alpha_{1} + \alpha_{2})(\beta_{1} + \beta_{2})(\gamma_{1} + \gamma_{2}) + \nonumber \\
&(\gamma_{1} + \gamma_{2})(\alpha_{1}\beta_{1}(\mathcolor{red}{[\alpha\beta]_{1,1}} - 1) + \alpha_{2}\beta_{2}(\mathcolor{red}{[\alpha\beta]_{2,2}} - 1)) + \nonumber \\ 
&(\alpha_{1} + \alpha_{2})(\beta_{1}\gamma_{1}(\mathcolor{red}{[\beta\gamma]_{1,1}} - 1) + \beta_{2}\gamma_{1}(\mathcolor{red}{[\beta\gamma]_{2,1}} - 1)) + \nonumber \\
&\alpha_{1}\beta_{1}\gamma_{1}(\mathcolor{red}{[\alpha\beta]_{1,1}} - 1)(\mathcolor{red}{[\beta\gamma]_{1,1}} - 1) + \nonumber \\
&\alpha_{2}\beta_{2}\gamma_{1}(\mathcolor{red}{[\alpha\beta]_{2,2}} - 1)(\mathcolor{red}{[\beta\gamma]_{2,1}} - 1).
\end{align}
\end{example}
As we will see, the factorization starts with a product of 1-dimensional statistics and uses the inclusion/exclusion principle to include the multi-dimensional statistics. Note that for this particular example, because the active domain is so small (eight possible tuples), the factorized polynomial is not smaller than the expanded one. We explain the polynomial size in~\autoref{th:compress_size}.

We now walk through a more complex example with three attributes, $A$, $B$, and $C$, each with an active domain of size $N_1=N_2=N_3=1000$. Suppose first that we have only 1D statistics.  Then, instead of representing $P$ as a sum of $1000^3$ monomials, \ie $P = \sum_{i,j,k \in [1000]} \alpha_i\beta_j \gamma_k$, we factorize it to $P = (\sum \alpha_i)(\sum \beta_j)(\sum \gamma_k)$; the new representation has size $3 \cdot 1000$.

Now, suppose we add a single 3D statistic on $ABC$: $A = 3 \wedge B = 4 \wedge C=5$.  The new variable, call it $\delta$, occurs in a single monomial of $P$, namely $\alpha_3\beta_4\gamma_5\delta$.  Thus, we can compress $P$ to $(\sum \alpha_i)(\sum \beta_j)(\sum \gamma_k) + \alpha_3\beta_4\gamma_5(\delta-1)$.

Instead, suppose we add a single 2D range statistic  on $AB$, say $A \in [101, 200] \wedge B \in [501, 600]$ and call its associated variable $\delta_1$.  This will affect $100 \cdot 100 \cdot 1000$ monomials.  We can avoid enumerating them by noting that they, too, factorize.  The polynomial compresses to $(\sum \alpha_i)(\sum \beta_j)(\sum \gamma_k) + (\sum_{i=101}^{200}\alpha_i)(\sum_{j=501}^{600} \beta_j)(\sum \gamma_k)(\delta_1-1)$.

Finally, suppose we have three 2D statistics and one 3D statistic: the previous one on $AB$ plus the statistics $B \in [551, 650] \wedge C \in [801, 900]$ and $B \in [651, 700] \wedge C \in [701, 800]$ on $BC$ and $A \in [101, 150] \wedge B \in [551, 600] \wedge C \in [801, 850]$ on $ABC$.  Their associated variables are $\delta_1$, $\delta_2$, $\delta_3$, and $\delta_4$ (\autoref{fig:polynomial_join} shows a table of the statistics).  Now we need to account for the fact that $100 \cdot 50 \cdot 100$ monomials contain both $\delta_1$ and $\delta_2$ and that $50 \cdot 50 \cdot 50$ monomials contain $\delta_1$, $\delta_2$, and $\delta_4$.  Applying the inclusion/exclusion principle, $P$ compresses to the equation shown in~\autoref{fig:eq:ex:p} (the \textbf{i}, \textbf{ii}, and \textbf{iii} labels are referenced later). The size, counting only the $\alpha$s, $\beta$s, and $\gamma$s for simplicity, is $3000 + 1200 + 1350 + 150 + 250 + 150 + 150 + 150 = 6400 \ll 1000^3$.
\begin{figure}[t]
\begin{small}
  \begin{align*}
   P= & \overbrace{(\sum \alpha_i)(\sum \beta_j)(\sum \gamma_k)}^{(\textbf{i})}
    + \overbrace{(\sum \gamma_k)}^{(\textbf{ii})}\overbrace{(\sum_{101}^{200}\alpha_i)(\sum_{501}^{600} \beta_j)(\delta_1-1)}^{(\textbf{iii})} \nonumber \\
    + &\overbrace{(\sum \alpha_i)}^{(\textbf{ii})} \overbrace{\left[(\sum_{551}^{650} \beta_j)(\sum_{801}^{900} \gamma_k)(\delta_2-1)+(\sum_{651}^{700} \beta_j)(\sum_{701}^{800} \gamma_k)(\delta_3-1)\right]}^{(\textbf{iii})} \nonumber \\
    + & \overbrace{(\sum_{101}^{150}\alpha_i)(\sum_{551}^{600} \beta_j)(\sum_{801}^{850} \gamma_k)(\delta_4-1)}^{(\textbf{iii})} \nonumber \\
    + &\overbrace{(\sum_{101}^{200} \alpha_i)(\sum_{551}^{600} \beta_j)(\sum_{801}^{900} \gamma_k)(\delta_1-1)(\delta_2-1)}^{(\textbf{iii})} \nonumber \\
    + &\overbrace{(\sum_{101}^{150} \alpha_i)(\sum_{551}^{600} \beta_j)(\sum_{801}^{850} \gamma_k)(\delta_1-1)(\delta_4-1)}^{(\textbf{iii})} \nonumber \\
    + &\overbrace{(\sum_{101}^{150} \alpha_i)(\sum_{551}^{600} \beta_j)(\sum_{801}^{850} \gamma_k)(\delta_2-1)(\delta_4-1)}^{(\textbf{iii})} \nonumber \\
    + &\overbrace{(\sum_{101}^{150} \alpha_i)(\sum_{551}^{600} \beta_j)(\sum_{801}^{850} \gamma_k)(\delta_1-1)(\delta_2-1)(\delta_4-1)}^{(\textbf{iii})}.
  \end{align*}
\end{small}
\caption{Example of a compressed polynomial $P$ after applying the inclusion/exclusion principle.}
\label{fig:eq:ex:p}
\end{figure}

Before proving the general formula for $P$, note that this compression is related to standard algebraic factorization techniques involving kernel extraction and rectangle coverings \cite{hosangadi2004factoringae}; both techniques reduce the size of a polynomial by factoring out divisors. The standard techniques, however, are unsuitable for our use because they require enumeration of the product terms in the sum-of-product (SOP) polynomial to extract kernels and form cube matrices. Our polynomial in SOP form is too large to be materialized, making these techniques infeasible. We leave it as future work to investigate other factorization techniques geared towards massive polynomials.

We now make the following three assumptions for the rest of the paper.
\begin{myitemize}
  \item Each predicate has the form $\pi_j = \bigwedge_{i=1}^m \rho_{ij}$ where $m$ is the number of attributes, and $\rho_{ij}$ is the projection of $\pi_j$ onto $A_i$. If $j \in J_i$ ($J_i$ is the set of indices of the 1-dimensional statistics), then $\pi_j \equiv \rho_{ij}$. For any set of indices of multi-dimensional statistics $S \subset [k]$, we denote $\rho_{iS} \eqdef \bigwedge_{j \in S} \rho_{ij}$, and $\pi_S \eqdef \bigwedge_i \rho_{iS}$; as usual, when $S =\emptyset$, then $\rho_{i\emptyset} \equiv \texttt{true}$.

  \item Each $\rho_{ij}$ is a range predicate $A_i \in [u,v]$.

  \item For each $\mathcal{I} \subseteq [m]$, the multi-dimensional statistics whose attributes are exactly those in $\mathcal{I}$ are disjoint; \ie, for $j_1$, $j_2$ whose attributes are $\mathcal{I}$, $\rho_{i{j_1}}, \rho_{i{j_2}} \not\equiv \texttt{true}$ for $i \in \mathcal{I}$ (\ie there is a predicate on $A_i$), $\rho_{i{j_1}}, \rho_{i{j_2}} \equiv \texttt{true}$ for $i \not\in \mathcal{I}$, and $\pi_{j_1} \wedge \pi_{j_2} \equiv \texttt{false}$. Attributes for different $\mathcal{I}$ may overlap, but for a particular $\mathcal{I}$, there is no overlap.
\end{myitemize}

Using this, define $J_{\mathcal{I}} \subseteq [k]$ for $\mathcal{I} \subseteq [m]$ to be the set of indices of multi-dimensional statistics whose attributes are $\setof{A_i}{i \in \mathcal{I}}$. This means for $\mathcal{I}$ such that $|\mathcal{I}| = 1$, $J_{\mathcal{I}} = \emptyset$ because 1-dimensional statistics are not considered multi-dimensional statistics. Further, let $B_a = |\setof{\mathcal{I}}{J_{\mathcal{I}} \neq \emptyset}|$ be the number of unique multi-dimensional attributes we have statistics on and $B_s^{\mathcal{I}} = |J_{\mathcal{I}}|$ be the number of multi-dimensional statistics for the attribute set defined by $\mathcal{I}$. (These parameters are discussed further in ~\autoref{sec:stat_selection}).

Finally, define $J_{\mathcal{I}^+} \subseteq \mathcal{P}([k])$\footnote{$\mathcal{P}([k])$ is the power set of $\set{1, 2, \ldots, k}$} for $\mathcal{I}^+ \subseteq \mathcal{P}([m])$ to be the set of sets of the maximal number of multi-dimensional statistic indices from $\bigcup_{\mathcal{I} \in \mathcal{I}^+} J_{\mathcal{I}}$ such that each set's {\em combined} attributes are $\setof{A_i}{i \in \bigcup \mathcal{I}^+}$ and each set's intersection does not conflict (\ie, not \texttt{false}). In other words, for each $S \in J_{\mathcal{I}^+}$, $\rho_{iS} \not\in \{\texttt{true}, \texttt{false}\}$ for $i \in \bigcup \mathcal{I}^+$ and $\rho_{iS} \equiv \texttt{true}$ for $i \notin \bigcup \mathcal{I}^+$.  

For example, suppose we have the three 2D statistics and one 3D statistic from before: $\pi_{j_1} = A \in [101, 200] \wedge B \in [501, 600]$, $\pi_{j_2} = B \in [551, 650] \wedge C \in [801, 900]$, $\pi_{j_3} = B \in [651, 700] \wedge C \in [701, 800]$, and $\pi_{j_4} = A \in [101, 150] \wedge B \in [551, 600] \wedge C \in [801, 850]$. Then, some example $J_{\mathcal{I}^+}$ are: $J_{\{\{1, 2\}\}} = \{\{j_1\}\}$, $J_{\{\{2, 3\}\}} = \{\{j_2\}, \{j_3\}\}$, and $J_{\{\{1, 2, 3\}\}} = \{\{j_4\}\}$. $\{\{j_2, j_3\}\} \notin J_{\{\{2, 3\}\}}$ because $\rho_{2j_2} \wedge \rho_{2j_3} \equiv \texttt{false}$. Further, $J_{\{\{1, 2\}, \{2, 3\}\}} = \{\{j_1, j_2\}\}$ because $\rho_{2j_1} \wedge \rho_{2j_2} \not\equiv \texttt{false}$. However, $\{j_1, j_3\} \notin J_{\{\{1, 2\}, \{2, 3\}\}}$ because $\rho_{2j_1} \wedge \rho_{2j_3} \equiv \texttt{false}$, and $\{j_4\} \notin J_{\{\{1, 2\}, \{2, 3\}\}}$ because $j_4 \notin J_{\{1, 2\}}$ and $j_4 \notin J_{\{2, 3\}}$. Lastly, $J_{\{\{1, 2\}, \{2, 3\}, \{1, 2, 3\}\}} = \{\{j_1, j_2, j_4\}\}$

Using these definitions, we now get the compression shown in~\autoref{th:compress}.
% \begin{figure}[t]
\begin{theorem} \label{th:compress} The polynomial $P$ is equivalent to:
\begin{small}
  \begin{align*}
    P = &\overbrace{\left(\prod_{i \in [m]} \sum_{j \in J_i} \alpha_j \right)}^{(\textbf{i})} +\\
    & \left[\sum_{\substack{\mathcal{I} \subseteq [m] }}\overbrace{\left(\prod_{i \notin \mathcal{I}} \sum_{j \in J_i} \alpha_j \right)}^{(\textbf{ii})} \overbrace{\sum_{\ell = 1}^{B_a} \sum_{\substack{\mathcal{I}^+ \subseteq \mathcal{P}([m]), \\ |\mathcal{I}^+| = \ell, \\ \bigcup\mathcal{I}^+ = \mathcal{I}}}\sum_{S \in J_{\mathcal{I}^+}}}^{(\textbf{iii})}\right. \\
    &\left. \underbrace{\left(\prod_{i \in \bigcup \mathcal{I}^+} \sum_{\substack{j \in J_i, \\ \pi_j \land \rho_{iS} \not\equiv \texttt{false}}} \alpha_j \right)\left(\prod_{j \in S}(\alpha_j - 1) \right)}_{(\textbf{iii})} \right]
  \end{align*}
\end{small}
\end{theorem}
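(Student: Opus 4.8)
The plan is to expand the definition of $P$ in~\autoref{eq:p} tuple by tuple and then regroup. For every tuple $\mathbf t = (v_1,\dots,v_m)$, overcompleteness guarantees that exactly one $1$-dimensional variable fires per attribute, so its monomial factors as $\big(\prod_{i=1}^m \alpha_{c(i,v_i)}\big)\prod_{j\in M:\,\mathbf t\models\pi_j}\alpha_j$, where $M$ is the set of all multi-dimensional statistic indices and $c(i,v_i)\in J_i$ is the index of the $1$-dimensional statistic $A_i=v_i$. First I would write each multi-dimensional factor as $\alpha_j=1+(\alpha_j-1)$ and expand the product, giving $\prod_{j:\,\mathbf t\models\pi_j}\alpha_j=\sum_{S\subseteq\{j\in M:\,\mathbf t\models\pi_j\}}\prod_{j\in S}(\alpha_j-1)$. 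Substituting this into $P$ and interchanging the sum over tuples with the sum over $S$ yields $P=\sum_{S\subseteq M}\big(\sum_{\mathbf t\models\pi_S}\prod_{i}\alpha_{c(i,v_i)}\big)\prod_{j\in S}(\alpha_j-1)$, where $\pi_S=\bigwedge_{j\in S}\pi_j$. This is the ``inclusion/exclusion'' skeleton; note it carries the $(\alpha_j-1)$ factors seen in~\autoref{fig:eq:ex:p}.

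The second step is to factor the inner coefficient over attributes. Since each predicate factors as a conjunction of per-attribute range predicates $\rho_{ij}$ and $Tup$ is a product domain, the constraint $\mathbf t\models\pi_S$ decouples into $v_i\in\rho_{iS}$ for each $i$, so $\sum_{\mathbf t\models\pi_S}\prod_i\alpha_{c(i,v_i)}=\prod_{i\in[m]}\sum_{j\in J_i:\,\pi_j\wedge\rho_{iS}\not\equiv\texttt{false}}\alpha_j$. For an attribute $i$ untouched by $S$ we have $\rho_{iS}\equiv\texttt{true}$ and the factor collapses to $\sum_{j\in J_i}\alpha_j$ (the $(\textbf{ii})$-type factor); for a touched attribute we obtain exactly the $(\textbf{iii})$-type factor; and if some $\rho_{iS}\equiv\texttt{false}$ the coefficient is an empty sum and the whole $S$-term vanishes. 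The term $S=\emptyset$ reproduces $(\textbf{i})=\prod_{i\in[m]}\sum_{j\in J_i}\alpha_j$ verbatim.

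The remaining and most delicate step is to reindex the nonempty, satisfiable $S$ into the nested sums $\sum_{\mathcal I\subseteq[m]}\sum_{\ell}\sum_{\mathcal I^+}\sum_{S\in J_{\mathcal I^+}}$ of the statement. Here I would invoke the disjointness assumption: two multi-dimensional statistics on the same attribute set $\mathcal I'$ satisfy $\pi_{j_1}\wedge\pi_{j_2}\equiv\texttt{false}$, so any $S$ containing two members of a single $J_{\mathcal I'}$ is unsatisfiable and was already dropped in the previous step. Hence every surviving $S$ selects at most one index from each attribute group, so it is determined by the set $\mathcal I^+$ of attribute groups it meets together with one admissible choice per group — which is precisely membership $S\in J_{\mathcal I^+}$ with $\ell=|\mathcal I^+|=|S|$, $\bigcup\mathcal I^+=\mathcal I$ equal to the touched attributes, and the non-conflict clause in the definition of $J_{\mathcal I^+}$ matching satisfiability. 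Since the touched-attribute complement depends only on $\mathcal I$, the $(\textbf{ii})$ factor $\prod_{i\notin\mathcal I}\sum_{j\in J_i}\alpha_j$ is common to all such $S$ and pulls outside the inner sums, and $\ell$ ranges only up to $B_a$ because there are that many distinct groups. Conversely every pair $(\mathcal I^+,S)$ arises from a unique satisfiable $S$, so the correspondence is a bijection and each summand matches the per-$S$ contribution computed above. I expect this combinatorial bookkeeping — establishing that $\bigcup_{\mathcal I^+}J_{\mathcal I^+}$ enumerates exactly the satisfiable $S$ with the right multiplicities — to be the main obstacle, whereas the algebraic expansion and the attribute-wise factorization are routine.
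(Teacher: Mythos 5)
Your proof is correct, but it follows a genuinely different route from the paper's. The paper disposes of \autoref{th:compress} with a one-line sketch --- ``induction on the size of $\mathcal{I}$'' --- whereas you give a direct, global derivation: expand \autoref{eq:p} tuple by tuple, substitute $\alpha_j = 1+(\alpha_j-1)$ for every multi-dimensional variable, interchange the sum over tuples with the sum over subsets $S$ of multi-dimensional indices, factor each coefficient attribute-by-attribute using the product structure of $Tup$ and the assumption $\pi_S \equiv \bigwedge_i \rho_{iS}$, and finally reindex the surviving $S$ by the pairs $(\mathcal{I}^+, S \in J_{\mathcal{I}^+})$. Both arguments are powered by the same inclusion/exclusion idea (and your per-$S$ coefficients reproduce \autoref{fig:eq:ex:p} exactly), but your version buys a complete, self-contained proof that pinpoints where each modeling assumption enters: completeness of the 1D statistics makes exactly one 1D variable per attribute fire in every monomial; the per-attribute form $\pi_j = \bigwedge_i \rho_{ij}$ gives the attribute-wise factoring of coefficients; and within-group disjointness annihilates every $S$ containing two statistics on the same attribute set, which is what makes your final reindexing well defined. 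The paper's induction leaves all of this bookkeeping implicit. One point you should state explicitly: your bijection requires reading the paper's ``maximal number'' clause in the definition of $J_{\mathcal{I}^+}$ as ``exactly one statistic from each group $\mathcal{I} \in \mathcal{I}^+$''; under a laxer reading (relevant when no conflict-free system of one-per-group representatives exists), a single $S$ could qualify for two different $\mathcal{I}^+$ with the same union, and the right-hand side would double count it. Since the stricter reading is the one under which the theorem itself is true, this is a caveat about the definition rather than a gap in your argument, but it deserves a sentence.
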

% \end{figure}
The proof uses induction on the size of $\mathcal{I}$.

To give intuition, the very first sum gives the sum over the 1D statistics, $(\textbf{i})$. The next sum handles the multi-dimensional statistics. When $\mathcal{I}$ is empty, $(\textbf{iii})$ will be zero. When there is no $\mathcal{I}^+$ matching the criteria or $J_{\mathcal{I}^+}$ is empty, that portion of the summation will be zero. When there exists some $S \in J_{\mathcal{I}^+}$, the summand sums up all 1-dimensional variables $\alpha_j$, $j\in J_i$ that are in the $i$th projection of the predicate $\pi_S$ (this is what the condition $(\pi_j \wedge \rho_{iS})\not\equiv\texttt{false}$ checks) and multiplies with terms $\alpha_j-1$ for $j \in S$.

Our algorithm to build the polynomial is non-trivial and is described in~\autoref{subsec:sys_opt:build_poly}. The algorithm can be used during query answering to compute the compressed representation of $P_{\mathbf{q}}$ from $P$ (Sec.~\ref{subsec:query:answer}) by rebuilding \textbf{iii} for the new $\mathbf{q}$. However, as this is inefficient and may increase the size of our polynomial, our system performs query answering differently, as explained in~\autoref{subsec:log_opt:aqp}.

We now analyze the size of the compressed polynomial $P$. Since $B_a < 2^m$ and $\sum_{i=1}^m N_i \gg 2^m$, $B_a$ is dominated by $\sum_{i=1}^m N_i$. For some $\mathcal{I}$, part $(\textbf{ii})$ of the compression is $O(\sum_{i=1}^m N_i)$. Part $(\textbf{iii})$ of the compression is more complex. For some $S \in J_{\mathcal{I}^+}$, the innermost summand is of size $O(\sum_{i=1}^m N_i + |S|)$. As $|S| \leq B_a \ll \sum_{i=1}^m N_i$, the summand is only $O(\sum_{i=1}^m N_i)$. This innermost summand only occurs when $J_{\mathcal{I}^+}$ is nonempty, which happens once for all possible combinations of the $B_a$ multi-dimensional attributes. Therefore, letting $R = \max_{\mathcal{I}^+}|J_{\mathcal{I}^+}|$ (we discuss this next), part $(\textbf{iii})$ is of size $O(2^{B_a} R \sum_{i=1}^m N_i)$. Putting it together, since we only are concerned with $\mathcal{I}$ such that $\bigcup \mathcal{I}^+ = \mathcal{I}$ for some $\mathcal{I}^+$ and we have $2^{B_a}$ relevant $\mathcal{I}^+$, the polynomial is of size $O(\sum_{i=1}^m N_i + 2^{B_a} (\sum_{i=1}^m N_i + 2^{B_a} R \sum_{i=1}^m N_i)) = O(2^{B_a} R \sum_{i=1}^m N_i)$.

Lastly, to discuss $R$. For a particular $\mathcal{I}^+$, $|J_{\mathcal{I}^+}|$ is the number of sets of multi-dimensional statistics whose {\em combined} attributes are $\setof{A_i}{i \in \bigcup \mathcal{I}^+}$ and whose intersection does not conflict. In the worse case, there are no conflicts (\eg if $\bigcap \mathcal{I}^+ = \emptyset$). Then, there will be at most $\prod_{\mathcal{I} \in \mathcal{I}^+} B_s^{\mathcal{I}}$ statistics for a particular $\mathcal{I}^+$. Since the largest $\mathcal{I}^+$ has $B_a$ elements, an upper bound on $R$ is $\hat{B}_s^{B_a}$ where $\hat{B}_s = \max_{\mathcal{I}}B_s^{\mathcal{I}}$. We assume $\hat{B}_s \geq 2$, and therefore we get the following theorem.

\begin{theorem}
\label{th:compress_size}
The size of the polynomial is $O(\hat{B}_s^{B_a} \sum_{i=1}^m N_i)$ where $B_a$ is the number of unique multi-dimensional attribute sets and $\hat{B}_s$ is the largest number of statistics over some $\mathcal{I}$.
\end{theorem}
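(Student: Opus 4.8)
The plan is to derive the bound directly from the structural decomposition of $P$ already established in~\autoref{th:compress}, so the main work is collecting and simplifying the sizes of its three parts rather than re-deriving the polynomial. I would begin by fixing the cost model: ``size'' counts the total number of $\alpha$-, $\beta$-, $\gamma$-style variable occurrences (equivalently, the number of summands across all the factored inner sums), which is the natural measure for the representation and is the one the informal size discussion preceding the statement uses. I would also record the two governing assumptions, $B_a < 2^m$ and $\sum_{i=1}^m N_i \gg 2^m$, so that $2^m$ (and hence $B_a$) is dominated by $\sum_{i=1}^m N_i$ and can be absorbed into the leading factor at the end.

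Next I would bound each labeled part of the formula in~\autoref{th:compress} separately. Part $(\textbf{i})$, the product $\prod_{i \in [m]} \sum_{j \in J_i}\alpha_j$ of the 1-dimensional factors, contributes $\sum_{i=1}^m |J_i| = \sum_{i=1}^m N_i$ occurrences. For part $(\textbf{ii})$, the complementary product $\prod_{i \notin \mathcal{I}} \sum_{j \in J_i}\alpha_j$, the number of variable occurrences is $\sum_{i \notin \mathcal{I}} N_i \le \sum_{i=1}^m N_i$. For the innermost term in part $(\textbf{iii})$, the factor $\prod_{i \in \bigcup \mathcal{I}^+}\sum_{j}\alpha_j$ again contributes at most $\sum_{i=1}^m N_i$ variable occurrences, while the factor $\prod_{j \in S}(\alpha_j - 1)$ contributes $|S| \le B_a$ additional variables; since $B_a \ll \sum_{i=1}^m N_i$, the whole innermost summand is $O\!\left(\sum_{i=1}^m N_i\right)$.

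The crux is then counting how many such innermost summands appear, i.e.\ bounding the number of triples $(\mathcal{I}, \mathcal{I}^+, S)$ that contribute. I would argue this as follows. Each $\mathcal{I}$ with $J_{\mathcal{I}^+}$ nonempty arises as $\bigcup \mathcal{I}^+$ for some $\mathcal{I}^+ \subseteq \mathcal{P}([m])$, and the distinct attribute sets $\mathcal{I}$ carrying statistics number $B_a$; the collections $\mathcal{I}^+$ of these are subsets of a $B_a$-element family, so there are at most $2^{B_a}$ relevant choices of $\mathcal{I}^+$ (this is the step the preceding informal discussion labels ``$2^{B_a}$ relevant $\mathcal{I}^+$''). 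For each fixed $\mathcal{I}^+$ the number of non-conflicting index sets $S \in J_{\mathcal{I}^+}$ is at most $R := \max_{\mathcal{I}^+}|J_{\mathcal{I}^+}|$, and by the worst-case (no-conflict) analysis $R \le \prod_{\mathcal{I} \in \mathcal{I}^+} B_s^{\mathcal{I}} \le \hat{B}_s^{\,|\mathcal{I}^+|} \le \hat{B}_s^{\,B_a}$, using $\hat{B}_s \ge 2$ and $|\mathcal{I}^+| \le B_a$. Multiplying the count $O(2^{B_a} R)$ of summands by the per-summand cost $O\!\left(\sum_{i=1}^m N_i\right)$ and adding parts $(\textbf{i})$ and $(\textbf{ii})$ gives total size $O\!\left(2^{B_a}\hat{B}_s^{\,B_a}\sum_{i=1}^m N_i\right)$, and since $\hat{B}_s \ge 2$ we absorb the $2^{B_a}$ factor into $\hat{B}_s^{\,B_a}$ to obtain $O\!\left(\hat{B}_s^{\,B_a}\sum_{i=1}^m N_i\right)$.

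I expect the main obstacle to be the combinatorial counting in the last paragraph rather than the per-term size estimates, which are routine. Specifically, one must be careful that the enumeration over $(\mathcal{I}, \mathcal{I}^+, S)$ in~\autoref{th:compress} does not overcount: the outer sum over $\mathcal{I}$ is constrained to those equal to $\bigcup \mathcal{I}^+$, so each contributing $\mathcal{I}^+$ is paired with a single $\mathcal{I}$, and the $2^{B_a}$ bound on the $\mathcal{I}^+$'s must genuinely account for all nonempty $J_{\mathcal{I}^+}$. Getting the bound $R \le \hat{B}_s^{\,B_a}$ requires the third model assumption (disjointness of statistics sharing the same attribute set $\mathcal{I}$), which is what guarantees that the count of valid $S$ factorizes as a product of the per-$\mathcal{I}$ counts $B_s^{\mathcal{I}}$ in the no-conflict worst case; I would state this dependence explicitly. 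Once these counting bounds are in place, the final simplification using $\hat{B}_s \ge 2$ to collapse $2^{B_a}\hat{B}_s^{\,B_a}$ into $\hat{B}_s^{\,B_a}$ is immediate.
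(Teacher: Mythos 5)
Your proposal is correct and follows essentially the same route as the paper's own analysis: it bounds parts $(\textbf{i})$, $(\textbf{ii})$, and the innermost summands of $(\textbf{iii})$ by $O(\sum_{i=1}^m N_i)$ each, counts the $2^{B_a}$ relevant $\mathcal{I}^+$ and at most $R \le \prod_{\mathcal{I} \in \mathcal{I}^+} B_s^{\mathcal{I}} \le \hat{B}_s^{B_a}$ non-conflicting sets $S$ per $\mathcal{I}^+$ (using the disjointness assumption, exactly as the paper does), and then absorbs the $2^{B_a}$ factor using $\hat{B}_s \ge 2$. Your added care about the cost model and about not overcounting the triples $(\mathcal{I}, \mathcal{I}^+, S)$ only makes explicit what the paper leaves implicit.
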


In the worst case, if one gathers all possible multi-dimensional statistics, this compression will be worse than the uncompressed polynomial, which is of size $O(\prod_{i = 1}^m N_i)$, approximately equal to $O((\max_i N_i)^m)$. However, in practice, $B_a < m$ and $B_s \leq \max_i N_i$ which results in a significant reduction of polynomial size to one closer to $O(\sum_{i=1}^m N_i)$ than $O(\prod_{i = 1}^m N_i)$.

\subsection{Optimized Query Answering}
\label{subsec:log_opt:aqp}
In this section, we assume that the query $\mathbf{q}$ is a counting query defined by a conjunction of predicates, one over each attribute $A_i$; \ie, $\mathbf{q} = |\sigma_\pi(I)|$, where
\begin{equation}
\label{eq:piq}
\pi = \rho_1 \wedge \cdots \wedge \rho_m 
\end{equation}

and $\rho_i$ is a predicate over the attribute $A_i$. If $\mathbf{q}$ ignores $A_i$, then we simply set $\rho_i \equiv \texttt{true}$.  Our goal is to compute $\E[\inner{\mathbf{q}}{\mathbf{I}}]$. In Sec.~\ref{subsec:query:answer}, we described a direct approach that consists of constructing a new polynomial $P_{\mathbf{q}}$ and returning ~\autoref{eq:eq}. However, as described in Sec.~\ref{subsec:query:answer} and Sec.~\ref{subsec:log_opt:compress}, this may be expensive.

We describe here an optimized approach to compute $\E[\inner{\mathbf{q}}{\mathbf{I}}]$ directly from $P$. The advantage of this method is that it does not require any restructuring or rebuilding of the polynomial. Instead, it can use any optimized oracle for evaluating $P$ on given inputs. Our optimization has two parts: a new formula $\E[\inner{\mathbf{q}}{\mathbf{I}}]$ and a new formula for derivatives.

{\bf New formula for $\E[\inner{\mathbf{q}}{\mathbf{I}}]$:} Let $\pi_j$ be the predicate associate to the $j$th statistical query. In other words, $\inner{\mathbf{c}_j}{\mathbf{I}} = |\sigma_{\pi_j}(\mathbf{I})|$. The next lemma applies to any query $\mathbf{q}$ defined by some predicate $\pi$. Recall that $\beta$ is the new variable associated to $\mathbf{q}$ in $P_{\mathbf{q}}$ (Sec.~\ref{subsec:query:answer}).

\begin{lemma}\label{lemma:query_derivative} For any $\ell$ variables $\alpha_{j_1}, \ldots, \alpha_{j_\ell}$ of $P_{\mathbf{q}}$:

  (1) If the logical implication $\pi_{j_1} \wedge \cdots \wedge \pi_{j_\ell} \Rightarrow \pi$ holds, then
  \begin{align}
    \label{eq:aux}
    \frac{\alpha_{j_1}\cdots \alpha_{j_\ell}\partial^\ell P_{\mathbf{q}}}{\partial \alpha_{j_1} \cdots \partial \alpha_{j_\ell}} = &
    \frac{\alpha_{j_1}\cdots \alpha_{j_\ell} \beta \partial^{\ell+1} P_{\mathbf{q}}}{\partial \alpha_{j_1} \cdots \partial \alpha_{j_\ell}\partial \beta}
  \end{align}

  (2) If the logical equivalence $\pi_{j_1} \wedge \cdots \wedge \pi_{j_\ell} \Leftrightarrow \pi$ holds, then
  \begin{align}
    \frac{\alpha_{j_1}\cdots \alpha_{j_\ell}\partial^\ell P_{\mathbf{q}}}{\partial \alpha_{j_1} \cdots \partial \alpha_{j_\ell}} = \frac{\beta \partial P_{\mathbf{q}}}{\partial \beta} \label{eq:aux2}
  \end{align}
\end{lemma}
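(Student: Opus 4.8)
The plan is to exploit the fact that $P_{\mathbf{q}}$ is \emph{multilinear}. By the first bulleted assumption in \autoref{sec:theoretical_setup} each exponent $\inner{\mathbf{c}_j}{\mathbf{t}_i}$ is $0$ or $1$, and $\inner{\mathbf{q}}{\mathbf{t}_i}\in\{0,1\}$ as well because $\mathbf{q}$ is a counting query; hence every variable $\alpha_j$ and the variable $\beta$ occurs to degree at most one in $P_{\mathbf{q}}$. Reading off~\autoref{eq:pq}, the monomials of $P_{\mathbf{q}}$ are in bijection with the tuples $t_i\in Tup$: the monomial for $t_i$ carries the factor $\alpha_j$ exactly when $t_i\models\pi_j$ and carries the factor $\beta$ exactly when $t_i\models\pi$.

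The core step is a \emph{selector lemma} for the differential operator. For distinct variables $\alpha_{j_1},\ldots,\alpha_{j_\ell}$ and any multilinear polynomial, I would show that $\alpha_{j_1}\cdots\alpha_{j_\ell}\,\partial^\ell/(\partial\alpha_{j_1}\cdots\partial\alpha_{j_\ell})$ acts as the identity on every monomial containing all of $\alpha_{j_1},\ldots,\alpha_{j_\ell}$ and annihilates every other monomial. This is a one-line monomial computation: writing a monomial as $\alpha_{j_1}^{e_1}\cdots\alpha_{j_\ell}^{e_\ell}N$ with $N$ free of these variables and each $e_t\in\{0,1\}$, differentiation yields $N$ when all $e_t=1$ and $0$ otherwise, and re-multiplying by $\alpha_{j_1}\cdots\alpha_{j_\ell}$ restores the original monomial. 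Consequently the left-hand side of both (1) and (2) equals the sum, over exactly the tuples $t_i$ satisfying $\pi_{j_1}\wedge\cdots\wedge\pi_{j_\ell}$, of the corresponding monomials of $P_{\mathbf{q}}$, each retaining its $\beta$-factor.

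Applying the same reading to the right-hand sides, $\beta\,\partial/\partial\beta$ selects the monomials containing $\beta$, that is the tuples satisfying $\pi$; so the right side of (1) sums the monomials over tuples satisfying $\pi_{j_1}\wedge\cdots\wedge\pi_{j_\ell}\wedge\pi$, and the right side of (2) sums them over tuples satisfying $\pi$. The claims then reduce to set identities on index sets. For (2) the equivalence $\pi_{j_1}\wedge\cdots\wedge\pi_{j_\ell}\Leftrightarrow\pi$ makes the two tuple sets coincide, so the sums are literally identical. The main obstacle is (1), where the left side does not differentiate in $\beta$ yet must equal an expression that does; the resolution is precisely the implication $\pi_{j_1}\wedge\cdots\wedge\pi_{j_\ell}\Rightarrow\pi$, which guarantees that every monomial surviving the $\alpha$-selector already corresponds to a tuple satisfying $\pi$ and hence already carries a factor $\beta$. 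On such monomials $\beta\,\partial/\partial\beta$ is the identity, so the two sides agree despite the asymmetric appearance of $\beta$. I would state this $\beta$-tracking observation explicitly, as it is the substantive content of the lemma.
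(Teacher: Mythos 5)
Your proposal is correct, and for part (1) it coincides with the paper's argument: the paper likewise observes that every monomial of $P_{\mathbf{q}}$ containing all of $\alpha_{j_1},\ldots,\alpha_{j_\ell}$ must (by the implication) contain $\beta$, so the operator $\beta\,\partial/\partial\beta$ leaves the selected monomials unchanged. For part (2), however, you take a genuinely different route. The paper does not redo the monomial analysis; it bootstraps part (1) symmetrically: from the reverse implication $\pi \Rightarrow \pi_{j_1}$ it applies part (1) with $\beta$ and $\alpha_{j_1}$ exchanging roles to get $\frac{\beta\,\partial P_{\mathbf{q}}}{\partial\beta} = \frac{\beta\alpha_{j_1}\,\partial^2 P_{\mathbf{q}}}{\partial\beta\,\partial\alpha_{j_1}}$, and then, using linearity to commute partials, it strips the operators $\alpha_{j_t}\,\partial/\partial\alpha_{j_t}$ from the right-hand side of~\autoref{eq:aux} one at a time until only $\beta\,\partial P_{\mathbf{q}}/\partial\beta$ remains. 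Your proof instead makes the ``selector'' semantics of the operator explicit once and for all---each side of~\autoref{eq:aux} and~\autoref{eq:aux2} is literally the sum of the monomials of tuples satisfying a certain predicate---and then both claims collapse to identities between tuple sets. What your approach buys is uniformity and transparency: a single lemma handles (1) and (2) without induction, and the role of multilinearity (degree $\le 1$ in every variable, which the paper also invokes via ``$P$ is linear'') is isolated in one clean computation. What the paper's approach buys is economy: part (2) comes for free from part (1) by symmetry, with no need to spell out the monomial--tuple bijection of~\autoref{eq:pq}. Both proofs rest on the same two facts---each exponent $\inner{\mathbf{c}_j}{\mathbf{t}_i}$, $\inner{\mathbf{q}}{\mathbf{t}_i}$ is $0$ or $1$, and a variable appears in the monomial of $t_i$ iff $t_i$ satisfies the associated predicate---so your argument is sound as stated.
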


\begin{proof}
  (1) The proof is immediate by noting that every monomial of $P_{\mathbf{q}}$ that contains all variables $\alpha_{j_1}, \ldots, \alpha_{j_\ell}$ also contains $\beta$; therefore, all monomials on the LHS of ~\autoref{eq:aux} contain $\beta$ and thus remain unaffected by applying the operator $\beta \partial / \partial \beta$.

  (2) From item (1), we derive ~\autoref{eq:aux}; we prove now that the RHS of ~\autoref{eq:aux} equals $\frac{\beta \partial P_{\mathbf{q}}}{\partial \beta}$.  We apply item (1) again to the implication $\pi \Rightarrow \pi_{j_1}$ and obtain $\frac{\beta \partial P_{\mathbf{q}}}{\partial \beta} = \frac{\beta \alpha_{j_1} \partial^2 P_{\mathbf{q}}}{\partial \beta \partial \alpha_{j_1}}$ (the role of $\beta$ in ~\autoref{eq:aux} is now played by $\alpha_{j_1}$). As $P$ is linear, the order of partials does not matter, and this allows us to remove the operator $\alpha_{j_1}\partial/\partial \alpha_{j_1}$ from the RHS of ~\autoref{eq:aux}.  By repeating the argument for $\pi \Rightarrow \pi_{j_2}$, $\pi \Rightarrow \pi_{j_3}$, etc, we remove $\alpha_{j_2}\partial/\partial \alpha_{j_2}$, then $\alpha_{j_3}\partial/\partial \alpha_{j_3}$, etc from the RHS.
\end{proof}

\begin{corollary} (1) Assume $\mathbf{q}$ is defined by a point predicate $\pi = (A_1=v_1 \wedge \cdots \wedge A_\ell=v_\ell)$ for some $\ell \leq m$.  For each $i=1,\ell$, denote $j_i$ the index of the statistic associated to the value $v_i$. In other words, the predicate $\pi_{j_i} \equiv (A_i = v_i)$.  Then,
  \begin{align}
    \label{eq:pqopt}
    \E[\inner{\mathbf{q}}{\mathbf{I}}] = & \frac{n}{P}\frac{\alpha_{j_1}\cdots \alpha_{j_\ell} \partial^\ell P}{\partial \alpha_{j_1} \cdots \partial \alpha_{j_\ell}}
  \end{align}
  (2) Let $\mathbf{q}$ be the query defined by a predicate as in ~\autoref{eq:piq}.  Then,
  \begin{align}
    \label{eq:pqopt2}
    \E[\inner{\mathbf{q}}{\mathbf{I}}] = & \sum_{j_1 \in J_1: \pi_{j_1} \Rightarrow \rho_1} \cdots \sum_{j_m \in J_m: \pi_{j_m} \Rightarrow \rho_m} \frac{n}{P}\frac{\alpha_{j_1}\cdots \alpha_{j_m} \partial^m P}{\partial \alpha_{j_1} \cdots \partial \alpha_{j_m}}
  \end{align}
\end{corollary}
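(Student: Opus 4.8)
The plan is to obtain part (1) by combining the general expectation formula \autoref{eq:ec} (applied to the fresh variable $\beta$) with part (2) of \autoref{lemma:query_derivative}, and then to derive part (2) by decomposing the range query into a disjoint union of point queries and using linearity of expectation.

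For part (1), I would start from \autoref{eq:ec} applied to the new variable $\beta$ associated to $\mathbf{q}$ in $P_{\mathbf{q}}$ (\autoref{eq:pq}), which gives $\E[\inner{\mathbf{q}}{\mathbf{I}}] = \frac{n}{P}\cdot\frac{\beta\,\partial P_{\mathbf{q}}}{\partial\beta}$; I keep the factor $\beta$ explicitly (it equals $1$ at the evaluation point, but it is needed to match the lemma). Since $\pi = (A_1=v_1\wedge\cdots\wedge A_\ell=v_\ell)$ is a point predicate and each $\pi_{j_i}\equiv(A_i=v_i)$, the conjunction $\pi_{j_1}\wedge\cdots\wedge\pi_{j_\ell}$ is logically \emph{equivalent} to $\pi$, so part (2) of \autoref{lemma:query_derivative} applies and replaces $\frac{\beta\,\partial P_{\mathbf{q}}}{\partial\beta}$ by $\frac{\alpha_{j_1}\cdots\alpha_{j_\ell}\,\partial^\ell P_{\mathbf{q}}}{\partial\alpha_{j_1}\cdots\partial\alpha_{j_\ell}}$. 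The final step is to drop the subscript $\mathbf{q}$: because each operator $\alpha_j\,\partial/\partial\alpha_j$ does not involve $\beta$, applying these operators and only afterward setting $\beta=1$ coincides with applying them to $P_{\mathbf{q}}[\beta=1]\equiv P$, which yields \autoref{eq:pqopt}.

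For part (2), I would exploit the completeness of the 1-dimensional statistics: for each attribute $A_i$, every value $v\in D_i$ carries a statistic with predicate $(A_i=v)$, so the range predicate $\rho_i$ is the \emph{disjoint} disjunction $\bigvee_{j_i\in J_i:\,\pi_{j_i}\Rightarrow\rho_i}\pi_{j_i}$. Taking the conjunction over $i=1,\ldots,m$ rewrites $\pi=\rho_1\wedge\cdots\wedge\rho_m$ as a disjoint union of point predicates $\pi_{j_1}\wedge\cdots\wedge\pi_{j_m}$, one per index tuple $(j_1,\ldots,j_m)$ with $\pi_{j_i}\Rightarrow\rho_i$. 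Disjointness means the counting query $|\sigma_\pi(I)|$ is exactly the sum of the point counting queries, so by linearity of expectation $\E[\inner{\mathbf{q}}{\mathbf{I}}]$ equals the sum over all such index tuples of the point-query expectations; applying part (1) with $\ell=m$ to each point query then gives \autoref{eq:pqopt2}.

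The only delicate point is the commutation step in part (1) --- justifying that one may take the $\alpha$-derivatives on $P_{\mathbf{q}}$ and only afterward substitute $\beta=1$ to recover $P$ --- together with carefully tracking where the \emph{equivalence} (rather than the mere implication) in \autoref{lemma:query_derivative} is needed, namely to collapse $\frac{\beta\,\partial P_{\mathbf{q}}}{\partial\beta}$ all the way down to the pure $\alpha$-derivative form. Everything else reduces to the disjoint decomposition of the range query into points and an application of linearity.
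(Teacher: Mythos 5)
Your proposal is correct and follows essentially the same route as the paper: part (1) combines \autoref{eq:eq}, the equivalence case \autoref{eq:aux2} of \autoref{lemma:query_derivative}, and the identity $P_{\mathbf{q}}[\beta=1] \equiv P$, while part (2) expands the range query into a disjoint sum of point queries and applies part (1) to each. The only difference is that you spell out the commutation of the $\alpha$-derivatives with the substitution $\beta=1$ and the role of complete 1D statistics, details the paper leaves implicit.
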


\begin{proof}
(1) ~\autoref{eq:pqopt} follows from ~\autoref{eq:eq}, ~\autoref{eq:aux2}, and the fact that $P_{\mathbf{q}}[\beta=1] \equiv P$.
(2) Follows from (1) by expanding $\mathbf{q}$ as a sum of point queries as in Lemma.~\ref{lemma:query_derivative} (1).
\end{proof}

In order to compute a query using ~\autoref{eq:pqopt2}, we would have to examine all $m$-dimensional points that satisfy the query's predicate, convert each point into the corresponding 1D statistics, and use ~\autoref{eq:pqopt} to estimate the count of the number of tuples at this point. Clearly, this is inefficient when $\mathbf{q}$ contains any range predicate containing many point queries.

{\bf New formula for derivatives} Thus, to compute $\E[\inner{\mathbf{q}}{\mathbf{I}}]$, one has to evaluate several partial derivatives of $P$. Recall that $P$ is stored in a highly compressed format, and therefore, computing the derivative may involve nontrivial manipulations. Instead, we use the fact that our polynomial is overcomplete, meaning that $P = \sum_{j \in J_i} \alpha_j P_j$, where $P_j$, $j \in J_i$ does not depend on any variable in $\setof{\alpha_j}{j \in J_i}$ (\autoref{eq:p1i}). Let $\rho_i$ be any predicate on the attribute $A_i$.  Then,
\begin{align}
\sum_{j_i \in J_i: \pi_{j_i} \Rightarrow \rho_i} \frac{\alpha_{j_i} \partial P}{\partial \alpha_{j_i}} = & P[\bigwedge_{j \in J_i: \pi_{j_i} \not\Rightarrow \rho_i} \alpha_j = 0]
\end{align}
Thus, in order to compute the summation on the left, it suffices to compute $P$ after setting to $0$ the values of all variables $\alpha_j$, $j \in J_i$ that do not satisfy the predicate $\rho_i$ (this is what the condition $\pi_{j_i} \not\Rightarrow \rho_i$ checks).

Finally, we combine this with ~\autoref{eq:pqopt2} and obtain the following, much simplified formula for answering a query $\mathbf{q}$, defined by a predicate of the form ~\autoref{eq:piq}:
\begin{align}
\label{eq:pqopt3}
  \E[\inner{\mathbf{q}}{\mathbf{I}}] = \frac{n}{P} P[\bigwedge_{i=1,m} \bigwedge_{j \in J_i: \pi_{j_i} \not\Rightarrow \rho_i} \alpha_j = 0]
\end{align}
In other words, we set to 0 all 1D variables $\alpha_j$ that correspond to values that do {\em not} satisfy the query, evaluate the polynomial $P$, and multiply it by $\frac{n}{P}$ (which is a precomputed constant independent of the query). For example, if the query ignores an attribute $A_i$, then we leave the 1D variables for that attribute, $\alpha_j$, $j \in J_i$, unchanged. If the query checks a range predicate, $A_i \in [u, v]$, then we set $\alpha_j=0$ for all 1D variables $\alpha_j$ corresponding to values of $A_i$ outside that range.

\begin{example}
\label{ex:query_answer}
Consider three attributes $A$, $B$, and $C$ each with domain 1000 and three multi-dimensional statistics: one $AB$ statistic $A \in [101, 200] \wedge B \in [501, 600]$, two $BC$ statistics $B \in [551, 650] \wedge C \in [801, 900]$ and $B \in [651, 700] \wedge C \in [701, 800]$, and one $ABC$ statistic $A \in [101, 150] \wedge B \in [551, 600] \wedge C \in [801, 850]$.  The polynomial $P$ is shown in ~\autoref{fig:eq:ex:p}.  Consider the query $\mathbf{q}$:
\begin{small}
\begin{lstlisting}
SELECT COUNT(*) FROM R 
WHERE A in [36,150] AND C in [660,834]
\end{lstlisting}
\end{small}
We estimate $\mathbf{q}$ using our formula $\frac{n}{P} P[\alpha_{1:35}=0,\ \alpha_{151:1000}=0,\ \gamma_{1:659}=0,\ \gamma_{835:1000}=0]$. There is no need to compute a representation of a new polynomial.
\end{example}

\section{System Optimizations}
\label{sec:system_optimizations}
%%!TEX root = maxent_summarization_vldb2019.tex
In~\autoref{sec:logical_optimizations}, we discussed two main optimizations: polynomial compression and query answering by setting certain 1D variables to zero. We now discuss how to implement these optimizations efficiently and analyze the runtime.

\subsection{Building the Polynomial}
\label{subsec:sys_opt:build_poly}
Recall from~\autoref{th:compress} that building our compressed polynomial starts with the product of the sum of all 1D statistics and then uses the inclusion/exclusion principle to modify the terms to include the correct multi-dimensional statistics. The terms that need to be modified are those that satisfy the predicates associated with the sets of non-conflicting multi-dimensional statistics. \ie, for some $\mathcal{I}^+$, the terms to be modified are the 1D terms $\alpha_j$ such that $\pi_j \wedge \pi_{S} \not\equiv \texttt{false}$ for $S \in J_{\mathcal{I}^+}$. The algorithmic challenge is how we find non-conflicting statistics $J_{\mathcal{I}^+}$ for some $\mathcal{I}^+$ and, once we know $J_{\mathcal{I}^+}$, how we find the 1D statistics that need to be modified.

To solve the latter problem, assume we have some $J_{\mathcal{I}^+}$. Since each multi-dimensional statistic is a range predicate over the elements in our domain and we have complete 1D statistics over the elements in our domain, once we know the range predicate, $\pi_S$ for $S \in J_{\mathcal{I}^+}$, we can easily find the associated 1D statistics.

Take the example in~\autoref{fig:eq:ex:p} which we will refer to throughout this section. If we know that $J_{\{\{1,2\},\{2,3\},\{1,2,3\}\}} = \{\{j_1, j_2, j_4\}\}$, then, by examining the range predicates associated with those three multi-dimensional statistics, we can determine that $\alpha_i$ for $i \in [101,150]$, $\beta_j$ for $j \in [551, 600]$, and $\gamma_k$ for $k \in [801, 850]$ are the 1D statistics that need to be modified to include $\delta_1$, $\delta_2$, and $\delta_4$.

The other problem is how we find the groups of multi-dimensional statistics that do not conflict for some group of attribute sets (\ie the $J_{\mathcal{I}^+} \neq \emptyset$). To solve this, we assume we are given four inputs: a list of attributes, a list of $B_a$ multi-dimensional attribute sets (\eg [$AB$, $BC$, $ABC$] for~\autoref{fig:eq:ex:p}), a dictionary of the 1D statistics with attributes $A_i$ as keys (denoted {\tt\textcolor{OliveGreen}{1DStats}}), and a dictionary of multi-dimensional statistics with indices into the list of $B_a$ attributes sets as keys (denoted {\tt\textcolor{OliveGreen}{multiDStats}}).

A straightforward algorithm to build the polynomial is shown in Alg.~\ref{alg:P_building_unop} where the red notations indicate the time complexity of each section of pseudo code. The function \textcolor{Maroon}{combinations}({\tt k}, $B_a$) generates a list of all possible length $k$ index sets from $[1, B_a]$. Note that we abuse the notation for dictionary selection slightly in that if \texttt{idx} is $\{1, 2\}$, {\tt\textcolor{OliveGreen}{multiDStats}[idx]} would select both the multi-dimensional stats of 1 and 2, \eg $AB$ and $BC$, and {\tt\textcolor{OliveGreen}{1DStats}[not idx]} would select no 1D stats since all attributes are used in $AB$ and $BC$. The function \textcolor{Maroon}{buildTerm}({\tt group}) builds the term shown in the last line of~\autoref{th:compress}. It generates a sum of the 1D statistics associated with the group and multiplies the sum by one minus the multi-dimensional variables in the group. 

The last function to discuss is \textcolor{Maroon}{findNoConflictGrps} which returns a dictionary with keys as sets of multi-dimensional attribute indices and values of groups of conflict free statistics. For example, for $k = 2$, a key would be $\{1, 2\}$ with value $\{\delta_1, \delta_2\}$ indicating that $\delta_1$ and $\delta_2$ do not conflict. The algorithm works by treating each multi-dimensional index set in \texttt{idx} as a relation with rows of the statistics associated with that index set. It then computes a theta-join over these relations with the join condition being if the statistics are conflict free.

For example, $\delta_1$ and $\delta_2$ are conflict free but not $\delta_1$ and $\delta_3$. Further, statistics over disjoint attributes sets are also conflict free. If we had a relation $R(A,B,C,D)$ and some statistic over $AB$ and another over $CD$, all of those multi-dimensional statistics from $AB$ would be satisfiable with all other from $CD$. 

\begin{small}
\begin{algorithm}[t]
\caption{Unoptimized Building P}
\label{alg:P_building_unop}
\begin{lstlisting}[style=myJava]
// add part *@\textcolor{Gray}{\textbf{i}}@* to P
P = *@\textcolor{Maroon}{1DProdSum}@*(*@\tt\textcolor{OliveGreen}{1DStats}@*)*@\tikzmark{line1}@*
for (k in [1:*@$B_a$@*]) do
  for (idx in *@\textcolor{Maroon}{combinations}@*(k, *@$B_a$@*)) do
    // add part *@\textcolor{Gray}{\textbf{ii}}@* to P
    P.addTerms(*@\textcolor{Maroon}{1DProdSum}@*(*@\tt\textcolor{OliveGreen}{1DStats}@*[not idx]))
    satGrps = *@\textcolor{Maroon}{findNoConflictGrps}@*(*@\tt\textcolor{OliveGreen}{multiDStats}@*[idx])*@\tikzmark{line3}@*
    // add part *@\textcolor{Gray}{\textbf{iii}}@* to P          *@\tikzmark{line4}@*
    for (group in satGrps) do *@\tikzmark{top3}@*
      P.addTerms(*@\textcolor{Maroon}{buildTerm}@*(group)) *@\tikzmark{bottom3}@*
\end{lstlisting}
\begin{tikzpicture}[overlay, remember picture]
    \drawBrace[1em]{top3}{bottom3}{{\scriptsize $O(mN + \hat{B}_s^k)$}};
    \drawLineWeird{line3}{line4}{{\scriptsize $O((k-1)(mN)^{2}\hat{B}_s^k)$}}
    \drawLine{line1}{{\scriptsize $O(mN)$}}
\end{tikzpicture}
\end{algorithm}
\end{small}

To understand the runtime complexity of the algorithm, start with the function \textcolor{Maroon}{buildTerm}({\tt group}). For ease of notation, we will denote $N = \max_i N_i$. Recall that $mN$ is the total number of distinct values across all attributes, $B_a$ is the number of attribute sets, and $\hat{B}_s$ is the largest number of statistics per attribute set. The runtime of \textcolor{Maroon}{buildTerm}({\tt group}) for a single {\tt satGrps} of size $k$ is $O(mN + \hat{B}_s^k)$ because a single {\tt satGrps} will only add each 1D statistic at most once and at most $\hat{B}_s^k$ correction terms. This runtime also includes the time for \textcolor{Maroon}{1DProdSum} because the 1D statistics that are not in \texttt{satGrps} will be added in \textcolor{Maroon}{1DProdSum}.

The runtime of \textcolor{Maroon}{findNoConflictGrps} involves computing the cross product of the multi-dimensional statistics and comparing the 1D statistics associated with each multi-dimensional statistic to determine if there is a conflict. Specifically, it computes a right deep join tree of the multi-dimensional statistics, and at each step in the tree, iterates over the 1D statistics in each right child conflict free group to see if there is a conflict or not with one of the incoming left child multi-dimensional statistics.

\autoref{fig:polynomial_join} shows the \textcolor{Maroon}{findNoConflictGrps} join algorithm for the attribute sets $AB$, $BC$, and $ABC$ with one added statistic $\delta_5$ on $BC$ of $B \in [401,550] \land C \in [751,850]$. The function to find and return a single conflict free group is \textcolor{Maroon}{CFG}({$\delta_L$, $\{\delta\}_S$}) (stands for conflict free group) where $\delta_L$ stands for the left multi-dimensional statistic and $\{\delta\}_S$ stands for the right, current conflict free group. The $S$ subscript is because we are building a new set of multi-dimensional statistics to add to some $J_{\mathcal{I}^+}$. We are abusing notation slightly because in~\autoref{sec:logical_optimizations}, $S$ stood for the set of indices whereas here, it sands for the set of statistic variables.

\textcolor{Maroon}{CFG} first determines which attributes are shared by $\delta_L$ and $\{\delta\}_S$. Then, for each such attribute, it iterates over $\delta_L$'s associated 1D statistics (at most $mN$ of them) and checks if at least one of these statistics also exists in the 1D statistics associated with $\{\delta\}_S$ (containment has runtime $mN$). This ensures, for all attributes $A_i$, that $\rho_{iL} \land \rho_{iS} \not\equiv \texttt{false}$. If $A_i$ is shared, then $\rho_{iL} \land \rho_{iS}$ will contain the shared 1D statistic found earlier, and if $A_i$ is not shared, then $\rho_{iL} \land \rho_{iS} \equiv \texttt{true}$. Note that a conflict free group is not always found. Therefore, the runtime of the join is $O(2(mN)^2\hat{B}_s^3)$. For joins of arbitrary size, the runtime is $O((mN)^2(k-1)\hat{B}_s^k)$.

Putting it all together we get the runtime of
\begin{multline*}
=mN + \sum_{k=1}^{B_a}\binom{B_a}{k}[mN + (k-1)(mN)^{2}\hat{B}_s^k + \hat{B}_s^k]
\end{multline*}
\vspace{-13pt}
\begin{multline*}
=mN + (2^{B_a}-1)mN + (\hat{B}_s+1)^{B_a} - 1 +\\
(mN)^2\left[\left[\sum_{k=0}^{B_a}\binom{B_a}{k}[k\hat{B}_s^k - \hat{B}_s^k]\right] + 1\right]
\end{multline*}
\vspace{-13pt}
\begin{multline*}
=mN + (2^{B_a}-1)mN + (\hat{B}_s+1)^{B_a} - 1 + (mN)^2 +\\
(mN)^2\left[\sum_{k=0}^{B_a}\binom{B_a}{k}k\hat{B}_s^k\right] - (mN)^2(\hat{B}_s + 1)^{B_a}
\end{multline*}
\vspace{-13pt}
\begin{multline*}
=mN + (2^{B_a}-1)mN + (\hat{B}_s+1)^{B_a} - 1 + (mN)^2 +\\
(mN)^2B_a\hat{B}_s(\hat{B}_s + 1)^{B_a - 1} - (mN)^2(\hat{B}_s + 1)^{B_a}
\end{multline*}

\begin{figure}[t]
    \centering
    \includegraphics[width=0.47\textwidth]{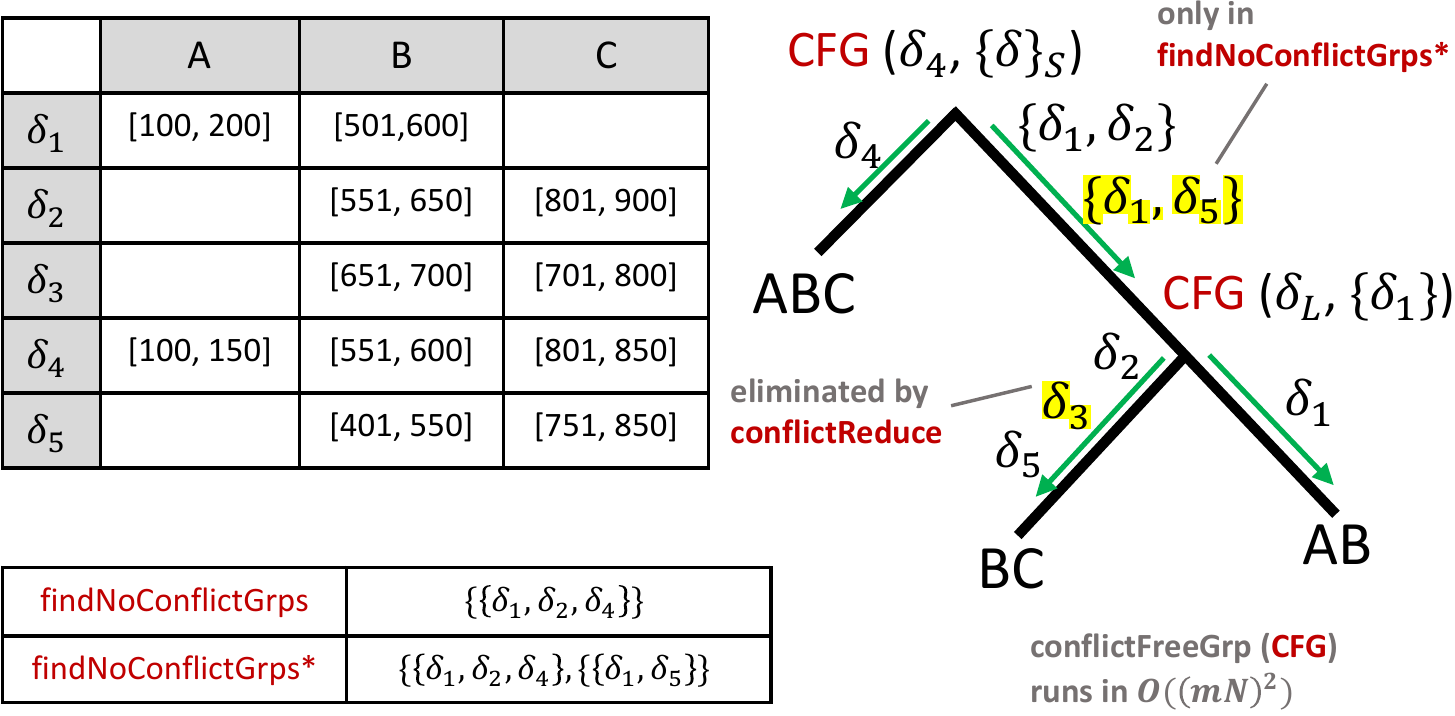}
    \caption{Figure of \textcolor{Maroon}{findNoConflictGrps} and \textcolor{Maroon}{findNoConflictGrps*} join algorithm for the three attribute sets $AB$, $BC$, and $ABC$. The highlighted statistics of $\delta_3$ and $\{\delta_1,\delta_5\}$ are the difference between the unoptimized and optimized algorithms.}
    \label{fig:polynomial_join}
\end{figure}

This algorithm, however, is suboptimal because it must run \textcolor{Maroon}{findNoConflictGrps} for all $2^{B_a}$ attribute sets. It is better to run \textcolor{Maroon}{findNoConflictGrps} once for the all multi-dimensional statistics (\ie compute the full theta-join of all $B_a$ attribute sets) and reconstruct the smaller groups without paying the cost of checking for conflicts (\ie perform selections over the full theta-join). Further, there are some statistics that will not appear in any other term besides when $k = 1$ in the loop above. Take $\delta_3$, for example. It is handled in line 2 of~\autoref{fig:eq:ex:p} but does not appear later on. These insights lead to a more optimized algorithm in Alg.~\ref{alg:P_building}.

The function \textcolor{Maroon}{conflictReduce} is like a semi-join reduction. It removes multi-dimensional statistics that do not appear in any conflict free group later on. For example, \texttt{redMDStats} would only contain $\delta_1$, $\delta_2$, and $\delta_4$. The function \textcolor{Maroon}{findNoConflictGrps*} acts just as \textcolor{Maroon}{findNoConflictGrps} except instead of computing an inner theta join, it computes a full outer theta join. The reason being that \texttt{satGrps} needs to keep track of all conflict free groups even if they contain less that $B_a$ statistics. For example, take the $\delta_5$ statistic of $BC$ as shown in~\autoref{fig:polynomial_join}. $\delta_1$ and $\delta_5$ are conflict free, but $\delta_1$, $\delta_5$, and $\delta_4$ are not conflict free because $\delta_5$ conflicts with $\delta_4$ in attribute $B$. In this case, \textcolor{Maroon}{findNoConflictGrps*} would return a dictionary with the keys $\{1, 2, 3\}$ and $\{1, 2\}$ and values $\{\delta_1, \delta_2, \delta_4\}$ and $\{\delta_1, \delta_5\}$, respectively. The outer join ensures $\{\delta_1, \delta_5\}$ is not lost. Note the time complexity of \textcolor{Maroon}{findNoConflictGrps} and \textcolor{Maroon}{findNoConflictGrps*} is the same because they must compute the full cross product and then filter. Lastly, the \texttt{group[idx]} index selection selects the statistics associated with the attribute sets in idx.

\begin{small}
\begin{algorithm}[t]
\caption{Optimized Building P}
\label{alg:P_building}
\begin{lstlisting}[style=myJava]
// add part *@\textcolor{Gray}{\textbf{i}}@* to P
P = *@\textcolor{Maroon}{1DProdSum}@*(*@\tt\textcolor{OliveGreen}{1DStats}@*)*@\tikzmark{line1}@*
// add terms when k = 1
for (idx in [1:*@$B_a$@*]) do *@\tikzmark{top1}@*
  // add part *@\textcolor{Gray}{\textbf{ii}}@* to P
  P.addTerms(*@\textcolor{Maroon}{1DProdSum}@*(*@\tt\textcolor{OliveGreen}{1DStats}@*[not idx]))
  // add part *@\textcolor{Gray}{\textbf{iii}}@* to P
  for (group in *@\tt\textcolor{OliveGreen}{multiDStats}@*[idx]) do
    P.addTerms(*@\textcolor{Maroon}{buildTerm}@*(group)) *@\tikzmark{bottom1}@*
redMDStats = *@\textcolor{Maroon}{conflictReduce}@*(*@\tt\textcolor{OliveGreen}{multiDStats}@*)*@\tikzmark{line2}@*
satGrps = *@\textcolor{Maroon}{findNoConflictGrps*}@*(redMDStats)*@\tikzmark{line3}@*
                         *@\tikzmark{line4}@*
for (k in [2:*@$B_a$@*]) do *@\tikzmark{top2}@*
  for (idx in *@\textcolor{Maroon}{combinations}@*(k, *@$B_a$@*)) do
    // add part *@\textcolor{Gray}{\textbf{ii}}@* to P
    P.addTerms(*@\textcolor{Maroon}{1DProdSum}@*(*@\tt\textcolor{OliveGreen}{1DStats}@*[not idx]))
    // add part *@\textcolor{Gray}{\textbf{iii}}@* to P
    for (group in satGrps) do
      P.addTerms(*@\textcolor{Maroon}{buildTerm}@*(group[idx]))*@\tikzmark{bottom2}@*
\end{lstlisting}
\begin{tikzpicture}[overlay, remember picture]
    \drawBraceRotate[2.1em]{top2}{bottom2}{\scriptsize
    \begin{minipage}{1cm}
            \begin{align*}
                O(&(2^{B_a}-B_a - 1)*\\&(mN+\hat{B}_s^{B_a}))
            \end{align*}
        \end{minipage}
    };
    \drawBrace[4.2em]{top1}{bottom1}{\scriptsize
    \begin{minipage}{1cm}
            \begin{align*}
                O(&mNB_a+\\&B_a\hat{B}_s)
            \end{align*}
        \end{minipage}
    };
    \drawLineWeird{line3}{line4}{{\scriptsize $O(({B_a}-1)(mN)^{2}\hat{B}_s^{B_a})$}}
    \drawLine[-0.3em]{line2}{{\scriptsize $O(\binom{B_a}{2}(B_a\hat{B}_s)^2)$}}
    \drawLine{line1}{{\scriptsize $O(mN)$}}
\end{tikzpicture}
\end{algorithm}
\end{small}

We will now show that this algorithm's time complexity is more optimal than Alg.~\ref{alg:P_building_unop} because although it loops through \texttt{satGrps}, selecting out a subterm is faster than rebuilding one, especially after semi-join reduction. Note that if \texttt{group[idx]} is has already been added to the polynomial from a previous group, it is just ignored when \texttt{addTerms} is called. As the red notation indicates, the runtime of the first for loop is $O(mNB_a+B_a\hat{B}_s)$ because for each \texttt{idx}, there are $\hat{B}_s$ multi-dimensional statistics and $mN$ 1D statistics to add the term.

The runtime of \textcolor{Maroon}{conflictReduce} involves comparing pairs of multi-dimensional statistics to see if they will participate in any conflict free groups of size two or more. For each $\binom{B_a}{2}\hat{B}_s^2$ possible pairs of multi-dimensional statistics, the conflict checking requires examining the 1D statistics of the pair, just like in \textcolor{Maroon}{CFG}({$\delta_L$, $\{\delta\}_S$}). The next function, \textcolor{Maroon}{findNoConflictGrps*}, has the same runtime as before except instead of being run for all $k$, it is run only once for $k=B_a$.

The last part to analyze is the for loop that iterates over all \texttt{satGrps}. In our runtime analysis, we add a percentage $p \in [0, 1]$ to indicate that only a fraction of the possible $\hat{B}_s^{B_a}$ groups are used in the last loop. This decrease is because of \textcolor{Maroon}{conflictReduce} and because, in practice, there are drastically fewer than $\hat{B}_s^{B_a}$ resulting conflict free groups. In practice, $p \leq 0.1$. As the inner most for loop is that same as in Alg.~\ref{alg:P_building_unop} except for $k = B_a$, the runtime is $O(\hat{B}_s^{B_a} + mN)$. As this happens for all combinations from $k = 2, B_a$, the overall runtime is $O((2^{B_a} - B_a - 1)(p\hat{B}_s^{B_a} + mN))$ where the minus is because $k$ starts at two instead of zero.

Adding up the different runtime components, we get the overall runtime of Alg.~\ref{alg:P_building} is
\begin{multline*}
=mN + B_a(mN + \hat{B}_s) + \binom{B_a}{2}(mN\hat{B}_s)^2 + \\(B_a-1)(mN)^{2}\hat{B}_s^{B_a} + (2^{B_a}-B_a-1)(mN+p\hat{B}_s^{B_a})
\end{multline*}

To show the improvement of the optimized algorithm,~\autoref{fig:p_runtime_analysis} shows the runtime difference between Alg.~\ref{alg:P_building_unop} and Alg.~\ref{alg:P_building} (\ie Alg.~\ref{alg:P_building_unop} - Alg.~\ref{alg:P_building}) when $mN = 5000$ (the trends are similar for other values of $mN$). The three columns are for $B_a = 2, 3, 4$, the colors represents the different values of $p$, and $\hat{B}_s$ varies from 100 to 2000. Note that the y-axis of the three columns are on a different scale in order to show the variation between the different values of $p$.

\begin{figure}[t]
    \centering
    \includegraphics[width=0.48\textwidth]{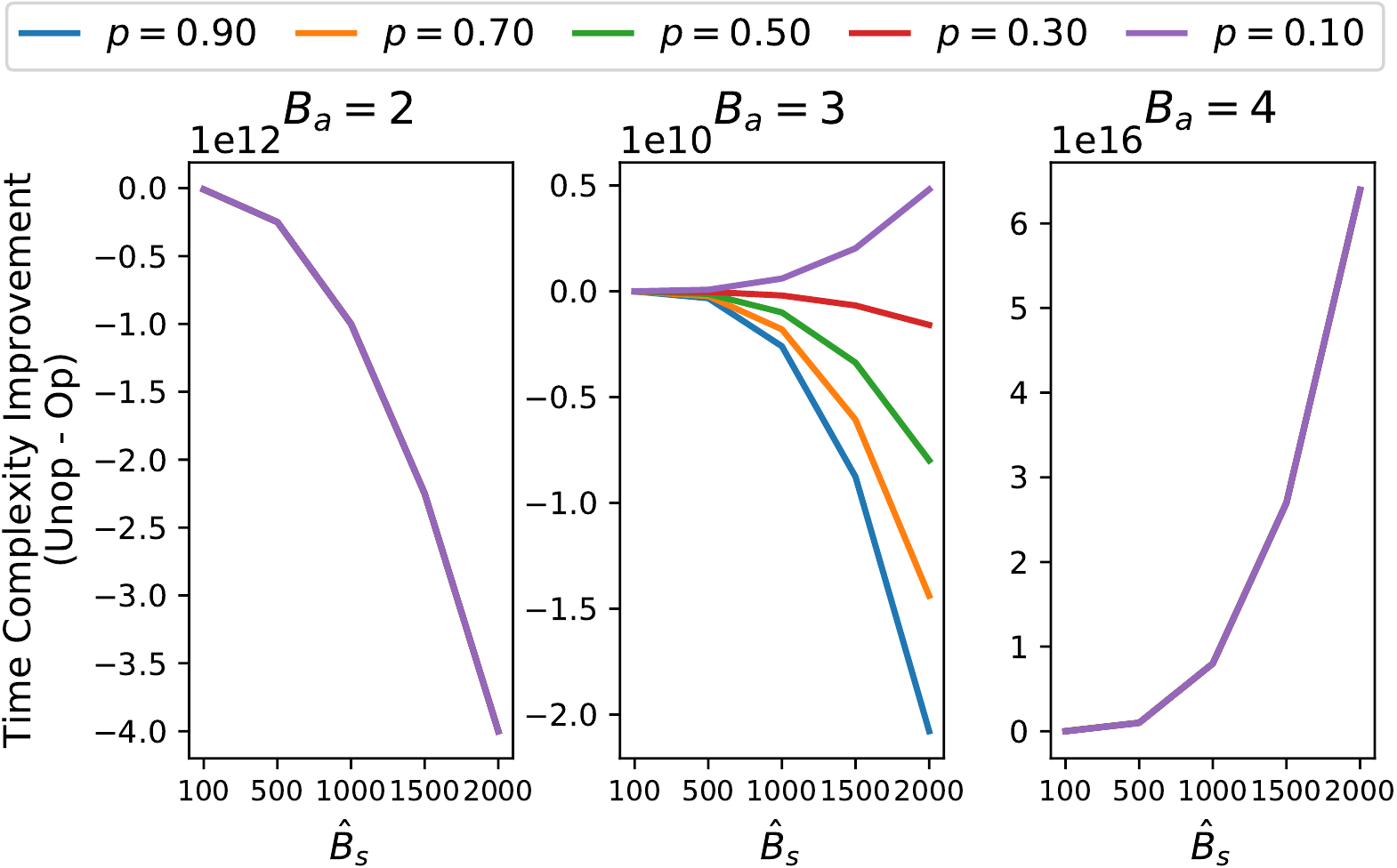}
    \caption{Algorithmic complexity improvement of the unoptimized algorithm over the optimized algorithm (complexity difference) for $mN = 5000$ and varying $B_a$, $\hat{B}_s$, and $p$. A positive number indicates the optimized algorithm is faster.}
    \label{fig:p_runtime_analysis}
\end{figure}

We see that $p$ matters for $B_a = 3$, and when $p$ falls between 0.3 and 0.1, the optimized version is faster. As, in practice, $p \leq 0.1$, the optimized version is best for $B_a > 2$. The trend shown in $B_a = 4$ is the same for $B_a > 4$ and thus not included in the plot. This shows that asymptotically, Alg.~\ref{alg:P_building} is optimal.

\subsection{Polynomial Evaluation}
\label{subsec:sys_opt:poly_eval}
Recall from~\autoref{subsec:log_opt:aqp} and~\autoref{ex:query_answer} that for a linear query $\mathbf{q}$ defined by some predicate $\pi$, we can answer the query in expectation (\ie $\E[\inner{\mathbf{q}}{\mathbf{I}}$) by setting all 1D variables $\alpha_j$ that correspond to values that do {\em not} satisfy $\pi$ to zero. This is more efficient than taking multiple derivatives, but simply looping over all variables can still be too slow as the compressed polynomial can, at worst, have exponentially many variables (see~\autoref{th:compress}).

To improve performance, we implement four main optimizations: (1) storing the compressed polynomial in memory, (2) parallelizing the computation, (3) fast containment check using bit vectors, (4) caching of subexpression evaluation. The first and second, storing in memory and parallelization, are straightforward, standard techniques that improve looping computations. Note, we can parallelize the computation because each polynomial term can be evaluated independently.

The next optimization, using bit vectors, is to optimize both \textcolor{Maroon}{findNoConflictGrps} and determining if a variable needs to be set to zero or not during query evaluation. It is important to understand that the polynomial is hierarchical with nested levels of sums of products of sums. For each subterm (\ie sum or product term in our polynomial), we store (a) a map with variable keys and values of the nested subterms containing that variable, (b) a bit vector of which attributes are contained in the subterm, and (c) a bit vector of which multi-dimensional attribute sets are contained in the subterm.

Take~\autoref{ex:query_answer} which references~\autoref{fig:eq:ex:p}. Take the subterm referenced by $\textbf{i}$. This subterm has a map of all variables pointing to one of three nested sum subterms. The attribute bit vector has a 1 in all places, representing it contains all possible attributes, and the multi-dimensional bit vector is all 0s. Now take the subterm $(\sum_{551}^{650} \beta_j)(\sum_{801}^{900} \gamma_k)(\delta_2-1)$. It has a map of only $\beta$ variables from $[551, 650]$, $\gamma$ variables from $[801, 900]$, and $\delta_2$ all pointing to one of three nested subterms. The attribute bit vector would only have 1 in the $B$ and $C$ dimensions, and the multi-dimensional bit vector would have a 1 in the dimension representing the attribute set $BC$.

These objects allow us to quickly check if some 1-dimensional or multi-dimensional statistic is contained in the term or if there are any variables that need to be set to zero (by using the attribute bit vectors) and which subterms those variables are in. To further see the benefit of this optimization, recall the runtime analysis from~\autoref{subsec:sys_opt:build_poly} where \textcolor{Maroon}{findNoConflictGrps} required $(mN)^2$ steps to check if two statistics were conflict free as it iterates over all 1D statistics. Using maps with variable keys allows us to quickly check if a 1D statistic is contained in another, bringing the runtime down to $mN$ for a single pair. The attribute bit vectors can also allow us to skip iterating over subsets of 1D statistics by quickly checking which attributes two statistics share. If some attribute is not shared between two statistics, then that attribute can cause no conflicts and does not need to be iterated over.

This leads to the last technique of caching. Caching is used to avoid recomputing subterms and takes advantage of the attribute bit vectors and variable hash maps described above. Since we solve for all the variables $\alpha_j$ of our model once and they remained fixed throughout query answering, if there is a subterm of our model that does not contain any variable the needs to be set to zero, we can reuse that subterm's value. We store this value along with the map and bit vectors.

By utilizing these techniques, we reduced the time to learn the model (solver runtime) from 3 months to 1 day and saw a decrease in query answering runtime from around 10 sec to 500 ms (95\% decrease). More runtime results are in~\autoref{sec:evaluation}.

\section{Statistic Selection}
\label{sec:stat_selection}
%%!TEX root = maxent_summarization_vldb2019.tex

In this section, we discuss how we choose the multi-dimensional statistics. We investigate different heuristic techniques for both finding optimal statistic ranges and reordering the data prior to statistic collection optimally.  Recall that our summary always includes all 1D statistics of the form $A_i = v$ for all attributes $A_i$ and all values $v$ in the active domain $D_i$. We describe here how to tradeoff the size of the summary for the precision of the MaxEnt model.

\subsection{Optimal Ranges}
\label{subsec:stat_select:optimal_ranges}
The first choice we make is to include only 2D statistics. It has been shown that restricting to pairwise correlations offers a reasonable compromise between the number of statistics needed and the summary's accuracy~\cite{tzoumas2013efficiently}. This means each multi-dimensional statistic predicate $\pi_j$ is equivalent to a range predicate over two attributes $A_{i_1} \in [u_1,v_1] \wedge A_{i_2} \in [u_2,v_2]$. If $A_{i_1}$ and $A_{i_2}$ are two dimensions of a rectangle, $\pi_j$ defines a sub-rectangle in this space. As the 2D predicates are disjoint, if $\pi_{j_1}$ and $\pi_{j_2}$ both define rectangles over $A_{i_1}$ and $A_{i_2}$, then these rectangles do not overlap. 

As mentioned in~\autoref{subsec:log_opt:compress}, we have two parameters to consider: $B_a$, the number of distinct attribute pairs we gather statistics on, and $B_s$, the number of statistics to gather per each attribute pair. We choose to make $B_s$ be the same for all multi-dimensional statistics. The problem is as follows: given $B_a$ and $B_s$, which $B_a$ attribute pairs $A_{i_1}A_{i_2}$ do we collect statistics on and which $B_s$ statistics do we collect for each attribute pair? This is a complex problem, and we make the simplifying assumption that $B_s$, the number of statistics, is given, but we explore different choices of $B_a$ in Sec.~\ref{sec:evaluation}. We leave it to future work to investigate automatic techniques for determining the total budget, $B_a*B_s$.

Given $B_a$, we consider two different approaches when picking pairs: attribute correlation and attribute cover. The first focuses only on correlation by picking the set of attribute pairs that have the highest combined correlation\footnote{This can be found by calculating, for all attribute pairs, the chi-squared value on the contingency table of $A_{i_1}$ and $A_{i_2}$ and sorting from highest to lowest chi-squared value.} such that every pair has at least one attribute not included in any previously chosen, more correlated pair. This is similar to computing a Chow-Liu tree which is a maximum weight spanning tree over a graph where attributes are nodes and edge weights are the mutual information between pairs of attributes~\cite{chow1968approximating}. The difference is that we use the chi-squared metric rather than the mutual information as chi-squared is a common independence test for categorical data. We leave it to future work to evaluate different correlation metrics.

The second approach focuses on attribute cover by picking the set of pairs that cover the most attributes with the highest combined correlation. For example, if $B_a = 2$ and we have the attribute pairs $BC$, $AB$, $CD$, and $AD$ in order of most to least correlated, if we only consider correlation, we would choose $AB$ and $BC$. However, if we consider attribute cover, we would choose $AB$ and $CD$. We experiment with both of these choices in Sec.~\ref{sec:evaluation}.

Next, we assume for each attribute $A_i$, its domain $D_i$ is ordered and viewed as an array such that $D_i[1] \leq D_i[2] \leq ...$. This allows us to define a $D_{i_1} \times D_{i_2}$ space (a $N_{i_1} \times N_{i_2}$ matrix denoted $\mathcal{M}$) representing the frequency of attribute pairs. In particular, for some $x \in [1, N_{i_1}]$ and $y \in [1, N_{i_2}]$, $\mathcal{M}[x,y] = |\sigma_{A_{i_1}=D_{i_1}[x] \wedge A_{i_2}=D_{i_2}[y]}(I)|$. Our goal is to choose the best $B_s$ 2D range predicates $[l^x,u^x] \times [l^y,u^y]$ where $l^x$ and $u^x$ are lower and upper index bounds on the $x$ axis (likewise for the $y$ axis). We consider three heuristics and show experimental results to determine which technique yields, on average, the lowest error on query results.

{\bf LARGE SINGLE CELL} In this heuristic, the range predicates are single point predicates, $A_{i_1}=D_{i_1}[x] \wedge A_{i_2}=D_{i_2}[y]$, and we choose the points $(x,y)$ as the $B_s$ most popular values in the two dimensional space; \ie, the $B_s$ largest values of $|\sigma_{A_{i_1}=D_{i_1}[x] \wedge A_{i_2}=D_{i_2}[y]}(I)|$.

{\bf ZERO SINGLE CELL} In this heuristic, we select the empty/zero/nonexistent cells; \ie, we choose $B_s$ points $(x,y)$ s.t. $\sigma_{A_{i_1}=D_{i_1}[x] \wedge A_{i_2}=D_{i_2}[y]}(I)=\emptyset$. If there are fewer than $B_s$ such points, we choose the remaining points as in LARGE SINGLE CELL. The justification for this heuristic is that, given only the 1D statistics, the MaxEnt model will produce false positives (``phantom'' tuples) in empty cells; this is the opposite problem encountered by sampling techniques, which return false negatives. This heuristic has another advantage because the value of $\alpha_j$ in $P$ is always 0 and does not need to be updated during solving.

{\bf COMPOSITE} This method partitions $\mathcal{M}$ into a set of $B_s$ disjoint rectangles and associates one statistic with each rectangle. For example if $\pi_{j_1}$ is $A_{i_1} \in [u_1,v_1] \wedge A_{i_2} \in [u_2,v_2]$ and $\pi_{j_2}$ is $A_{i_1} \in [u_3,v_3] \wedge A_{i_2} \in [u_4,v_4]$, then the composite statistic of $\pi_{j_1}$ and $\pi_{j_2}$ is $A_{i_1} \in ([u_1,v_1] \lor [u_3,v_3]) \wedge A_{i_2} \in ([u_2,v_2] \lor [u_4,v_4])$. We choose to combine the statistics by an attribute-wise union because our factorization algorithm requires it. Part $(\textbf{iii})$ of~\autoref{th:compress} multiplies the multi-dimensional statistic correction term (\ie $(\delta - 1)$) by a sum of the 1D statistics associated with it. In our example, we would multiply the composite statistic correction term by $(\alpha_{u_1} + \ldots + \alpha_{v_1} + \alpha_{u_3} + \ldots + \alpha_{v_3})(\alpha_{u_2} + \ldots + \alpha_{v_2} + \alpha_{u_4} + \ldots + \alpha_{v_4})$, which can be represented by a rectangle or bounding box. As we must maintain that the composite statistics can be represented by disjoint rectangles, we use an adaptation of K-D trees to partition the data. 

Recall that a K-D tree partitions a k-dimensional space by iterating over each axis $i$ and splitting the space at the median of the $i$th axis. Each child is then partitioned on the $i+1$ axis. The only difference between our K-D tree algorithm and the traditional one is our splitting condition. Instead of splitting on the median, we split on the value that has the lowest sum squared average value difference.

For a child partition with boundary $[l^x,u^x] \times [l^y,u^y]$, the split condition for the $x$ axis is shown in~\autoref{eq:kd_tree_split} where $\bar{s_l}$ is the average value of the left partition candidate; \ie,
$$\bar{s}_l = \frac{\sum_{(x,y) \in [l^x,m^x] \times [l^y,u^y]} \left(\mathcal{M}[x,y]\right)}{(m^x-l^x + 1)(u^y-l^y+1)}.$$

$\bar{s_r}$ is for the right partition candidate which uses $[m^{x}+1, u^x]$ instead of $[l^x,m^x]$.

\begin{equation}
\label{eq:kd_tree_split}
\begin{split}
\argmin_{m^x} \left[\sum_{(x,y) \in [l^x,m^x] \times [l^y,u^y]} \left(\mathcal{M}[x,y] - \bar{s_l}\right)^2 \right.\\
\left. + \sum_{(x,y) \in [m^{x}+1,u^x] \times [l^y,u^y]} \left(\mathcal{M}[x,y] - \bar{s_r}\right)^2\right]^{1/2}.
\end{split}
\end{equation}
An equivalent expression is used for the $y$ axis.

\begin{figure}[!t]
  \centering
  \subfloat[]{\includegraphics[width=0.18\textwidth]{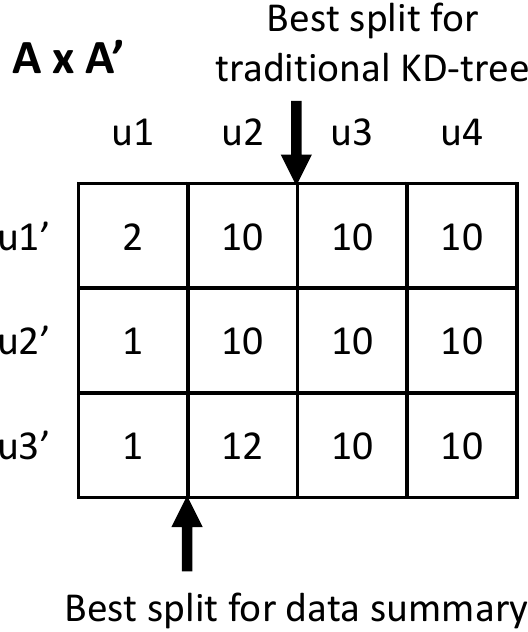}\label{fig:kd_split_sorta}}
  \hfill
  \subfloat[]{\includegraphics[width=0.27\textwidth]{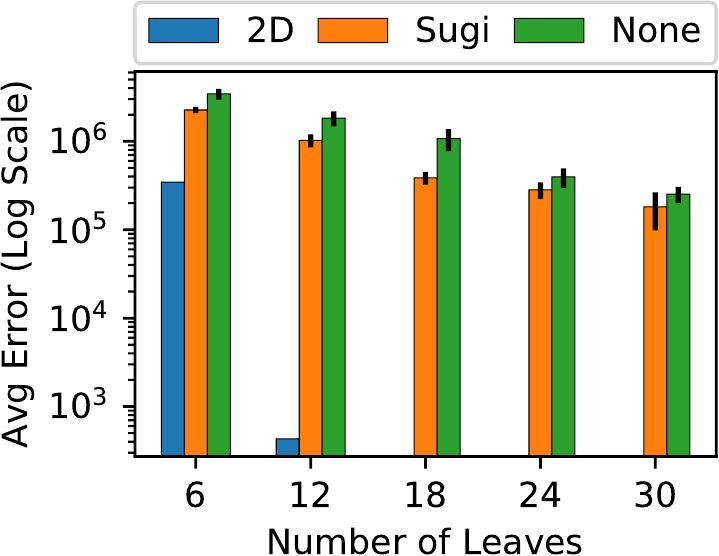}\label{fig:kd_split_sortb}}
  \caption{(a) Example K-D tree showing the traditional split on the median versus our split minimizing average error. (b) Comparison of not sorting, using {\bf SUGI} sort, or using {\bf 2D} sort before running the K-D tree algorithm.}
\end{figure}

We choose this split because we want our K-D tree to best represent the true values. Suppose we have cell counts on dimensions $A$ and $A'$ as shown in~\autoref{fig:kd_split_sorta}. For the next vertical split, if we followed the standard K-D tree algorithm, we would choose the second split. Instead, our method chooses the first split. Using the first split minimizes the sum squared error.

Our {\bf COMPOSITE} method repeatedly splits the attribute domains $D_{i_1}$ and $D_{i_2}$ (alternating) by choosing the split value following~\autoref{eq:kd_tree_split} until it exhausts the budget $B_s$. Then, for each rectangle $[l^x_j,u^x_j] \times [l^y_j,u^y_j]$ in the resulting K-D tree, it creates a 2D statistic $(\mathbf{c}_j, s_j)$, where the query $\mathbf{c}_j$ is associated with the number of tuples satisfying the 2D range predicate and the numerical value
$$ s_j \eqdef |\sigma_{A_{i_1} \in [D_{i_1}[l^x_j],D_{i_1}[u^x_j]] \wedge A_{i_2} \in [D_{i_2}[l^y_j],D_{i_2}[u^y_j]]}(I)|.$$

In~\autoref{subsubsec:results:statistic_selection_technique} we evaluate the three different heuristic selection techniques.

\subsection{Optimal Ordering}
\label{subsec:stat_select:optimal_ordering}
Here we describe how to improve the {\bf COMPOSITE} method by reordering the domains of the attributes, \ie the values in the matrix $\mathcal{M}$, because the split condition (\autoref{eq:kd_tree_split}) depends on the similarity of values within the bounds $[l^x,u^x] \times [l^y,u^y]$. Since our K-D tree relies on the sort order of the underlying matrix, we can permute the rows and columns before building the K-D tree to achieve a lower error.

To show how data ordering can improve the average sum squared error across the leaves, take the K-D tree plots in~\autoref{fig:kd_optimal_sort_example}. The K-D tree splits are shown in black lines on top of frequency heatmaps. The average error is printed below the x-axis. The trees are built on 12 by 12 data with individual cell frequencies ranging from 0 to 4,000,000. The data is constructed such that there is an optimal ordering that achieves 0 average sum squared error.  The left plot is unordered while the right plot more optimally sorts the data (we describe the sorting in~\autoref{subsec:stat_select:optimal_ordering:heuristic_sort}). It can be seen that (b) has grouped together similar values which means leaves have lower error.
\begin{figure}[t!]
  \centering
  \includegraphics[width=0.5\textwidth]{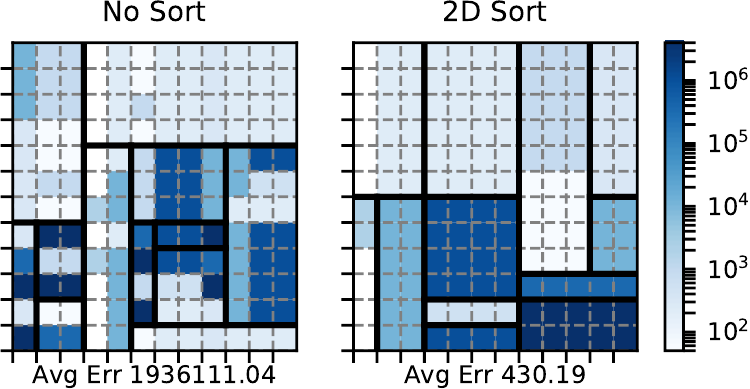}
  \caption{Plots showing the frequency heatmaps and the K-D trees built on data that is unsorted (left) and sorted using the {\bf 2D} sort algorithm (right). The average K-D tree leaf error is shown below.}
  \label{fig:kd_optimal_sort_example}
\end{figure}
To formalize the problem, let the matrix $\mathcal{M} = D_{i_1} \times D_{i_2}$ of size $N_{i_1} \times N_{i_2}$ be the frequency of values in the domains of attributes $A_{i_1}$ and $A_{i_2}$. For some index point $(x,y)$, $\mathcal{M}[x,y] = |\sigma_{A_{i_1} = D_{i_1}[x] \wedge A_{i_2} = D_{i_2}[y]}(I)|$. Denote the set of K-D tree leaves generated from running the K-D tree algorithm as $\mbox{KD}(\mathcal{M}) = \setof{[l^x_j,u^x_j] \times [l^y_j,u^y_j]}{j = 1,B_s}$. The K-D tree error is
\begin{equation}
\label{eq:kd_tree_error}
\mbox{err}(\mbox{KD}(\mathcal{M})) \eqdef \frac{1}{B_s}\left[\sum_{(x,y) \in [l^x_j,u^x_j] \times [l^y_j,u^y_j]} \left(\mathcal{M}[x,y] - \bar{s}_j\right)^2\right]^{1/2}
\end{equation}
where $\bar{s}_j$ is the average value per cell; \ie, $\bar{s}_j = s_j / (u^x_j-l^x_j + 1)(u^y_j-l^y_j+1)$.

Our goal is to solve
\begin{equation}
\label{eq:kd_tree_arg_min}
\argmin_{\pi_x, \pi_y} \mbox{err}(\mbox{KD}(\pi_x\mathcal{M}\pi_y))
\end{equation}
where $\pi_x$ and $\pi_y$ are row and column permutation matrices, respectively.

To solve this, we rely on heuristic techniques.

\subsection{Heuristic Sorts}
\label{subsec:stat_select:optimal_ordering:heuristic_sort}
Inspired by the work in finding optimal matrix reorderings for data visualization and Rectangle Rule List minimization~\cite{makinen2000reordering,behrisch2016matrix,applegate2007compressing}, we experiment with two different heuristic sort algorithms described in~\cite{makinen2000reordering} to more optimally order $\mathcal{M}$ and reduce~\autoref{eq:kd_tree_error}. At a high level, these heuristic techniques aim to permute a matrix to group together similar values. In doing so, this helps to mimimize our K-D tree error because a rectangle around these values will have lower error.

Both of the sort heuristic algorithms alternate between reordering the rows and columns until either a maximum iteration has been reached or there is no change to the sort order. The first sort algorithm, Sugiyama sort ({\bf SUGI}), is traditionally used on binary data and sorts the rows (columns) by the average index of the one-valued columns (rows). We modify the sort to sort by the average index of zero-valued columns (rows) instead to encourage more zero-valued rectangles and lower the likelihood of having a zero-valued cell in a non-zero rectangle. The second sort, {\bf 2D} sort, sorts the rows (columns) by the sum of index times the values in the columns (rows); \ie, a weighted column (row) sum weighted by the index value. Note that the index starts at one, not zero.

For example,~\autoref{fig:kdtree_sort_example} shows a matrix with index values in blue next to the rows and columns. The top diagram shows how {\bf SUGI} sort reorders the rows of the matrix by the average index value of the zeros. As the rows are sorted in ascending order, the middle row moves to the top, the third row to the middle, and the first row to the bottom.

The second diagram shows how {\bf 2D} sort reorders the rows of the matrix by the index weighted sum of the values. In this case, the result is that the second and third rows switch. The sorts would then continue by reordering the columns by the same techniques and so on.

\begin{figure}[!t]
    \centering
    \includegraphics[width=\linewidth]{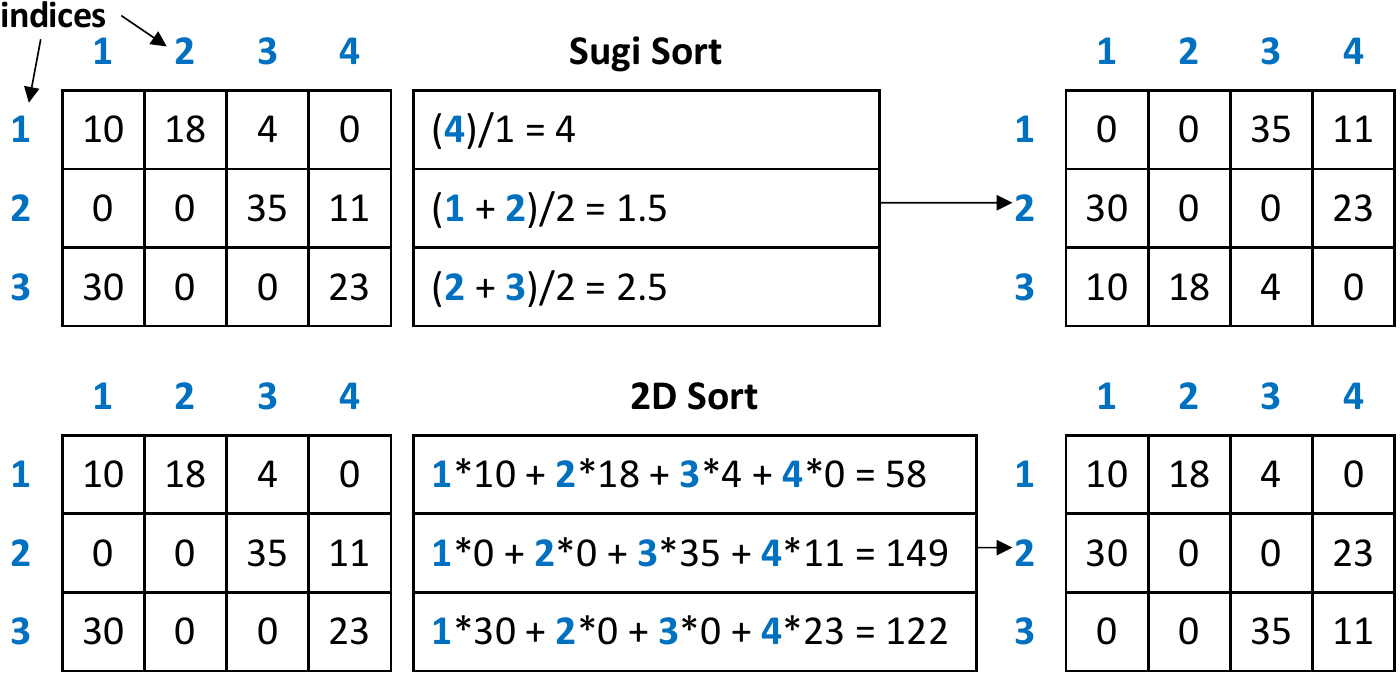}
    \caption{Sorting a matrix's rows by {\bf 2D} sort (top) and {\bf SUGI} sort (bottom).}
    \label{fig:kdtree_sort_example}
\end{figure}

To evaluate the two different sort heuristics, we first generate a 12 x 12 matrix $\mathcal{M}$ that has an optimal permutation order such that $\mbox{err}(\mbox{KD}(\mathcal{M})) = 0$ when using 12 K-D tree leaves. We randomly permute the rows and columns of $\mathcal{M}$ ten times and compare running the K-D tree algorithm directly on the unsorted matrix versus first doing {\bf SUGI} or {\bf 2D} sort with various values for the number of K-D tree leaves.

\autoref{fig:kd_split_sortb} shows the difference in average K-D tree error (\autoref{eq:kd_tree_error}) and standard deviation for the three methods across ten trials with the number of leaves varying from 6 to 30. You can see that {\bf 2D} sort greatly outperforms {\bf SUGI} sort and has no standard deviation because it always reaches the same sort order. It also very quickly converges to having zero error, but it does not learn the optimal order because it does not get zero error with 12 leaves.  {\bf 2D} sort's success is due to the fact that it takes values into account as well as the index position, therefore grouping together cells with similar frequencies. {\bf SUGI} sort, on the other hand, merely tries to group together the zeros. We do see, however, that {\bf SUGI} sort is better than no sort.

We show in~\autoref{subsubsec:results:statistic_selection_accuracy} how using {\bf 2D} sort impacts the overall query error of our MaxEnt technique.

\section{Evaluation}
\label{sec:evaluation}
%%!TEX root = maxent_summarization_vldb2019.tex
In this section, we evaluate the performance of \name in terms of query accuracy and query execution time. We compare our approach to uniform sampling and stratified sampling.

\subsection{Implementation}
\label{sec:implementation}
We implemented our polynomial solver and query evaluator in Java 1.8, in a prototype system that we call \name. We created our own polynomial class and variable types to implement our factorization. We parallelized our polynomial evaluator (see~\autoref{subsec:sys_opt:poly_eval}) using Java's parallel streaming library. We also used Java to store the polynomial factorization in memory.

Lastly, we stored the polynomial variables in a Postgres 9.5.5 database and stored the polynomial factorization in a text file. We perform all experiments on a 64bit Linux machine running Ubuntu 5.4.0. The machine has 120 CPUs and 1 TB of memory\footnote{The maximum amount of memory used in experiments was approximately 40 GB, meaning a system this large is not required.}. For the timing results, the Postgres database, which stores all the samples, also resides on this machine and has a shared buffer size of 250 GB.

\subsection{Experimental Setup}
For all our summaries, we ran our solver for 30 iterations or until the error was below $1 \times 10^{-6}$ using the method presented in Sec.~\ref{subsec:solving}. Our summaries took under 1 day to compute with the majority of the time spent building the polynomial and solving for the parameters.

We evaluate \name on two real datasets as opposed to benchmark data to measure query accuracy in the presence of naturally occurring attribute correlations. The first dataset comprises information on flights in the United States from January 1990 to July 2015~\cite{rita}. We load the data into PostgreSQL, remove null values, and bin all real-valued attributes into equi-width buckets. We further reduce the size of the active domain to decrease memory usage and solver execution time by binning cities such that the two most popular cities in each state are separated and the remaining less popular cities are grouped into a city called `Other'. We use equi-width buckets to facilitate transforming a user's query into our domain and to avoid hiding outliers, but it is future work to try different bucketization strategies. The resulting relation, \texttt{FlightsFine(fl\_date, origin\_city, dest\_city, fl\_time, distance)}, is 5 GB in size.

To vary the size of our active domain, we also create \texttt{FlightsCoarse(fl\_date, origin\_state, dest\_state, fl\_time, distance)}, where we use the origin state and destination state as flight locations. The left table in ~\autoref{fig:attr_size} shows the resulting active domain sizes. 

The second dataset is 210 GB in size. It comprises N-body particle simulation data~\cite{ChaNGaScaling}, which captures the state of astronomy simulation particles at different moments in time (snapshots). The relation \texttt{Particles(density, mass, x, y, z, grp, type, snapshot)} contains attributes that capture particle properties and a binary attribute, grp, indicating if a particle is in a cluster or not. We bucketize the continuous attributes (density, mass, and position coordinates) into equi-width bins. The right table in ~\autoref{fig:attr_size} shows the resulting domain sizes.

\begin{figure}
    \scriptsize
    \centering
    \begin{tabular}{|M{39pt}|M{33pt}|M{33pt}|}
    \hline
    & \texttt{Flights} \texttt{Coarse} & \texttt{Flights} \texttt{Fine} \\ \hline
    \texttt{fl\_date (FD)} & 307 & 307 \\ \hline
    \texttt{origin (OS/OC)} & 54 & 147 \\ \hline
    \texttt{dest (DS/DC)} & 54 & 147 \\ \hline
    \texttt{fl\_time (ET)} & 62 & 62 \\ \hline
    \texttt{distance (DT)} & 81 & 81 \\ \hline
    \# possible tuples & $4.5 \times 10^9$ & $3.3 \times 10^{10}$ \\ \hline
   \end{tabular}
    \begin{tabular}{|c|c|}
    \hline
    & \texttt{Particles} \\ \hline
    \texttt{density} & 58 \\ \hline
    \texttt{mass} & 52 \\ \hline
    \texttt{x} & 21 \\ \hline
    \texttt{y} & 21 \\ \hline
    \texttt{z} & 21 \\ \hline
    \texttt{grp} & 2 \\ \hline
    \texttt{type} & 3 \\ \hline
    \texttt{snapshot} & 3 \\ \hline
    \# possible & \\
    tuples & $5.0 \times 10^8$ \\ \hline
    \end{tabular}
    \caption{Active domain sizes. Each cell shows the number of distinct values after binning. Abbreviations shown in brackets are used in figures to refer to attribute names: e.g., OS stands for origin\texttt{\_}state. }
    \label{fig:attr_size}
\end{figure}

\subsection{Query Accuracy}
\label{subsec:results:accuracy}

We first compare \name using our best statistic selection techniques of {\bf COMPOSITE} and \textbf{2D} sort (see~\autoref{subsec:results:statistic_choice}) to uniform and stratified sampling on the flights dataset. We use one percent samples, which require approximately 100 MB of space when stored in PostgreSQL. To approximately match the sample size, our largest summary requires only 600 KB of space in PostgreSQL to store the polynomial variables and approximately 200 MB of space in a text file to store the polynomial factorization. This, however, could be improved and compressed further beyond what we did in our prototype implementation.

We compute correlations on \texttt{FlightsCoarse} across all attribute pairs and identify the following pairs as having the largest correlations (C stands for ``coarse''): 1C = (origin\texttt{\_}state, distance), 2C = (destination\texttt{\_}state, distance), 3CF = (fl\texttt{\_}time, distance)\footnote{Pair 3 is the same for \texttt{FlightsCoarse} and \texttt{FlightsFine}}, and 4C = (origin\texttt{\_}state, destination\texttt{\_}state). We use the corresponding attributes, which are also the most correlated, for the finer-grained relation and refer to those attribute pairs as 1F, 2F, and 4F.

Following the discussion in~\autoref{sec:stat_selection}, we have two parameters to vary: $B_a$ (``breadth'') and $B_s$ (``depth''). In order to keep the total number of statistics constant, we require that $B_a*B_s = 3000$. This threshold allows for the polynomial to be built and solved in under a day. Using this threshold, we build four summaries to show the difference in choosing statistics based solely on correlation (choosing statistics in order of most to least correlated) versus attribute cover (choosing statistics that cover the attributes with the highest combined correlation). The first summary, No2D, contains only 1D statistics. The next two, Ent1\&2 and Ent3\&4, use 1,500 statistics across the attribute pairs (1, 2) and (3, 4), respectively. The final one, Ent1\&2\&3, uses 1,000 statistics for the three attribute pairs (1, 2, 3). We do not include 2D statistics related to the flight date attribute because this attribute is relatively uniformly distributed and does not need a 2D statistic to correct for the MaxEnt's underlying uniformity assumption.~\autoref{fig:method_summary} summarizes the summaries.
% Correlation Ordering:
% actual_elapsed_time,distance,1208212966.88

% dest_state_abr,origin_state_abr,state,397104636.13
% dest_state_abr,distance,state,2126015774.11
% distance,origin_state_abr,state,2127235598.32

% dest_city_bin,origin_city_bin,bin,744892867.438
% dest_city_bin,distance,bin,4874601576.29
% distance,origin_city_bin,bin,4892888106.46

For sampling, we choose to compare with a uniform sample and four different stratified samples. We choose the stratified samples to be along the same attribute pairs as the 2D statistics in our summaries; \ie, pair 1 through pair 4.

\begin{figure}[t]
    \scriptsize
    \centering
    \begin{tabular}{|c|c|c|c|c|c|}
    \hline
    & MaxEnt Method & No2D & 1\&2 & 3\&4 & 1\&2\&3 \\ \hline
    Pair 1 & (origin, distance) & & X & & X \\ \hline
    Pair 2 & (dest, distance) & & X & & X \\ \hline
    Pair 3 & (time, distance) & & & X & X \\ \hline
    Pair 4 & (origin, dest) & & & X & \\ \hline
    \end{tabular}
    \caption{MaxEnt 2D statistics including in the summaries. The top row is the label of the MaxEnt method used in the graphs.}
    \label{fig:method_summary}
\end{figure}

To test query accuracy, we use the following query template:

\begin{small}
\begin{lstlisting}
SELECT A1,..., Am, COUNT(*)
FROM R WHERE A1=`v1' AND ... AND Am=`vm'
GROUP BY A1,..., Am
\end{lstlisting}
\end{small}

We test the approaches on 400 unique \texttt{(A1,.., Am)} values. We choose the attributes for the queries in a way that illustrates the strengths and weaknesses of \name. For the selected attributes, 100 of the values used in the experiments have the largest count (heavy hitters), 100 have the smallest count (light hitters), and 200 (to match the 200 existing values) have a zero true count (nonexistent/null values). To evaluate the accuracy of \name, we compute a $|true-est|/(true+est)$ (a measure of relative difference) on the heavy and light hitters. To evaluate how well \name distinguishes between rare and nonexistent values, we compute the F measure,
$$2*\textrm{precision}*\textrm{recall}/(\textrm{precision}+\textrm{recall})$$
with 
$$\textrm{precision} = \frac{|\{est_t > 0\ :\ t \in \textrm{light hitters}\}|}{|\{est_t > 0\ :\ t \in (\textrm{light hitters}\ \cup\ \textrm{null values})\}|}$$
and
$$ \textrm{recall} = \frac{|\{est_t > 0\ :\ t \in \textrm{light hitters}\}|}{100}.$$
We do not compare the execution time of \name to sampling for the flights data because the dataset is small, and the execution time of \name is, on average, below 0.5 seconds and at most 1 sec. Sec.~\ref{subsec:results:scalability} reports execution time for the larger data.

\begin{figure}[t]
    \centering
    \includegraphics[width=0.48\textwidth]{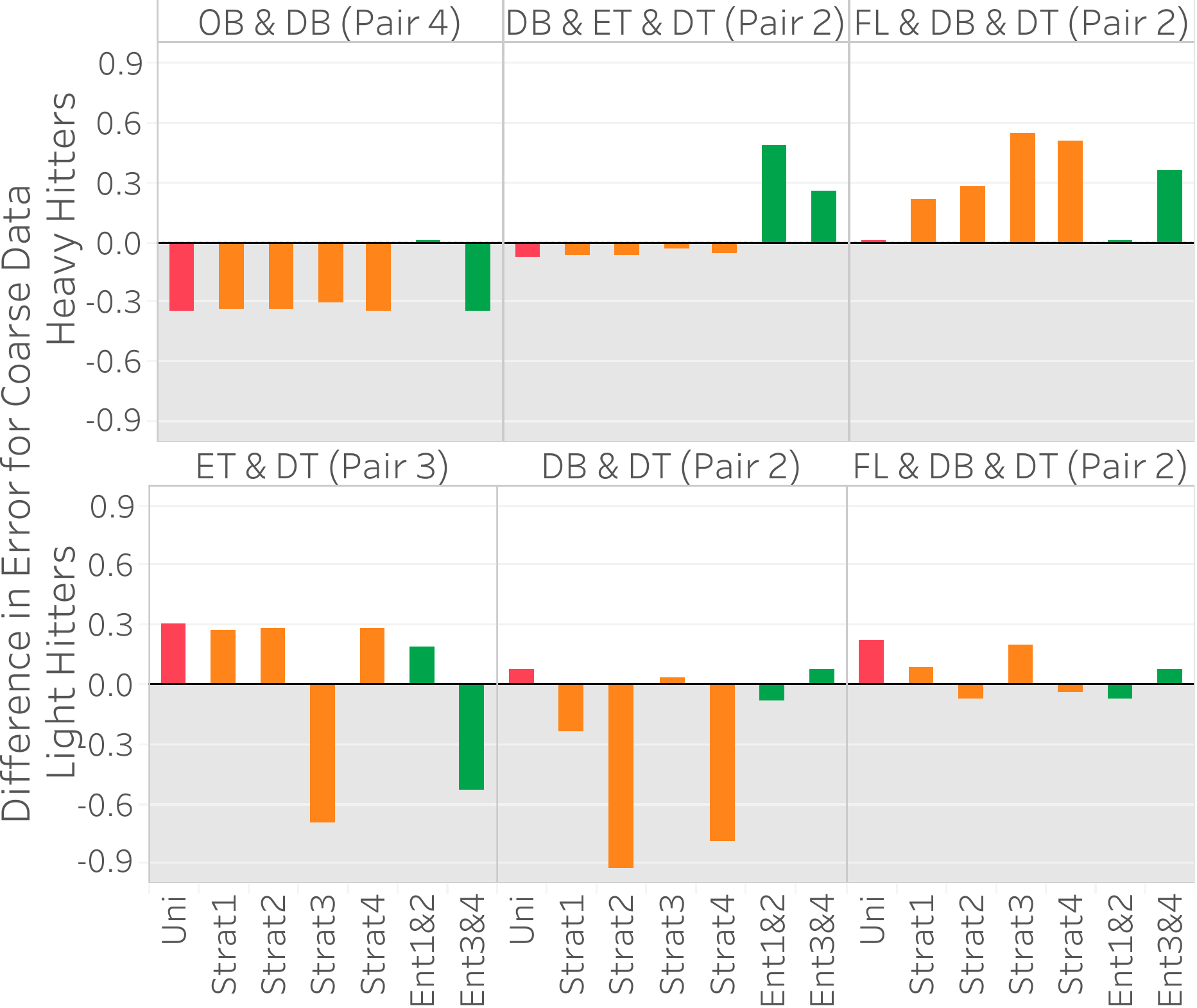}
    \caption{Query error difference between all methods and Ent1\&2\&3 over \texttt{FlightsCoarse}. The pair in parenthesis in the column header corresponds to the 2D statistic pair(s) used in the query template. For reference, pair 1 is (origin/OB, distance/DT), pair 2 is (dest/DB, distance/DT), pair 3 is (time/ET, distance/DT), and pair 4 is (origin/OB, dest/DB).}
    \label{fig:coarsequeries}
\end{figure}

~\autoref{fig:coarsequeries} (top) shows query error differences between all methods and Ent1\&2\&3 (\ie, average error for method X minus average error for Ent1\&2\&3) for three different heavy hitter queries over \texttt{FlightsCoarse}. Hence, bars above zero indicate that Ent1\&2\&3 performs better and vice versa. Each of the three query templates uses a different set of attributes that we manually select to illustrate different scenarios. The attributes of the query are shown in the column header in the figure, and any 2D statistic attribute-pair contained in the query attributes is in parentheses. Each bar shows the average of 100 query instances selecting different values for each template. 

As the figure shows, Ent1\&2\&3 is comparable or better than sampling on two of the three queries and does worse than sampling on query 1. The reason it does worse on query 1 is that it does not have any 2D statistics over 4C, the attribute-pair used in the query, and 4C is fairly correlated. Our lack of a 2D statistic over 4C means we cannot correct for the MaxEnt's uniformity assumption. On the other hand, all samples are able to capture the correlation because the 100 heavy hitters for query 1 are responsible for approximately 25\% of the data. This is further shown by Ent3\&4, which has 4C as one of its 2D statistics, doing better than Ent1\&2\&3 on query 1.

Ent1\&2\&3 is comparable to sampling on query 2 because two of its 2D statistics cover the three attributes in the query. It is better than both Ent1\&2 and Ent3\&4 because each of those methods has only one 2D statistic over the attributes in the query. Finally, Ent1\&2\&3 is better than stratified sampling on query 3 because it not only contains a 2D statistic over 2C but also correctly captures the uniformity of flight date. This uniformity is also why Ent1\&2 and a uniform sample do well on query 3. Another reason stratified sampling performs poorly on query 3 is because the result is highly skewed in the attributes of destination state and distance but remains uniform in flight date. The top 100 heavy hitter tuples all have the destination of `CA' with a distance of 300. This means even a stratified sample over destination state and distance will likely not be able to capture the uniformity of flight date within the strata for `CA' and 300 miles.

~\autoref{fig:coarsequeries} (bottom) shows results for three different light hitter queries over \texttt{FlightsCoarse}. In this case, \name always does better than uniform sampling. Our performance compared to stratified sampling depends on the stratification and query. Stratified sampling outperforms Ent1\&2\&3 when the stratification is exactly along the attributes involved in the query. For example, for query 1, the sample stratified on pair 3 outperforms \name by a significant amount because pair 3CF is computed along the attributes in query 1. Interestingly, Ent3\&4 and Ent1\&2 do better than Ent1\&2\&3 on query 1 and query 2, respectively. Even though both of the query attributes for query 1 and query 2 are statistics in Ent1\&2\&3, Ent1\&2 and Ent3\&4 have more statistics and are thus able to capture more zero elements. Lastly, we see that for query 3, we are comparable to stratified sampling because we have a 2D statistic over pair 2C, and the other attribute, flight date, is relatively uniformly distributed in the query result.

We ran the same queries over the \texttt{FlightsFine} dataset and found \textit{identical} trends in error difference. We therefore omit the graph.

\begin{figure}[t]
    \centering
    \includegraphics[width=0.98\linewidth]{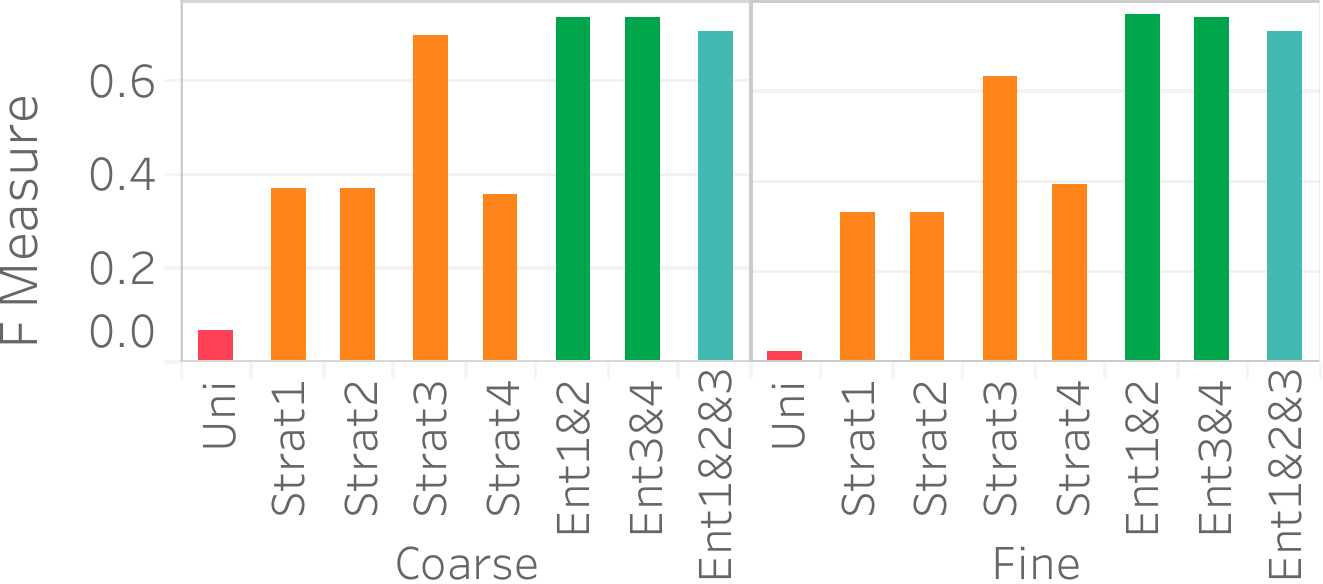}
    \caption{F measure for light hitters and null values over \texttt{FlightsCoarse} (left) and \texttt{FlightsFine} (right).}
    \label{fig:fmeasure}
\end{figure}

An important advantage of our approach is that it more accurately distinguishes between rare values and nonexistent values compared with stratified sampling, which often does not have samples for rare values when the stratification does not match the query attributes. To assess how well our approach works on those rare values, ~\autoref{fig:fmeasure} shows the average F measure over fifteen 2- and 3-dimensional queries selecting light hitters and null values.

We see that Ent1\&2 and 3\&4 have F measures close to 0.72, beating all stratified samples and also beating Ent1\&2\&3. The key reason why they beat Ent1\&2\&3 is that these summaries have the largest numbers of statistics, which ensures they have more fine grained information and can more easily identify regions without tuples. Ent1\&2\&3 has an F measure close to 0.69, which is slightly lower than the stratified sample over pair 3CF but better than all other samples. The reason the sample stratified over pair 3CF performs well is that the flight time attribute has a more skewed distribution and has more rare values than other dimensions. A stratified sample over that dimensions will be able to capture this. On the other hand, Ent1\&2\&3 will estimate a small count for any tuple containing a rare flight time value and will be rounded to 0.

\subsection{Execution Times}
\subsubsection{Scalability}
\label{subsec:results:scalability}

To measure the performance of \name on large-scale datasets, we use three subsets of the 210 GB \texttt{Particles} table. We select data for one, two, or all three snapshots (each snapshot is approximately 70 GB in size). We build a 1 GB uniform sample for each subset of the table as well as a stratified sample over the pair density and group with the same sampling percentage as the uniform sample. We then build two MaxEnt summaries; EntNo2D uses no 2D statistics, and EntAll contains 5 2D statistics with 100 statistics over each of the most correlated attributes, not including snapshot. We do not use any presorting method for this experiment. We run a variety of 4D selection queries such as the ones from Sec.~\ref{subsec:results:accuracy}, split into heavy hitters and light hitters. We record the query accuracy and execution time.

%01000010_11000000_01000100_00011000_00101000  density, mass, x, y, z, grp, type, snapshot
%mass,type - den,mass - mass,grp - y,z - x,z
\begin{figure}[t]
    \centering
    \includegraphics[width=0.95\linewidth]{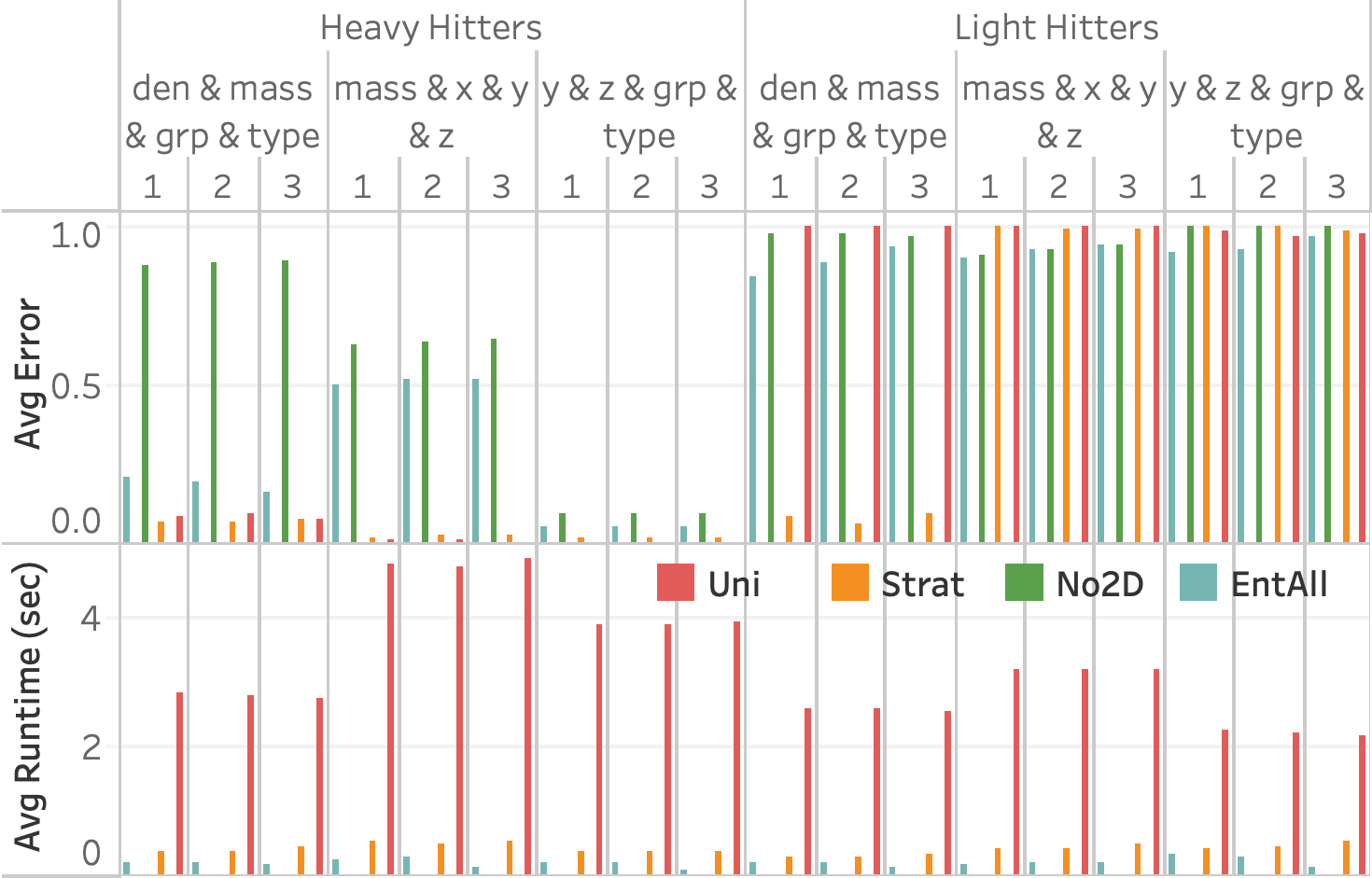}
    \caption{Query average error and execution time for three 4D selection queries on the \texttt{Particles} table. The stratified sample (orange) is stratified on (den, grp).}
    \label{fig:vulcan}
\end{figure}

~\autoref{fig:vulcan} shows the query accuracy and execution time for three different selection queries as the number of snapshots increases. We see that \name consistently does better than sampling on query execution time, although both \name and stratified sampling execute queries in under one second. Stratified sampling outperforms uniform sampling because the stratified samples are generally smaller than their equally selective uniform sample.

In terms of query accuracy, sampling always does better than \name for the heavy hitter queries. This is expected because the bucketization of \texttt{Particles} is relatively coarse grained, and a 1 GB sample is sufficiently large to capture the heavy hitters. We do see that EntAll does significantly better than EntNo2D for query 1 because three of its five statistics are over the attributes of query 1 while only 1 statistic is over the attributes of queries 2 and 3. However, the query results of query 3 are more uniform, which is why EntNo2D and EntAll do well. 

For the light hitter queries, none of the methods do well except for the stratified sample in query 1 because the query is over the attributes used in the stratification. EntAll does slightly better than stratified sampling on queries 2 and 3.

\subsubsection{Solving Time}
\label{subsec:results:solving_time}
To show the data loading and model solving time of \name, we use \texttt{FlightsFine} and measure the time it takes for \name to read in a dataset from Postgres to collect statistics, to build the polynomial, and to solve for the model parameters for various $B_a$ and $B_s$ (see~\autoref{fig:solve_runtimes}). When $B_a = 2$, we gather statistics over pair 1 and pair 2 (MaxEnt1\&2), and when $B_a = 3$, we gather statistics over pair 1, 2, and 3 (MaxEnt3).)
 
\begin{figure}[t]
  \centering
  \includegraphics[width=0.47\textwidth]{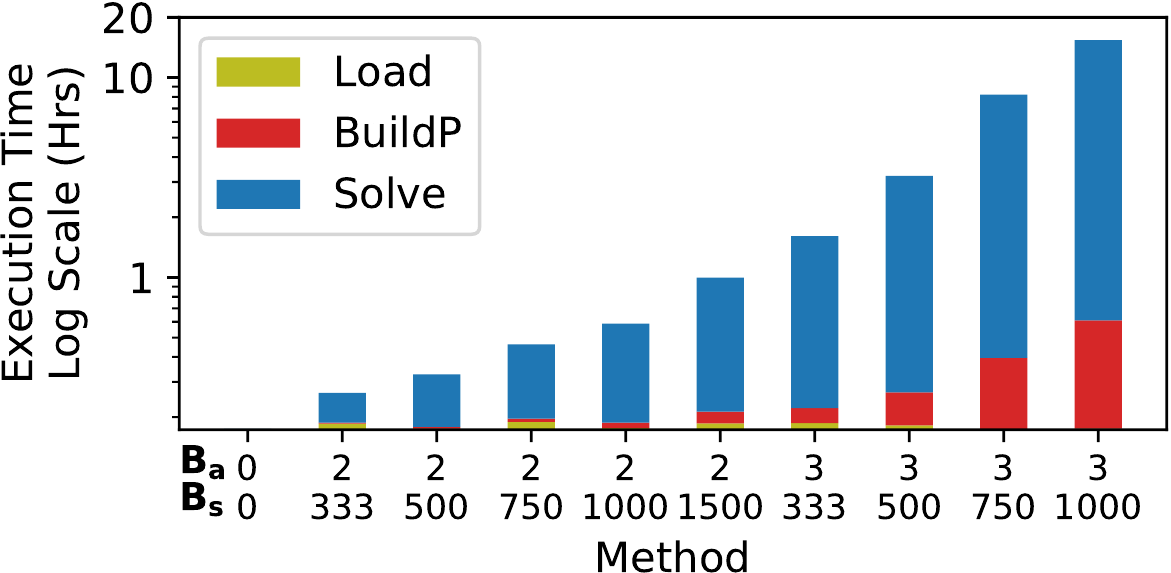}
  \caption{\name log scale execution times for loading the data, building the polynomial, and solving for the parameters for various configurations of $B_a$ and $B_s$ on \texttt{FlightsFine}.}
  \label{fig:solve_runtimes}
\end{figure}

We see that the overall polynomial building and solving execution time grows exponentially as $B_a$ and $B_s$ increase while the data loading time remains constant. The smallest model has a execution time of 10.5 minutes while largest model (MaxEnt3) has a execution time of 15.4 hours. The experiment further demonstrates that $B_a$ impacts execution time more than $B_s$. The method with $B_a = 2, B_s = 750$ has a faster execution time than the method with $B_a = 3, B_s = 500$ even though the total number of statistics, 1,500 in both, is the same. 

Note that the data loading time (yellow) will increase as the dataset gets larger, but once all the histograms and statistics are computed, the time to build the polynomial and solver time are independent of the original data size; they only depend on the model complexity.

\subsubsection{Group By Queries}
\label{subsec:results:groupby_time}

To further expand on execution time results, we measure the execution time to compute eight various 2- and 3-dimensional group-by queries instead of single point queries (sixteen group-by queries in total) to show how the execution time depends on the active domain. As \name can only issue a single point query at a time (the query evaluation is already parallelized), the group-by queries are run as sequences of point queries over the domain of the query attributes.

\autoref{fig:groupby_time} shows a scatter plot of the query domain size versus the execution time for 2- and 3-dimensional group-by queries for the same models as used in~\autoref{subsec:results:solving_time} (\ie $B_a = 0, 2, 3$ and $B_s$ varying form 333 to 1500). If a model takes longer than 10 minutes to compute a group-by query, we terminate its execution. Each color represents a different combination of $B_a$ and $B_s$. Note that running a 3-dimensional group-by query on Postgres on the full \texttt{FlightsFine} can take up to 17 minutes.

\begin{figure}[t]
  \centering
  \includegraphics[width=0.47\textwidth]{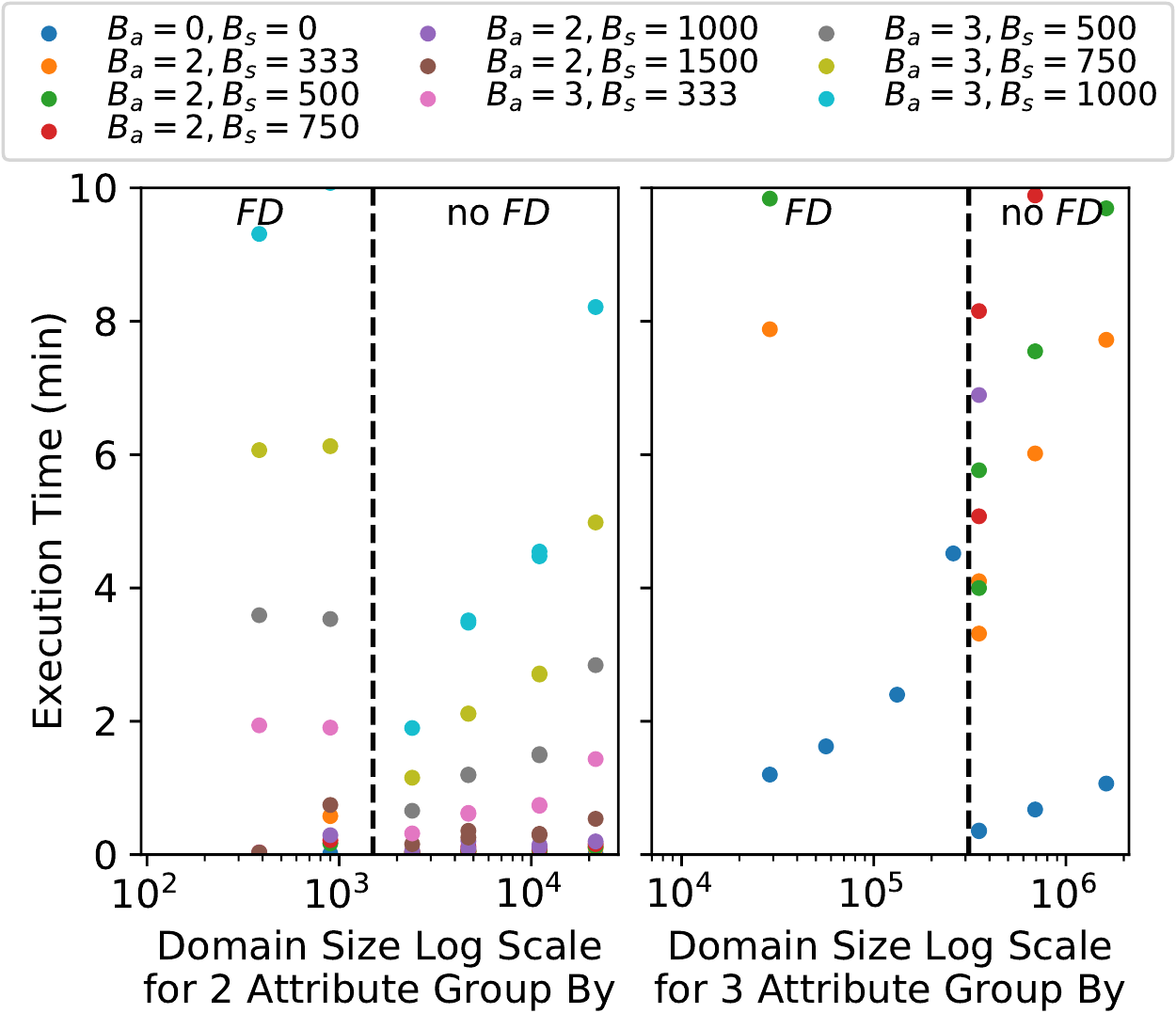}
  \caption{Log scale query execution times for 2- and 3-dimensional group-by queries versus size of query's active domain on \texttt{FlightsFine} for various configurations of $B_a$ and $B_s$. The dashed line demarcates queries with \texttt{fl\_date} as a group-by attribute and those that do not.}
  \label{fig:groupby_time}
\end{figure}

The overall trend we see is that models with a larger $B_a$ are slower to execute, and for models with the same $B_a$, larger $B_s$ is slower. For example, the average execution for 2-dimensional group-by queries for $B_a = 2$ is 8 seconds while it is 87 seconds for $B_a = 3$. This is not surprising and matches the results from~\autoref{fig:solve_runtimes}. We again see that $B_a = 2, B_s = 750$ is faster than $B_a = 3, B_s = 500$ even though the total number of statistics is the same.

Some of the large models had an execution time of longer than 10 minutes for some of the 3-dimensional queries, which is why their scatter point is not shown on all queries. Even though their execution time was more than 10 minutes, each individual point query still ran in under a second.

The results also show a surprising trend in that the execution time dips after the black dashed line and then starts slowing increasing again. This dashed black line demarcates queries containing the \texttt{fl\_date} attribute, the one attribute not included in any 2-dimensional statistic. Note that because the active domain of \texttt{FD} is small, the smaller domain queries happen to contain \texttt{FD}, but the size of the domain is independent of the dip in the execution time.

This unintuitive result is explained by the optimizations in~\autoref{subsec:sys_opt:poly_eval}. By using bit vectors and maps to indicate which attributes and variables are contained in a polynomial subterm, we can quickly decide if that subterm needs to be set to zero or not for query evaluation. This mainly improves evaluation of correction subterms (\ie $(\delta - 1)$ times 1D sums) because the 1D sums contain subsets of the active domain and are more likely overlap with the variables that can be set to zero. The more quickly we can decide if a subterm is zero, the faster the evaluation.

For example, take the polynomial in~\autoref{fig:eq:ex:p}. If we are evaluating a query for $A = 155 \land B = 700 \land C = 700$, then all other $\alpha$, $\beta$, and $\gamma$ variables need to be set to zero except $\alpha_{155}$, $\beta_{700}$, and $\gamma_{700}$. This means the polynomial sums on lines 3, 5, 6, and 7 can all be set to zero without having to evaluate each individual subterm on those lines because $\alpha_{155}$, $\beta_{700}$, and $\gamma_{700}$ are not contained in any of those subterms and $A$, $B$, and $C$ attributes are meant to be zero.

The \texttt{FD} attribute being one of the group-by attributes indicates that all variables representing \texttt{FD} except for the one being selected can be set to zero. However, as \texttt{FD} is not part of a statistic, there are no correction terms being multiplied by subsets of the \texttt{FD} active domain. Therefore, there are fewer chances to set a subterm to zero, meaning the overall query execution is slower.

This evaluation presents and interesting tradeoff between model size, statistic attributes, and query execution. One the one hand, a larger model will take longer to run, in general. On the other hand, more statistics allow for more zero setting optimizations in query evaluation. We also see that while \name can handle 2-dimensional group-by queries, especially if $B_a = 2$, it struggles to perform for 3-dimensional ones. However, as the strength of \name is in querying for light hitters, \name will miss fewer groups than sampling techniques which are more impacted by heavy hitters. We leave it as future work further optimize large domain group-by queries.

\subsection{Statistic Selection}
\label{subsec:results:statistic_choice}

\subsubsection{Selection Technique}
\label{subsubsec:results:statistic_selection_technique}
\begin{figure}[!t]
  \centering
  \includegraphics[width=0.3\textwidth]{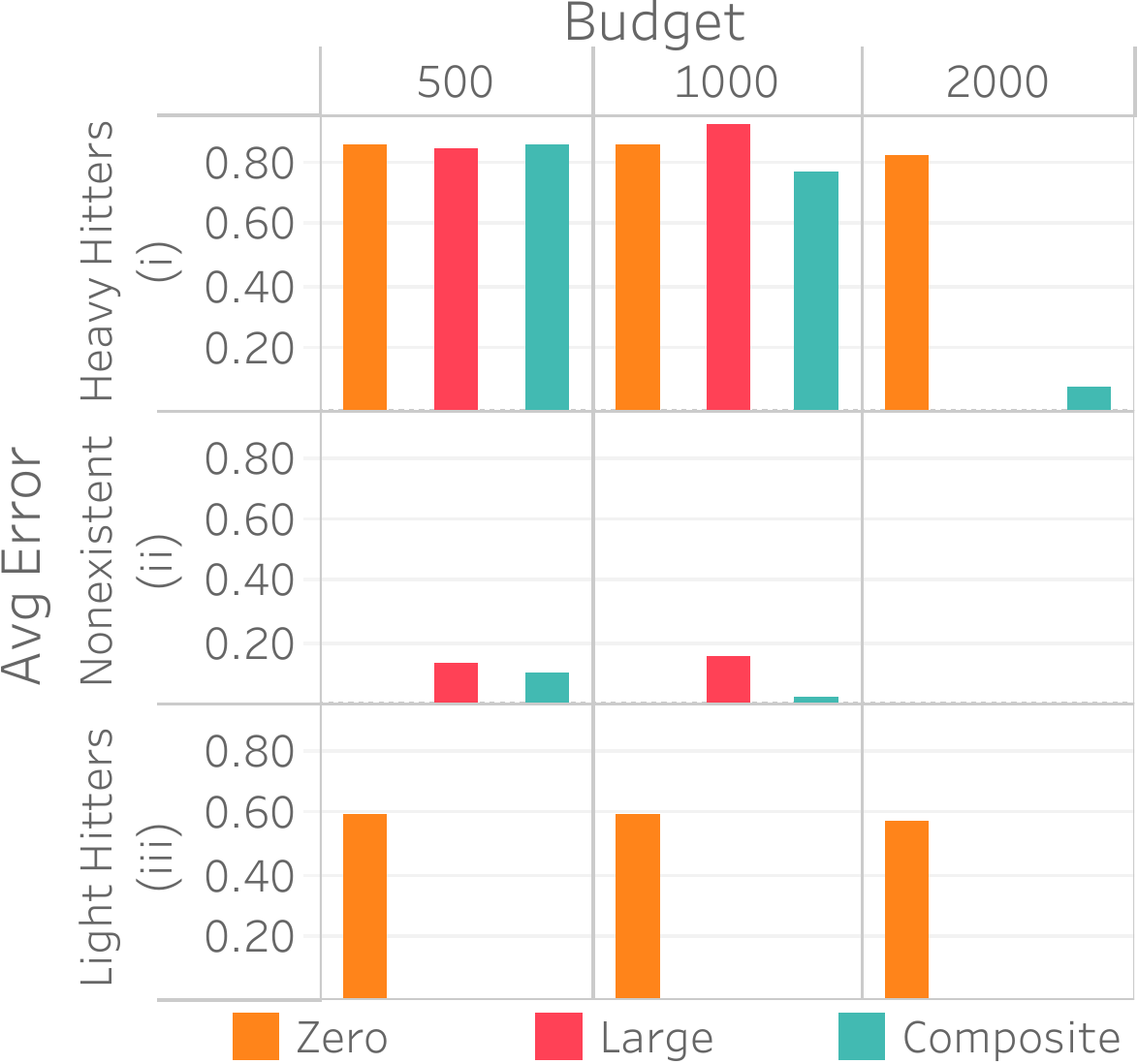}
  \caption{Illustration of query accuracy versus budget for the three different heuristics and three different selections: (i) selecting 100 heavy hitter values, (ii) selecting 200 nonexistent values, and (iii) selecting 100 light hitter values.}
  \label{fig:kd_methodcompare}
\end{figure}

We evaluate the three different statistic selection heuristics, described in~\autoref{subsec:stat_select:optimal_ranges}, on \texttt{FlightsCoarse} restricted to the attributes (date, time, distance). We gather statistics using the three different techniques and using different budgets on the attribute pair (time, distance). There are 5,022 possible 2D statistics, 1,334 of which exist in \texttt{FlightsCoarse}. We evaluate the accuracy of the resultant count of the query

\begin{small}
\begin{lstlisting}
SELECT time, dist, COUNT(*)
FROM Flights WHERE time = x AND dist = y
GROUP BY time, dist
\end{lstlisting}
\end{small}

for 100 heavy hitter (x, y) values, 100 light hitter (x, y) values, and 200 random (x, y) nonexistent/zero values. We choose 200 zero values to match the 100+100 heavy and light hitters.

\autoref{fig:kd_methodcompare} (i) plots the query accuracy versus method and budget for 100 heavy hitter values. Both {\bf LARGE} and {\bf COMPOSITE} achieve almost zero error for the larger budgets while {\bf ZERO} gets around 60 percent error no matter the budget.

(ii) plots the same for nonexistent values, and clearly {\bf ZERO} does best because it captures the zero values first. {\bf COMPOSITE}, however, gets a low error with a budget of 1,000 and outperforms {\bf LARGE}. Interestingly, {\bf LARGE} does slightly worse with a budget of 1,000 than 500. This is a result of the final value of $P$ being larger with a larger budget, and this makes our estimates slightly higher than 0.5, which we round up to 1. With a budget of 500, our estimates are slightly lower than 0.5, which we round down to 0.

Lastly, (iii) plots the same for 100 light hitter values, and while {\bf LARGE} eventually outperforms {\bf COMPOSITE}, {\bf COMPOSITE} gets similar error for all budgets. In fact, {\bf COMPOSITE} outperforms {\bf LARGE} for a budget of 1,000 because {\bf LARGE} predicts that more of the light hitter values are nonexistent than it does with a smaller budget as less weight is distributed to the light hitter values.

Unsurprisingly, we see that {\bf COMPOSITE} is the best method to use across all queries. However, the {\bf COMPOSITE} method is more complex and takes more time to compute. We do learn that if heavy hitter queries are the only relevant queries in a particular workload, it is unnecessary to use the {\bf COMPOSITE} method as {\bf LARGE} does just as well. Also, {\bf ZERO} is the best if existence queries are the most important (\eg if determining set containment). So while {\bf COMPOSITE} is best for a handling a variety of queries, it may not be necessary, depending on the query workload.

\subsubsection{Statistic Accuracy}
\label{subsubsec:results:statistic_selection_accuracy}
We now investigate, in more detail, how the different 2D statistic attribute choices and how presorting the matrix impacts query accuracy. We look at the query accuracy of the four different MaxEnt methods used in~\autoref{fig:coarsequeries} using both \textbf{2D} sort and no sort. We also include the MaxEnt method No2D for comparison although it does not use any sorting. The no sort technique maintains the natural ordering of the domains. We use \texttt{FlightsCoarse} and \texttt{FlightsFine} and the query templates from~\autoref{subsec:results:accuracy}. We run six different two-attribute selection queries over all possible pairs of the attributes covered by pair 1 through 4; \ie, origin, destination, time, and distance. We select 100 heavy hitters, 100 light hitters, and 200 null values. 

\begin{figure}[t]
    \centering
    \subfloat[{\small Heavy Hitters No Sort}]{
    \includegraphics[width=0.2\textwidth]{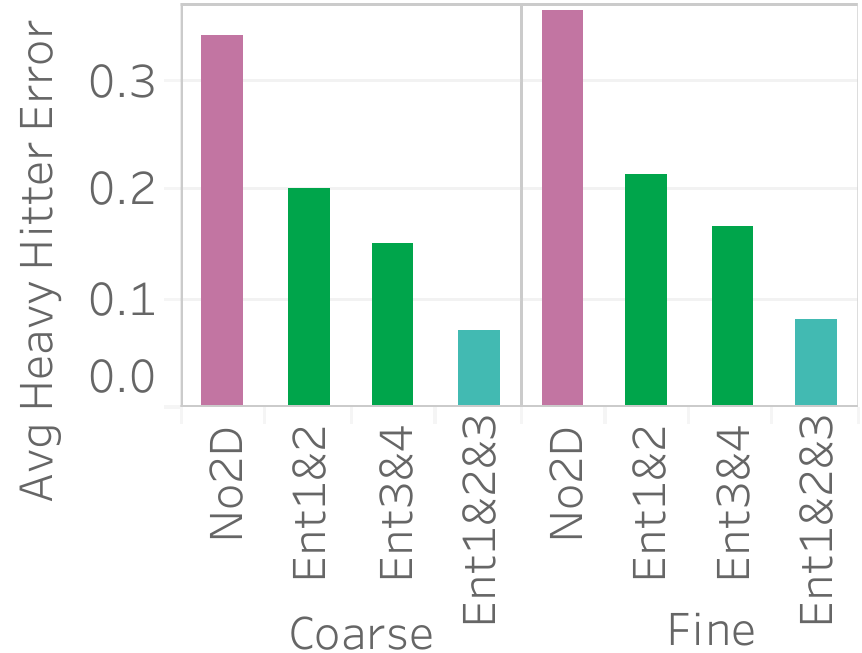}}
    \qquad
    \subfloat[{\small Heavy Hitters {\bf 2D} Sort}]{
    \includegraphics[width=0.2\textwidth]{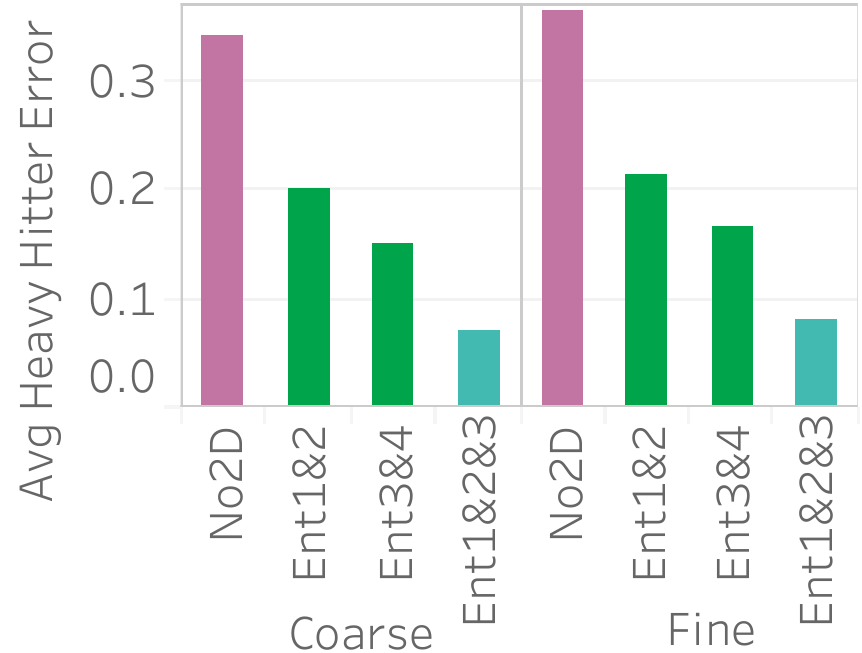}}
    \qquad
    \subfloat[{\small Light Hitters No Sort}]{
    \includegraphics[width=0.2\textwidth]{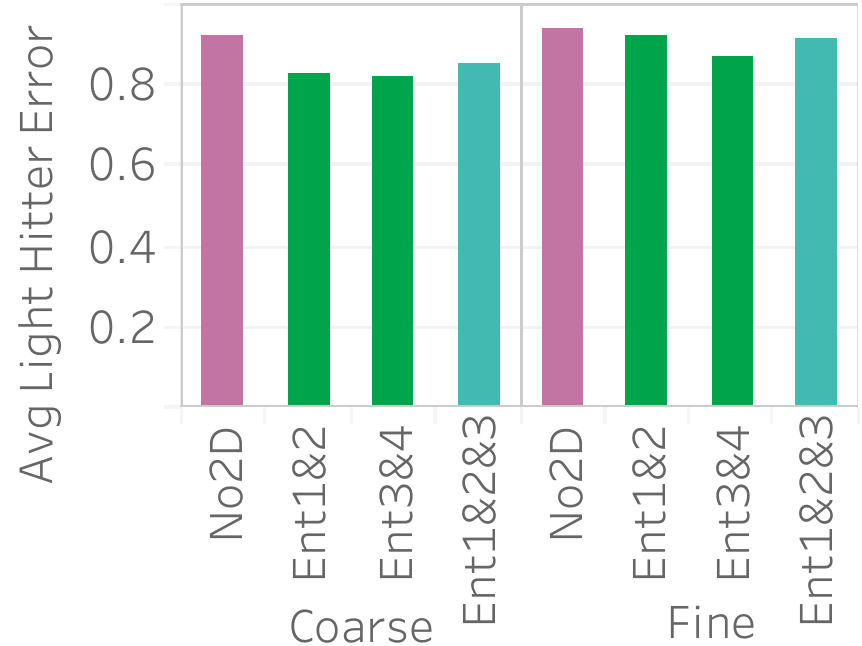}}
    \qquad
    \subfloat[{\small Light Hitters {\bf 2D} Sort}]{
    \includegraphics[width=0.2\textwidth]{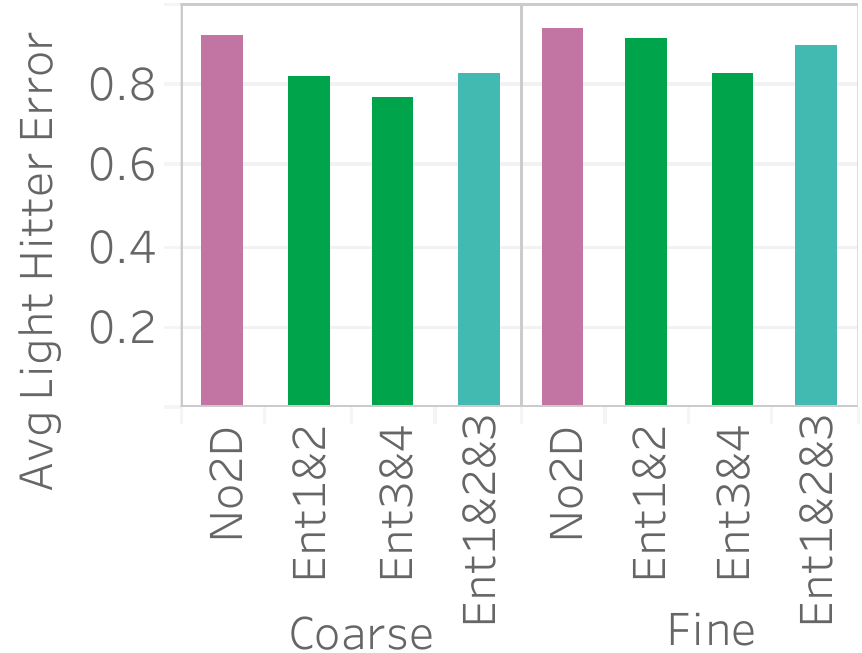}}
    \qquad
    \subfloat[{\small F Measure No Sort}]{
    \includegraphics[width=0.2\textwidth]{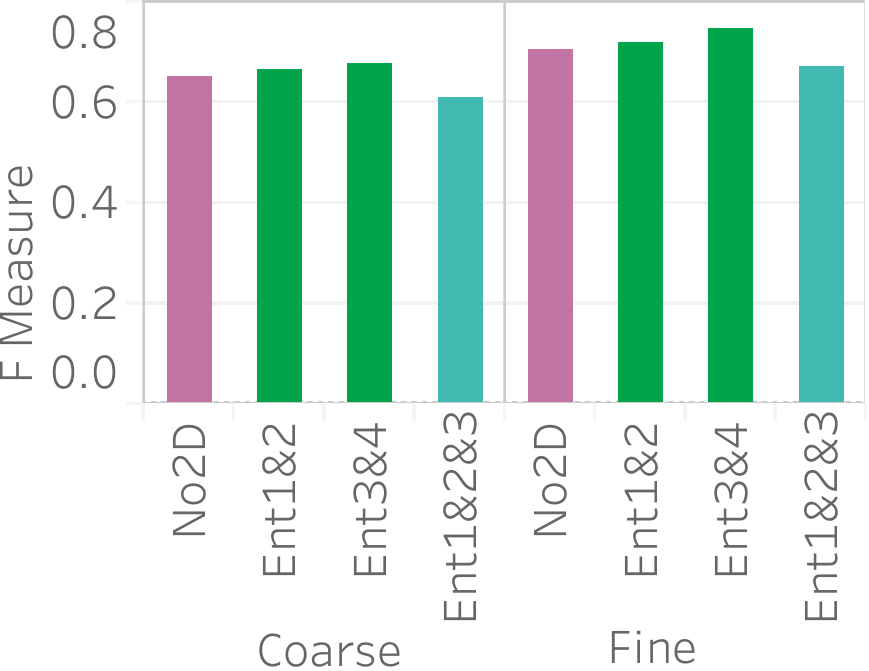}}
    \qquad
    \subfloat[{\small F Measure {\bf 2D} Sort}]{
    \includegraphics[width=0.2\textwidth]{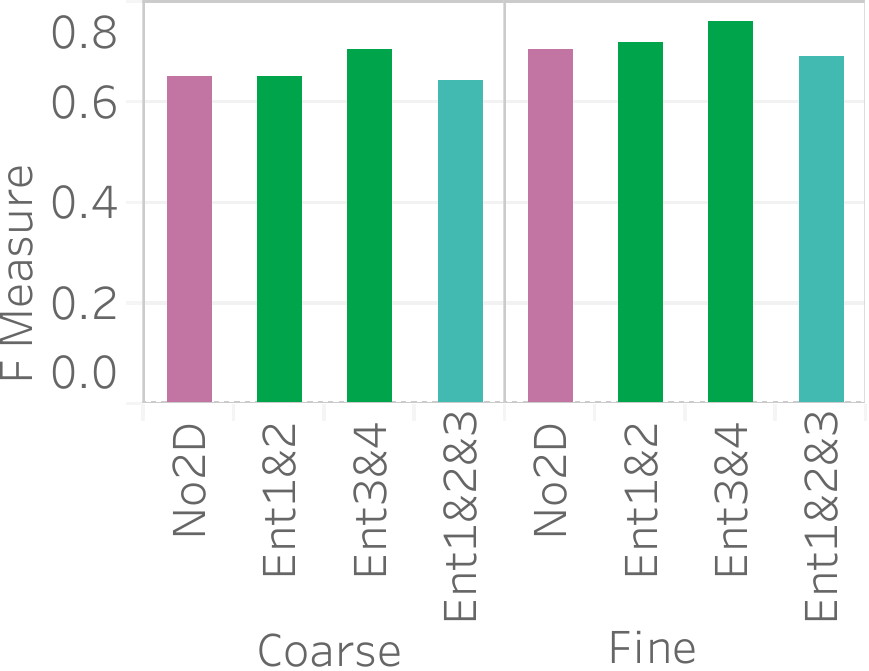}}
    \caption{(a, b) Error over 2D heavy hitter queries, (c, d) error over 2D light hitter queries, and (e, f) F measure over 2D light hitter and null value queries across different MaxEnt methods over \texttt{FlightsCoarse} and \texttt{FlightsFine} using no sort (left side) and {\bf 2D} sort (right side).}
    \label{fig:maxentcompare}
\end{figure}

~\autoref{fig:maxentcompare} shows the average error for the heavy hitters and the light hitters and shows the average F measure across the six queries. The left side shows the error when no presorting is used, and the right side shows the error when {\bf 2D} sort is used. We first consider the different attribute selections. We see that the summary with more attribute pairs but fewer buckets (more ``breadth''), Ent1\&2\&3, does best on the heavy hitters. On the other hand, for the light hitters, we see that the summary with fewer attribute pairs but more buckets (more ``depth'') and still covers the attributes, Ent3\&4, does best. Ent3\&4 doing better than Ent1\&2 implies that choosing the attribute pairs that cover the attributes yields better accuracy than choosing the most correlated pairs because even though Ent1\&2 has the most correlated attribute pairs, it does not have a statistic containing flight time. Lastly, Ent1\&2\&3 does best on the heavy hitter queries yet slightly worse on the light hitter queries because it does not have as many buckets as Ent1\&2 and Ent3\&4 and can thus not capture as many regions in the active domains with no tuples.

When considering presorting the data, we see that {\bf 2D} sort does not have any significant impact on heavy hitter accuracy, with an improvement on the order of 0.001. For light hitters, we see a slight average error improvement using {\bf 2D} sort, and for the F measure, we see a slight decrease in measure. Ent2\&3 has the largest improvement in light hitter query error because of the improvement in resorting pair 3 (distance and time) along with its large K-D tree leaf budget. Ent2\&3 has 1,500 K-D tree leaves which is enough to capture the 1,334 nonzero values of pair 3.

\begin{figure}[th!]
  \centering
  \includegraphics[width=0.5\textwidth]{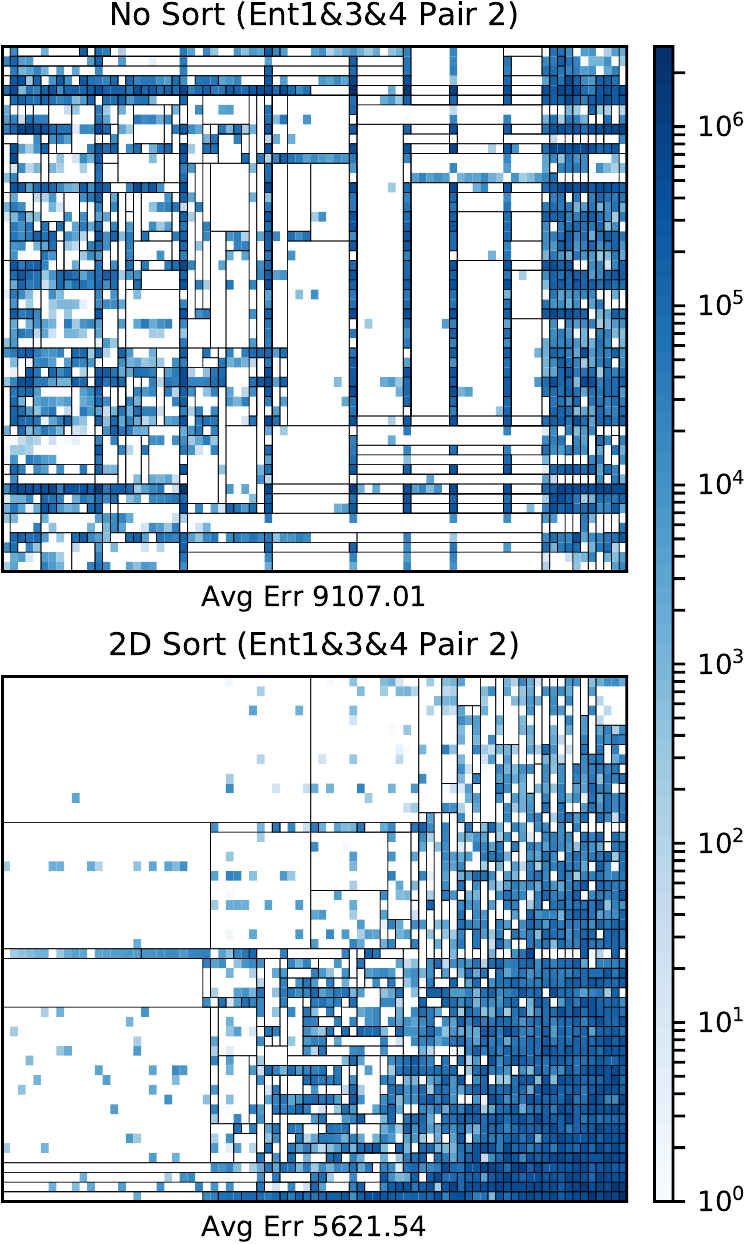}
  \caption{Plots showing the frequency heatmap of the pair 2 attributes of MaxEnt1\&3\&4 of \texttt{FlightsCoarse} that is unsorted (left) and sorted using the {\bf 2D} sort algorithm (right). The average K-D tree error is shown below.}
  \label{fig:kd_optimal_sort_example_real}
\end{figure}

The decrease in F measure and limited improvement for query accuracy is best explained by looking at the sorted and unsorted frequency heatmaps and K-D trees of the pair 2C for Ent1\&2\&3 on \texttt{FlightsCoarse}, shown in~\autoref{fig:kd_optimal_sort_example_real}. We see that the average K-D tree error does, in fact, decrease, which should indicate an improvement in accuracy and F measure. However, upon closer inspection of the K-D tree leaves, we see that the sorted tree actually has put more zeros in leaves with some small, nonzero values. This, in turn, causes MaxEnt1\&2\&3 to believe those zeros actually exist, therefore decreasing the F measure. This result implies that improving K-D tree error is not always enough to guarantee a high F measure because every zero that is in a leaf with some nonzero value will be misclassified as existing.

\section{Discussion}
\label{sec:discussion}
%%!TEX root = maxent_summarization_vldb2019.tex
\subsection{Future Work}
\label{subsec:discussion:future}
The above evaluation shows that \name is competitive with stratified sampling overall and better at distinguishing between infrequent and absent values. Importantly, unlike stratified sampling, \name's summaries permit multiple 2D statistics. Further, as \name is based on modeling the data, it does not actually need access to the original, underlying data. If a data scientist only has access to, for example, various 2-dimensional histogram queries of the entire dataset, \name would still be able to build a model of the data and answer queries. Sampling would only be able to handle queries that are directly over one of the histograms. The main limitations of \name are the dependence on the size of the active domain, correlation-based 2D statistic selection, manual bucketization, and limited query support.

To address the first problem, our future work is to investigate using standard algebraic factorization techniques on non-materializable polynomials. By further reducing the polynomial size, we will be able to handle larger domain sizes. We also will explore using statistical model techniques to more effectively decompose the attributes into 2D pairs, similar to~\cite{deshpande2001independence}. To no longer require bucketizing categorical variables (like city), we will research hierarchical polynomials. These polynomials will start with coarse buckets (like states), and build separate polynomials for buckets that require more detail. This may require the user to wait while a new polynomial is being loaded but would allow for different levels of query accuracy without sacrificing polynomial size.

Addressing our queries not reporting error is non-trivial and requires combining the errors in the statistics with the errors in the model parameters with the errors in making the uniformity assumption for the attributes not covered by a statistic. Our future work will be to understand how the error depends on the each of these facets and developing an error equation that propagates these errors through polynomial evaluation.

\subsection{Handling Joins and Data Updates}
\label{subsec:discussion:joinsandupdates}
When building our MaxEnt summary, we assume there was only a single relation being summarized and the underlying data is not updated. We now discuss two extensions of our summarization technique to address both of these assumptions. 
\subsubsection{Joins}
In~\autoref{sec:probabilistic_approach}, we introduce the MaxEnt model over a single, universal (pre-joined) relation $R$ and an instance $\mathbf{I}$ of $R$. Now, suppose the data we want to summarize consists of $r$ relations, $R_1,\ldots,R_r$, each with an associated instance $\mathbf{I}_1,\ldots,\mathbf{I}_r$. To describe our approach, we assume each $R_i$ joins with $R_{i+1}$ by an equi-join on attribute $A_{j_{i,i+1}}$; \ie $R = R_1 \bowtie_{R_1.A_{j_{1,2}} = R_2.A_{j_{1,2}}} R_2 \bowtie \ldots \bowtie R_r$. Our technique can easily be extended to work for multiple equi-join attributes, but for simplicity, we describe the approach for a single join attribute. Let $R(A_1, \ldots, A_m)$ be the global schema of $R_1 \bowtie R_2 \bowtie \ldots \bowtie R_r$ with active domains as described in~\autoref{sec:probabilistic_approach}.

The simplest approach to handle joins is to join all relations and build a summary over the universal relation. Once the summary is built, the universal relation can be removed. While this summary can now handle queries over $R$, it requires $R$ to be computed once, which can be an expensive procedure.

Our approach is to build a separate MaxEnt data summary for each instance: $\{(P_1, \{\alpha_j\}_1, \Phi_1), \ldots, (P_r, \{\alpha_j\}_r, \Phi_r)\}$. A linear query $\mathbf{q}$ with associated predicate $\pi_{\mathbf{q}}$ over $R$ is answered by iteration over the distinct values in the join attributes; \ie
\begin{align*}
\E[\inner{\mathbf{q}}{\mathbf{I}}] &= \sum_{d_1 \in D_{j_{1,2}}}\ldots\sum_{d_{r-1} \in D_{j_{r-1,r}}} \E[\inner{\mathbf{q}'}{\mathbf{I}_1}]\ldots\E[\inner{\mathbf{q}'}{\mathbf{I}_r}]\\
\textrm{s.t. } \pi_{\mathbf{q}'} &= \pi_{\mathbf{q}} \land (R_1.A_{j_{1,2}} = d_1) \land (R_2.A_{j_{1,2}} = d_1) \land \\
 &\ldots \land (R_r.A_{j_{r-1,r}} = d_{r-1}).
\end{align*}
where $\mathbf{q}'$ is the linear query associated with $\pi_{\mathbf{q}'}$ and $R_i.A_j$ denotes attribute $A_j$ in relation $R_i$. We abuse notation slightly in that $\E[\inner{\mathbf{q}'}{\mathbf{I}_i}]$ is the answer to $\mathbf{q}'$ projected on to the attributes of $R_i$ (\ie setting $\rho \equiv true$ for attributes not in $R_i$). Note that if $\mathbf{q}$ is only over a subset of relations, then the summation only needs to be over the distinct join values of the relations in the query.

While this method will return an approximate answer, it does rely on iteration over the active domain of the join attributes, which, as we shown in~\autoref{fig:groupby_time}, can be expensive for larger domain sizes. However, for each relation $R_i$, if we modify the statistic constraints associated with the 1D statistics of  the join attribute $A_{j_{i, i+1}}$, we can improve runtime by decreasing the number of iterations in the summation.

At a high level, before learning multi-dimensional statistics, for each $\ell \in J_{j_{i, i+1}}$ (\ie each 1D statistic index for attribute $A_{j_{i, i+1}}$), we replace $(\mathbf{c}_{\ell}, s_{\ell})$ by $(\mathbf{c}_{\ell}, \bar{s})$ where $\bar{s}$ is the average $s_{\ell}$ value of a group of statistics in $J_{j_{i, i+1}}$. This is similar to building a {\bf COMPOSITE} statistic over $A_{j_{i, i+1}}$ except instead of replacing each individual statistic by the composite, we are modifying the constraint for each statistic. We do do not replace the 1D statistics because we still want to be able to query at the level of an individual tuple. As our querying technique is equivalent to derivation, if we remove the fine-grained 1D statistics, there is nothing to derivate by if a query is issued over a 1D statistic.

Specifically, with $B'_s \leq B_s$ as the budget for the 1D statistic, suppose we learn that $\setof{g_k^{i,i+1} = [l^{i}_k,u^{i}_k]}{k = 1, B'_s}$ is the optimal set of boundaries for join attribute $A_{j_{i, i+1}}$ from relation $R_i$ to $R_{i+1}$. These can be learned with the K-D tree method in~\autoref{sec:stat_selection} by sorting and then repeatedly splitting on the single axis until the budget $B'_s$ is reached. We then apply the same bounds of $\setof{[l^{i}_k,u^{i}_k]}{k = 1, B'_s}$ on any multi-dimensional statistic covering $A_{j_{i, i+1}}$ in $R_{i}$ and $R_{i+1}$ and the 1D statistic covering $A_{j_{i, i+1}}$ in $R_{i+1}$ (see~\autoref{ex:join}). As this boundary is learned before multi-dimensional statistics are built, when we build a 2-dimensional statistic covering $A_{j_{i, i+1}}$, we seed the K-D tree with the $A_{j_{i, i+1}}$ axis splits of $\setof{g_k^{i,i+1}}{k = 1, B'_s}$ and repeatedly split on the other axis until reaching our budget $B_s$.

Using this transfer boundary technique and rewriting the summation, we can answer queries over joins by iterating over a single point in each range boundary rather than all individual values. The following example gives intuition as to how this boundary transfer works.
\begin{example}
\label{ex:join}
Suppose we have two relations $R(A, B)$ and $S(B, C)$ with instances $\mathbf{I}_R$ and $\mathbf{I}_S$ where each attribute has an active domain size of 3. We also have a query $\mathbf{q}$ with predicate $ \pi_{\mathbf{q}} = (A = a_1 \land C = c_1)$ over $R \bowtie S$. Let each relation build a data summary $(P, \{\alpha_j\}, \Phi)$ using a 1D composite statistic over $B$ and a 2D statistic over their two attributes with $B_s = 4$ and $B'_s = 2$. Lastly, for the composite statistic, suppose we learn that the optimal boundaries for $B$ are $[b_1,b_2]$ and $[b_3]$, meaning the statistics over $B$ will be
\begin{align*}
&(B = b_1,\ (s_{b_1} + s_{b_2})/2) \\
&(B = b_2,\ (s_{b_1} + s_{b_2})/2) \\
&(B = b_3,\ s_{b_3})
\end{align*}
where $s_{b_i}$ is the number of tuples where $B = b_i$. Note that the constraint for $b_1$ and $b_2$ is the same which implies $n\beta_1\partial P/P\partial \beta_1 = n\beta_2\partial P/P\partial \beta_2$ by~\autoref{eq:e:s}.

By our na\"{i}ve strategy, the answer to $\mathbf{q}$ is
\begin{align*}
\E[\inner{\mathbf{q}}{\mathbf{I_R \bowtie \mathbf{I}_S}}] &= \sum_{b \in [b_1, b_3]} \E[\inner{\mathbf{q}'}{\mathbf{I}_R}]\E[\inner{\mathbf{q}'}{\mathbf{I}_S}]\\
\textrm{s.t. } \pi_{\mathbf{q}'} &= (A = a_1 \land C = c_1 \land B = b).
\end{align*}
This can be rewritten in terms of polynomial derivation as
\begin{align*}
&= \frac{n_R \alpha_1 \beta_1}{P_R} \frac{\partial P_R}{\partial \alpha_1 \partial \beta_1} \frac{n_S \beta_1 \gamma_1}{P_S} \frac{\partial P_S}{\partial \beta_1 \partial \gamma_1}\\
&+ \frac{n_R \alpha_1 \beta_2}{P_R} \frac{\partial P_R}{\partial \alpha_1 \partial \beta_2} \frac{n_S \beta_2 \gamma_1}{P_S} \frac{\partial P_S}{\partial \beta_2 \partial \gamma_1}\\
&+ \frac{n_R \alpha_1 \beta_3}{P_R} \frac{\partial P_R}{\partial \alpha_1 \partial \beta_3} \frac{n_S \beta_3 \gamma_1}{P_S} \frac{\partial P_S}{\partial \beta_3 \partial \gamma_1}.
\end{align*}

Consider two cases: when the other 2-dimensional statistics have the same boundaries on $B$ and when they do not. For the first case, let the 2D statistics over $R$ and $S$ be as shown by the rectangles all in black with the red line for $S$.
\begin{center}
\begin{tikzpicture}

\node[draw=none] at (1.5, 3.8) {\large$R$};
\node[draw=none] at (5.5, 3.8) {\large$S$};

\draw[step=1cm,lightgray,thin] (0,0) grid (3,3);

\draw[step=1cm,lightgray,thin] (4,0) grid (7,3);

\draw[draw=green,thick] (4,2) -- (7,2);

\draw[draw=black,thick] (0,0) rectangle node{$\delta_3$} (2,1);
\draw[draw=black,thick] (0,1) rectangle node{$\delta_1$} (2,3);
\draw[draw=black,thick] (2,0) rectangle node{$\delta_4$} (3,1);
\draw[draw=black,thick] (2,1) rectangle node{$\delta_2$} (3,3);
\draw[draw=none] (5,0) rectangle node{$\delta_8$} (7,1);
\draw[draw=none] (5,1) rectangle node{$\delta_6$} (7,3);
\draw[draw=none] (4,0) rectangle node{$\delta_7$} (5,1);
\draw[draw=none] (4,1) rectangle node{$\delta_5$} (5,3);

\draw[draw=black,thick] (5,0) rectangle (7,3);
\draw[draw=black,thick] (4,0) rectangle (5,3);

\draw[draw=red,thick] (4,1) -- (7,1);

\foreach \x in {1,2,3} {
   \draw (\x - .5, 3) node[anchor=south] {$b_{\x}$};
   \draw (\x + 3.5, 3) node[anchor=south] {$c_{\x}$};
}
\foreach \y in {3,2,1} {
    \draw (0, 4 - \y - .5) node[anchor=east] {$a_{\y}$};
    \draw (4, 4 - \y - .5) node[anchor=east] {$b_{\y}$};
}
\end{tikzpicture}
\end{center}
Using these statistics, $\E[\inner{\mathbf{q}}{\mathbf{I_R \bowtie \mathbf{I}_S}}]$ now becomes
\begin{align*}
&= \frac{n_R \alpha_1 \beta_1}{P_R}\delta_1 \frac{n_S \beta_1 \gamma_1}{P_S} \delta_5 + \frac{n_R \alpha_1 \beta_2}{P_R} \delta_1 \frac{n_S \beta_2 \gamma_1}{P_S} \delta_5\\
&+ \frac{n_R \alpha_1 \beta_3}{P_R} \delta_2 \frac{n_S \beta_3 \gamma_1}{P_S} \delta_7 \\
&= 2\frac{n_R \alpha_1 \beta_1}{P_R}\delta_1 \frac{n_S \beta_1 \gamma_1}{P_S} \delta_5 + \frac{n_R \alpha_1 \beta_3}{P_R} \delta_2 \frac{n_S \beta_3 \gamma_1}{P_S} \delta_7
\end{align*}
where the second line follows because $n\beta_1\partial P/P\partial \beta_1 = n\beta_2\partial P/P\partial \beta_2$. Notice how instead of summing over all distinct values in $B$, we are summing over $B'_s$ values.

In we had statistics over $S$ that were the rectangles in black with the green line (the boundaries on $B$ do {\em not} match those of the composite 1D statistic), $\E[\inner{\mathbf{q}}{\mathbf{I_R \bowtie \mathbf{I}_S}}]$ would be
\begin{align*}
&= \frac{n_R \alpha_1 \beta_1}{P_R}\delta_1 \frac{n_S \beta_1 \gamma_1}{P_S} \delta_5 + \frac{n_R \alpha_1 \beta_2}{P_R} \delta_1 \frac{n_S \beta_2 \gamma_1}{P_S} \delta_6\\
&+ \frac{n_R \alpha_1 \beta_3}{P_R} \delta_2 \frac{n_S \beta_3 \gamma_1}{P_S} \delta_7
\end{align*}
which does not simplify.
\end{example}

Formally, suppose we only use the transfer boundary technique for $A_{j_{r-1,r}}$, the last join attribute; \ie we make the 1D composite statistic boundaries of $A_{j_{r-1,r}}$ the same in $R_{r-1}$ and $R_r$. Using $|g_k^{i,i+1}| = u^{i}_k - l^{i}_k + 1$ (the size of the range), we can rewrite $\E[\inner{\mathbf{q}}{\mathbf{I}}]$ as
\begin{align*}
&\E[\inner{\mathbf{q}}{\mathbf{I}}] = \sum_{d_1 \in D_{j_{1,2}}}\ldots\sum_{d_{r-1} \in D_{j_{r-1,r}}}\prod_{i=1}^{r}\E[\inner{\mathbf{q}'}{\mathbf{I}_i}] \\
&= \sum_{d_1 \in D_{j_{1,2}}}\ldots\sum_{g_k^{r-1,r}}\sum_{d_{r-1} \in g_k^{r-1,r}}\prod_{i=1}^{r}\E[\inner{\mathbf{q}'}{\mathbf{I}_i}] \\
&= \sum_{d_1 \in D_{j_{1,2}}}\ldots\sum_{g_k^{r-1,r}}\E[\inner{\mathbf{q}'_{r}}{\mathbf{I}_{r}}]\sum_{d_{r-1} \in g_k^{r-1,r}}\prod_{i=1}^{r-1}\E[\inner{\mathbf{q}'}{\mathbf{I}_i}] \\
&= \sum_{d_1 \in D_{j_{1,2}}}\ldots\sum_{g_k^{r-1,r}}\Big[|g_k^{r-1,r}|*\E[\inner{\mathbf{q}'_{r-1}}{\mathbf{I}_{r-1}}]\\
&\hspace{80pt} *\E[\inner{\mathbf{q}'_{r}}{\mathbf{I}_{r}}]*\prod_{i=1}^{r-2}\E[\inner{\mathbf{q}'}{\mathbf{I}_i}]\Big]\end{align*}
\vspace{-13pt}
\begin{align*}
\textrm{s.t. } &\pi_{\mathbf{q}'_{r-1}} = \pi \land (R_1.A_{j_{1,2}} = d_1) \land \ldots \land (R_{r}.A_{j_{r-1,r}} = true) \\
&\hspace{24pt}\land (R_{r-1}.A_{j_{r-1,r}} = D_{j_{r-1, r}}[l^{i}_k]) \\
&\pi_{\mathbf{q}'_{r}} = \pi \land (R_1.A_{j_{1,2}} = d_1) \land \ldots \land (R_{r-1}.A_{j_{r-1,r}} = true) \\
&\hspace{24pt}\land (R_r.A_{j_{r-1,r}} = D_{j_{r-1, r}}[l^{i}_k]).
\end{align*}

The step from line one to line two is replacing the sum over $d_{r-1} \in D_{j_{r-1,r}}$ to a sum over boundaries $\set{g_k^{r-1,r}}$ and a sum over distinct values in the boundary. In line three, we pull out the query over $\mathbf{I}_{r}$ because the answer for $\E[\inner{\mathbf{q}'}{\mathbf{I}_{r}}]$ is the same for each $d_{r-1} \in g_k^{r-1,r}$ as they use the same composite statistic. Therefore, we pull out the query and modify the query's predicate to be over the lower boundary value (any value in the boundary would produce equivalent results).

In line four, we perform the same trick and pull out the query over $\mathbf{I}_{r-1}$ because $\E[\inner{\mathbf{q}'}{\mathbf{I}_{r-1}}]$ will also be the same for each $d_{r-1} \in g_k^{r-1,r}$. Lastly, because $\prod_{i=1}^{r-2}\E[\inner{\mathbf{q}'}{\mathbf{I}_i}]$ is independent of $R_{r-1}$ as it does not contain $\mathbf{I}_{r-1}$, we can also pull it out of the sum. At the end, we get a summation over the value one that repeats $|g_k^{r-1,r}|$ times. This sum rewriting trick can be applied to all attributes with shared boundaries on all statistics covering the join attributes.

% We therefore get the following theorem.

% \begin{theorem} \label{th:join_rewrite} For $r$ relations $R_1, \ldots, R_r$ and associated MaxEnt summaries $\{(P_1, \{\alpha_j\}_1, \Phi_1), \ldots, (P_r, \{\alpha_j\}_r, \Phi_r)\}$ such that $R_i$ joins with $R_{i+1}$ by an equi-join on attribute $A_{j_{i,i+1}}$, if we  query $\mathbf{q}$ on $R_1 \bowtie \ldots \bowtie R_r$, if we apply the transfer boundary technique where 
% \begin{small}
% \begin{align*}
% \E[\inner{\mathbf{q}}{\mathbf{I}}] &= \sum_{d_1 \in D_{j_{1,2}}}\ldots\sum_{d_{r-1} \in D_{j_{r-1,r}}} \E[\inner{\mathbf{q}'}{\mathbf{I}_1}]\ldots\E[\inner{\mathbf{q}'}{\mathbf{I}_r}]\\
% \textrm{s.t. } \pi_{\mathbf{q}'} &= \pi_{\mathbf{q}} \land (R_1.A_{j_{1,2}} = d_1) \land (R_2.A_{j_{1,2}} = d_1) \land \\
%  &\ldots \land (R_r.A_{j_{r-1,r}} = d_{r-1}).
% \end{align*}
% \end{small}
% \end{theorem}

By performing this boundary transfer trick, we have replaced the sum for distinct values of $A_{j_{r-1,r}}$ with the sum over lower boundary points of $\set{g_k^{i,i+1}}$. We can repeat this boundary transfer for any of the dense distinct join values to make the final join algorithm efficient. Note that this technique does lose accuracy as we are no longer building building fine-grained 1D statistics over the join attributes and are using potentially suboptimal boundaries for other multi-dimensional statistics.

\subsubsection{Updates}
Another key assumption made in~\autoref{sec:probabilistic_approach} is that the data being summarized is read only and not updated. If we relax that assumption and let the underlying data change, our model needs to be updated, too. We make the assumption that data updates are represented as single tuple additions or deletions. For example, a value change can be represented as a tuple deletion followed by a tuple addition. Alg.~\ref{alg:update_model} describes our update technique.

\begin{small}
\begin{algorithm}[t]
\caption{Update Model}
\label{alg:update_model}
\begin{lstlisting}[style=myJava]
for (*@$\Delta t$@*) do
    *@$\Phi$@* = *@\textcolor{Maroon}{updateStats}@*(*@$\Phi$@*, *@$\Delta t$@*)
    if (not *@$\{\alpha_j\}$@* being updated) do
        if *@\textcolor{Maroon}{timeToRebuild}@* do
            *@$(P, \{\alpha_j\}, \Phi)$@* = *@\textcolor{Maroon}{rebuildModel}@*(*@$R$@*)
        else
            *@$\{\alpha_j\}$@* = *@\textcolor{Maroon}{updateParams}@*(*@$\Phi$@*, *@$\{\alpha_j\}$@*)
\end{lstlisting}
\end{algorithm}
\end{small}

The intuition behind our algorithm is that as updates come in, it is satisfactory to initially only update the polynomial parameters $\{\alpha_j\}$ while keeping the statistic predicates the same. However, as the data continues to be updated, the underlying correlations and relationships of the attributes may change, meaning the statistic predicates are no longer optimal. When this occurs, the entire summary needs to be rebuilt. Ideally, the rebuilding would happen overnight or when the summary is not in high demand.

Out algorithm works as follows. For each tuple update, \textcolor{Maroon}{updateStats} modifies $s_j$ for each predicate $\pi_j$ that $t$ satisfies. $s_j$ increases or decreases by one depending of it $t$ is being added or removed. It is important to realize that \textcolor{Maroon}{updateStats} does {\em not} update the predicates defining the statistics, just the predicate values.

After the statistics are updated, we check if either \textcolor{Maroon}{updateParams} or \textcolor{Maroon}{rebuildModel} is currently running or in progress. If it is, we move on to the next update, effectively batching our changes. If no update is in progress, we update or rebuild our model. \textcolor{Maroon}{updateParams} simply updates the polynomial parameters $\{\alpha_j\}$ by running Alg.~\ref{alg:solve} initialized by the last solved for parameters. By initializing our model at the last know solution, we decrease convergence time because many of the parameters are already solved for and do not need to change. In contrast to simply updating our parameters, \textcolor{Maroon}{rebuildModel} starts from scratch and regenerates the statistics, polynomial, and parameters.

The final method to discuss, \textcolor{Maroon}{timeToRebuild}, decides whether to update or rebuild the model. There are numerous different ways to defining \textcolor{Maroon}{timeToRebuild}, and we give three such possibilities below.
\begin{myitemize}
\item When the number of tuple updates reaches some predefined threshold $B$.
\item When the system does not have many users, meaning there is more compute power to rebuild the summary.
\item When attribute correlations are not accurately represented in $\Phi$. \ie when some attribute pair in $\Phi$ is uniformly distributed or when some attribute pair in $R$ is correlated but not included in $\Phi$.
\end{myitemize}

\subsection{Connection to Probabilistic Databases}
\label{subsec:discussion:probdb}
At a high level, \name learned a probability distribution of the data so that each possible instance has some associated probability of existing. Since this possible world semantics is the same semantics as used by probabilistic databases, how does \name relate to probabilistic databases~\cite{suciu2011probabilistic,dalvi2009probabilistic}?

Recall that probabilistic databases store uncertain data, and, like \name, represent the probability of a tuple as $\Pr(t) = \sum_{I \in PWD | t \in I}\Pr(I)$. The uncertainty in the data arises from the application such as data extraction, data integration, or data cleaning. Probabilistic databases are commonly stored as tuple independent (TI) databases where each tuple has an associated marginal probability and is an independent probabilistic event.

\name, on the other hand, does not store the marginal probabilities. Recall that for \name, if we let $\mathbf{q} = \mathbf{t}_{\ell}$ for some tuple $t_{\ell}$, then, from~\autoref{eq:eq}, we can calculate the expected number of times tuple $t_{\ell}$ appears in an instance. If we divide by $n$, we calculate the expected percent of times tuple $t_{\ell}$ appears in an instance. This, in general, is not the same as the marginal probability. It turns out that when $n = 1$, the two are equivalent.

\begin{lemma}
\label{lemma:marginal_prob}
For some tuple $t_{\ell}$ and associated linear query $\mathbf{t}_{\ell}$ (query with 1 in the place of tuple $t_{\ell}$ and 0 elsewhere), when $n = 1$, $\E[\inner{\mathbf{t}_{\ell}}{\mathbf{I}}]$ is the marginal probability of a tuple; \ie, $\Pr(t_{\ell} \in I)$. 
\end{lemma}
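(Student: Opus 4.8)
The plan is to show that the expected-count expression collapses to the marginal-probability expression precisely because the constraint $n=1$ forces the count $\inner{\mathbf{t}_{\ell}}{\mathbf{I}}$ to be a $0/1$ indicator. First I would pin down the two quantities being compared. By definition of expectation over the probability space (\autoref{eq:pr:n}), $\E[\inner{\mathbf{t}_{\ell}}{\mathbf{I}}] = \sum_{I \in PWD}\Pr(I)\,\inner{\mathbf{t}_{\ell}}{\mathbf{I}}$, whereas the marginal probability is $\Pr(t_{\ell}\in I) = \sum_{I \in PWD:\, t_{\ell}\in I}\Pr(I)$. In general these differ, because $\inner{\mathbf{t}_{\ell}}{\mathbf{I}}$ counts the multiplicity of $t_{\ell}$ in $I$, which may exceed $1$.

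Next I would rewrite the inner product using the basis identity recorded in \autoref{sec:theoretical_setup}: $\inner{\mathbf{t}_{\ell}}{\mathbf{t}_i}$ equals $1$ if $i=\ell$ and $0$ otherwise. Consequently, for any instance $I$,
\[
  \inner{\mathbf{t}_{\ell}}{\mathbf{I}} = \sum_{i=1}^d n^I_i \inner{\mathbf{t}_{\ell}}{\mathbf{t}_i} = n^I_{\ell},
\]
the number of copies of $t_{\ell}$ in $I$. The crucial step is then to invoke the cardinality constraint $\|\mathbf{n}^I\|_1 = n = 1$: every instance of cardinality one is a singleton $\{t_i\}$, so each coordinate $n^I_i$ lies in $\{0,1\}$ and in particular $n^I_{\ell}\in\{0,1\}$. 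Hence $\inner{\mathbf{t}_{\ell}}{\mathbf{I}} = \mathbbm{1}[t_{\ell}\in I]$ is exactly the indicator of the event $t_{\ell}\in I$.

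Finally I would conclude with the elementary fact that the expectation of a $0/1$ indicator is the probability of the underlying event:
\[
  \E[\inner{\mathbf{t}_{\ell}}{\mathbf{I}}] = \sum_{I \in PWD}\Pr(I)\,\mathbbm{1}[t_{\ell}\in I] = \sum_{I \in PWD:\, t_{\ell}\in I}\Pr(I) = \Pr(t_{\ell}\in I),
\]
which is the claimed equality. I do not expect any genuine obstacle: the whole content is the observation that $n=1$ degrades the count into an indicator, after which linearity and the definition of the marginal finish the argument. The only point that warrants a moment of care is confirming that no cardinality-one instance can contain two or more copies of the same tuple, which is immediate from $\|\mathbf{n}^I\|_1 = 1$; this is also exactly where the hypothesis $n=1$ is essential, since for $n>1$ the multiplicity $n^I_{\ell}$ can exceed $1$ and the expected count genuinely overshoots the marginal.
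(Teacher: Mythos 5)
Your proof is correct, and it takes a genuinely different route from the paper's. You argue entirely at the level of the probability space: for $n=1$ every instance is a singleton, so the count $\inner{\mathbf{t}_{\ell}}{\mathbf{I}} = n^I_{\ell}$ collapses to the $0/1$ indicator of the event $t_{\ell} \in I$, and the expectation of an indicator is the probability of that event. The paper instead works through the polynomial machinery: it forms the extended polynomial $P_{\mathbf{t}_{\ell}}$ of \autoref{eq:pq}, observes that the sum over instances avoiding $t_{\ell}$ factorizes as $\bigl(P_{\mathbf{t}_{\ell}} - \beta\,\partial P_{\mathbf{t}_{\ell}}/\partial\beta\bigr)^n$ (by the same argument that gave $Z = P^n$), and thereby derives the identity
\begin{equation*}
\Pr(t_{\ell} \in I) \;=\; 1 - \left(1 - \frac{\beta}{P_{\mathbf{t}_{\ell}}}\frac{\partial P_{\mathbf{t}_{\ell}}}{\partial \beta}\right)^{n},
\end{equation*}
valid for every $n$, before specializing to $n=1$, $\beta=1$ and invoking \autoref{eq:eq}. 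Your argument is shorter and more transparent, and it isolates exactly why the hypothesis $n=1$ is needed (for $n>1$ the multiplicity $n^I_{\ell}$ can exceed $1$, so the expected count overshoots the marginal). What the paper's longer derivation buys is the explicit general-$n$ formula above, which quantifies precisely how the marginal probability and the normalized expected count diverge when $n>1$; this supports the surrounding discussion in \autoref{subsec:discussion:probdb} about why the MaxEnt summary is not a tuple-independent probabilistic database. Both proofs are valid; yours proves exactly the stated lemma, the paper's proves slightly more and reads the lemma off as a corollary.
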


\begin{proof}
We will use the same trick as in~\autoref{subsec:query:answer} by extending the polynomial with a new variable $\beta = 1$ representing the query $\mathbf{t}_{\ell}$. Denote this extended polynomial as $P_{\mathbf{t}_{\ell}}$. Following~\autoref{eq:pq}, we get
\begin{align}
(P_{\mathbf{t}_{\ell}})^n &= \left(\sum_{i=1,d} \prod_{j=1,k} \alpha_j^{\inner{\mathbf{c}_j}{\mathbf{t}_i}} \beta^{\inner{\mathbf{t}_{\ell}}{\mathbf{t}_i}}\right)^n \nonumber \\
&= \left(\prod_{j=1,k} \alpha_j^{\inner{\mathbf{c}_j}{\mathbf{t}_{\ell}}}\beta + \sum_{\substack{i=1,d \\ i \neq \ell}} \prod_{j=1,k} \alpha_j^{\inner{\mathbf{c}_j}{\mathbf{t}_i}}\beta^{\inner{\mathbf{t}_{\ell}}{\mathbf{t}_i}}\right)^n \nonumber \\
&= \left(\left(\beta \frac{\partial P_{\mathbf{t}_{\ell}}}{\partial \beta}\right) + \left(P_{\mathbf{t}_{\ell}} - \beta \frac{\partial P_{\mathbf{t}_{\ell}}}{\partial \beta} \right) \right)^n.
\end{align}
Note that $\beta^{\inner{\mathbf{t}_{\ell}}{\mathbf{t}_i}} = 0$ when $\ell \neq i$. We use this rewriting in finding the marginal probability.
\begin{align*}
    \Pr(t_{\ell} \in I) &= 1 - \Pr(t_{\ell} \notin I)\\
    &= 1 - \frac{1}{(P_{\mathbf{t}_{\ell}})^n} \sum_{I : t_{\ell} \notin I}\prod_{j=1,k} \alpha_j^{\inner{\mathbf{c}_j}{\mathbf{I}}}\beta^{\inner{\mathbf{t}_{\ell}}{\mathbf{I}}}\\
    &= 1 - \frac{1}{(P_{\mathbf{t}_{\ell}})^n} \left(\sum_{\substack{i=1,d \\ i \neq \ell}} \prod_{j=1,k} \alpha_j^{\inner{\mathbf{c}_j}{\mathbf{t}_i}}\beta^{\inner{\mathbf{t}_{\ell}}{\mathbf{t}_i}}\right)^n \\
    &= 1 - \frac{1}{(P_{\mathbf{t}_{\ell}})^n} \left(P_{\mathbf{t}_{\ell}} - \beta \frac{\partial P_{\mathbf{t}_{\ell}}}{\partial \beta} \right)^n\\
    &= 1 - \left(1 - \frac{\beta}{P_{\mathbf{t}_{\ell}}}\frac{\partial P_{\mathbf{t}_{\ell}}}{\partial \beta} \right)^n\\
\end{align*}
The third line follows a similar proof as~\autoref{eq:p}. From~\autoref{eq:eq} and since $\beta = 1$ and  $n = 1$, we get
\begin{align*}
\Pr(t_{\ell} \in I) &= \frac{1}{P_{\mathbf{t}_{\ell}}}\frac{\partial P_{\mathbf{t}_{\ell}}}{\partial \beta} \\
&= \E[\inner{\mathbf{t}_{\ell}}{\mathbf{I}}]
\end{align*}
\end{proof}
It is important to note that although we can calculate the marginal probability, the probabilities are not independent; \ie, we do not have a TI probabilistic database.

\subsection{Connection to Graphical Models}
\label{subsec:discussion:graphmodel}
The Principle of Maximum Entropy is well studied, and it is known that the maximum entropy solution with marginal constraints (\ie, {\tt COUNT(*)} constraints) is equivalent to the maximum likelihood solution for exponential family models~\cite{murphy2006lectureongraph,teh2003improving,wainwright2008GME,jordan2003bookedit}. This can be seen by transforming~\autoref{eq:pr:i} into exponential form
\begin{align}
\label{eq:pr:i:exponential}
\Pr(I) = &\exp\left(\left(\sum_{j = 1}^{k}\theta_{j}\phi_{j}(I)\right)\right. \\
& - \left.\log\sum_{I \in PWD}\left(\exp\left(\sum_{j = 1}^{k}\theta_{j}\phi_{j}(I)\right)\right)\right).
\end{align}
Further, this exponential form is equivalent to the probability distribution defined over an exponential family Markov Network or, more generally, factor graph~\cite{yang2015graphical}. Markov networks are factor graphs where the factors representing the parameterization of the probability distribution are defined solely on cliques in the graph.

This connection implies that we can use MLE techniques to solve for the parameters $\theta_j$, and, in fact, the modified gradient descent technique we use in~\autoref{subsec:solving} is the same as the iterative proportional fitting (also called iterative scaling algorithm) used to solve the parameters in exponential family graphical models.

It is important to note that because we use the slotted possible world semantics, we are able to factorize our partition function $Z$ to a multi-linear polynomial raised to the power $n$ (see ~\autoref{eq:z:p}). This simplification allows for drastic performance benefits in terms of solving and query answering~\autoref{sec:logical_optimizations}.

\section{Related Work}
\label{sec:related_work}
%%!TEX root = maxent_summarization_vldb2019.tex
Although there has been work in the theoretical aspects of probabilistic databases~\cite{suciu2011probabilistic}, as far as we could find, there is not existing work on using a probabilistic database for data summarization. However, there has been work by Markl~\cite{markl2005consistently} on using the maximum entropy principle to estimate the selectivity of predicates. This is similar to our approach except we are allowing for multiple predicates on an attribute and are using the results to estimate the likelihood of a tuple being in the result of a query rather than the likelihood of a tuple being in the database.

Although not aimed at data summarization, the work in~\cite{bekker2015tractable} builds a probabilistic graphical model that is guaranteed to have efficient query answering for certain classes of queries, \ie a tractable model. They propose a greedy search algorithm that simultaneously learns a Markov network and its underlying sentential decision diagram which gives a tractable representation of the network. While \name also learns a Markov network, our features are count queries over instances while in~\cite{bekker2015tractable}, they only use boolean valued features.

Our work is also similar to that by Suciu and R\'{e}~\cite{re2012understanding} except their goal was to estimate the size of the result of a query rather than tuple likelihood. Their method also relied on statistics on the number of distinct values of an attribute whereas our statistics are based on the selectivity of each value of an attribute. 

Even though approximate query processing has been a major area of research in the database community for decades, there is still no widely accepted solution~\cite{li2018approximate,chaudhuri2017approximate,mozafari2015handbook}. One main AQP technique is to use precomputed samples~\cite{acharya1999aqua,li2018approximate,chaudhuri2017approximate,acharya2000congressional}. In the work by Chaudhiri et al.~\cite{chaudhuri2001robust}, they precompute samples that minimize the errors due to variance in the data for a specific set of queries they predict. The work by~\cite{babcock2003dynamic} chooses multiple samples to use in query execution but only considers single column stratifications. VerdictDB~\cite{park2018verdictdb} introduces variational subsamples as an alternative to bootstrapping and subsampling to provide efficient probabilistic guarantees. The core idea of variational subsampling is to loosen the restrictions on standard subsampling while still providing the same probabilistic guarantees. The work by~\cite{ding2016samplelus} builds a measure-biased sample for each measure dimension to handle sum queries and uniform samples to handle count queries in order to provide a priori accuracy guarantees. Depending on if the query is highly selective or not, they choose an appropriate sample. Along a similar vein is the work of~\cite{li2018bounded} which generates a unified synopses from a set of samples to answer queries approximately within a predefined error bound. The later work of BlinkDB~\cite{agarwal2013blinkdb} removes any assumptions on the queries. BlinkDB only assumes that there is a set of columns that are queried, but the values for these columns can be anything among the possible set of values. BlinkDB then computes samples for each possible value of the predicate column in an online fashion and chooses the single best sample to run when a user executes a query.

A main drawback for many systems relying on precomputed samples is that they require an existing workload to train on. Instead of using precomputed samples and needing a workload, the Quickr system in~\cite{kandula2016quickr} injects sampling operators into query plans to generate samples on the fly for ad-hoc AQP in big data clusters. By using a variety of different samplers, they are able to handle distinct value queries as well as joins.

As \name does not use any samples nor does it require any existing workload, our approach is more closely related to the non-sample based AQP techniques of~\cite{chakrabarti2001approximate}. Much like \name answers queries directly on a probability distribution, the system in~\cite{chakrabarti2001approximate} builds multi-dimensional wavelets from a dataset and answers queries directly over the wavelet coefficient domain. 

There are some AQP hybrid approaches such as the work in~\cite{thirumuruganathan2019approximate} which builds a deep learning model in order to draw better samples for AQP. They first build a collection variational autoencoders to learn the data distribution, and then, during run time, they use the model with rejection sampling to generate samples from the learned distribution. The system AQP++~\cite{peng2018aqp++} combines samples with precomputed aggregates, such as data cubes, to build a unified AQP system. For an input query, it uses samples to generate an estimate of the error between the true answer and the answer run on the closest precomputed aggregate and adds the error to the answer from the precomputed aggregate. IDEA~\cite{galakatos2017revisiting} is a system that combines random sampling with indices targeted at rare populations to answer aggregate queries for interactive data exploration. The system also leverages results from previous aggregate queries and probabilistic query rewrite rules to approximately answer new queries using the old results, if possible.

The last main AQP technique is to use data sketches (\eg count-min or KMV~\cite{bar2002counting})\cite{cormode2011synopses,li2018approximate}. Sketches have the benefit of being able to handle streaming data but are usually built to handle a limited number of queries. For example, a KMV sketch answers how many distinct values there are but doesn't answer other aggregates.

Although not built for AQP specifically, the work in~\cite{wu2007bayesian} uses Bayesian statistics and a few random samples to approximately answer what the extreme values are in the entire dataset. Using a historical query workload and Monte Carlo sampling, they generate a correction factor based on the shape of the query result and its associated error distribution. Then, they use the samples to generate a preliminary estimate and their learned correction factor to update their estimate. While this technique is probabilistic in nature, it is geared towards specifically answering extreme value queries rather than linear queries, like \name.

\cite{ortiz2018learning} and~\cite{kipf2018learned} both use deep learning networks to predict cardinalities for better query optimization. While \name can be used to estimate cardinalities for queries, we do not use deep networks and split our feature vector learning from optimization in that we decide which attributes to use for statistics before running our model learner. The deep network in~\cite{ortiz2018learning} combines learning cardinalities with learning the best representation of a subquery.

\section{Conclusion}
\label{sec:conclusion}
We presented, \name, a new approach to generate probabilistic database summaries for interactive data exploration using the Principle of Maximum Entropy. Our approach is complementary to sampling. Unlike sampling, \name's summaries strive to be independent of user queries and capture correlations between multiple different attributes at the same time. Results from our prototype implementation on two real-world datasets up to 210 GB in size demonstrate that this approach is competitive with sampling for queries over frequent items while outperforming sampling on queries over less common items.

\begin{small}
\noindent\textbf{Acknowledgments} This work is supported by NSF 1614738 and NSF 1535565. Laurel Orr is supported by the NSF Graduate Research Fellowship.
\end{small}

\end{sloppypar}
% \nocite{*}
\small
\bibliographystyle{abbrv}
\bibliography{maxent_summarization_vldb2019}
\end{document}